\newcommand{\theset}[1]{\left\{#1\right\}}
\newcommand{\magn}[1]{\left|#1\right|}
\newcommand{\expect}[2]{\mathbf{E}_{#1}\left[#2\right]}
\newcommand{\prob}[1]{\mathbf{P}\left[#1\right]}
\newtheorem{theorem}{Theorem}
\newtheorem{proposition}[theorem]{Proposition}
\newtheorem{lemma}[theorem]{Lemma}
\newtheorem{algorithm}[theorem]{Algorithm}
\DeclareMathOperator*{\argmax}{arg\,max}
\DeclareMathOperator*{\argmin}{arg\,min}
\title{Efficient Strategy Computation in Zero-Sum Asymmetric Repeated Games}
\author{Lichun Li and Jeff S. Shamma
\thanks{Lichun Li is with the Coordinated Science Lab, University of Illinois at Urbana-Champaign, Urbana 61801, USA. {\small lichunli@illinois.edu}}
\thanks{J.S. Shamma is with the School of Electrical Engineering, King Abdullah University of Science and Technology (KAUST), Thuwal 23955, Saudi Arabia. {\small jeff.shamma@kaust.edu.sa}.}
\thanks{This research was supported by
 ARO project \#W911NF-09-1-0553 and the AFOSR/MURI project \#FA9550-10-1-0573 and by funding from King Abdullah University of Science and Technology (KAUST).}
}
\begin{document}

\maketitle

\begin{abstract}
Zero-sum asymmetric games model decision making scenarios involving two competing players who have different information about the game being played. A particular case is that of nested information, where one (informed) player has superior information over the other (uninformed) player. This paper considers the case of nested information in repeated zero-sum games and studies the computation of strategies for both the informed and uninformed players for finite-horizon and discounted infinite-horizon nested information games. For finite-horizon settings, we exploit that for both players, the security strategy, and also the opponent's corresponding best response depend only on the informed player's history of actions. Using this property, we refine the sequence form, and formulate an LP computation of player strategies that is linear in the size of the uninformed player's action set. For the infinite-horizon discounted game, we construct LP formulations to compute the approximated security strategies for both players, and provide a bound on the performance difference between the approximated security strategies and the security strategies. Finally, we illustrate the results on a network interdiction game between an informed system administrator and uniformed intruder.
\end{abstract}

\section{Introduction}
There are many competitive settings in which players have asymmetric information about the underlying state of the game. Examples include cyber security problems \cite{aziz2017jammer,zheng2012dynamic}, resource competitions in air transportation systems \cite{cruciol2015trajectory, li2017robust}, national defense \cite{kamble2015games,aumann1995repeated}, economic systems \cite{ouyang2015dynamic}, power networks \cite{st2004bayesian} and so on. In these systems, because of the non-cooperation between players, a player usually holds private information that is not shared with the other players, which causes the information asymmetry in games.

This paper focuses on two player zero-sum repeated games with asymmetric information, where one (informed) player knows the underlying state of the game while the other (uninformed) player only knows the prior distribution of the state. At the beginning of the game, the state of the game is initially selected, once and for all, at random according to the prior distribution. Repeated interactions means that players play over stages and can make observations about past play. Here we assume the case of full monitoring, i.e. a player observe the actions taken by both players, and perfect recall, i.e. a player records the history actions of both players. The one-stage payoff of the informed player, i.e. the loss of the informed player, is decided by the state of the game and the actions of both players. Notice that the payoff is not directly observable to both players, but the informed player can compute the payoff since he/she knows the state of the game and the actions of both players. The fact that the one-stage payoff is unavailable to the uninformed player happens in practice. For example, when jamming wireless sensor networks, the attacker may observe which channel the network uses (action of the informed player), but cannot measure the throughput in the channel (payoff of the informed player) \cite{aziz2017jammer,vadori2015jamming}.

We first study finite stage (N-stage) repeated asymmetric games. These games were studied in \cite{shimkin1996asymptotically} with a random payoff. Besides actions of players, the payoff is also a common information in \cite{shimkin1996asymptotically}. Based on these common information, the authors designed an asymptotically optimal strategy for the uninformed player such that the worst case relative loss of the uninformed player is either finite or increasing with rate $log(N)$ as $N$ goes to infinity. This paper adopts the game model in \cite{aumann1995repeated,zamir1992repeated} with fixed payoff function that is not available to the uninformed player. We are interested in computing security strategies of both players, i.e. the Nash equilibrium of the finite stage game. Notice that with the assumption of mutually absolute continuity in \cite{shimkin1996asymptotically}, the game model in \cite{shimkin1996asymptotically} is neither more nor less general than the one in this paper. Since the finite stage repeated asymmetric game has finite state set, finite action sets of both players, and finite stage, it can be expressed as a finite game tree. With perfect recall, the security strategies of both players can be computed by solving a linear program whose size is linear with respect to the size of the game tree, i.e. polynomial in the size of the action sets of both players, linear in the size of the state set, and exponential in the time horizon \cite{koller1996efficient}. Meanwhile, the other prior work showed that both players had a security strategy that is independent of the history actions of the uninformed player \cite{aumann1995repeated,de1996repeated,sorin2002first}. This paper shows that if a player uses a strategy that's independent of the uninformed player's history actions, then the opponent's best response is independent of the uninformed player's history actions, too. It implies that a player does not need to record the history actions of the uninformed player to compute the security strategy, which is a violation of the perfect recall. The challenge is how to develop a linear program without perfect recall as required by previous work \cite{koller1996efficient} to compute the security strategies for both players. For the informed player, to conquer the challenge, we started from the recursive formula of the game value, and develop an LP to compute the security strategy. For the uninformed player, to conquer the challenge, we cut the branches expanded by the uninformed player in the game tree, introduce the expected payoff at the leaf nodes, and construct an LP to compute the security strategies. Moreover, the sizes of the linear programs developed in this paper are only linear in the size of the action set of the uninformed player as compared with prior work that had polynomial dependence.

We then extend the time horizon to infinity, and study discounted repeated asymmetric games. Compared with finite stage games, discounted repeated asymmetric games has two main challenges. The first challenge is that it is necessary to find out fixed sized sufficient statistics for each player, since history based strategy requires a great amount of memory to record the history action as time horizon get long. The second challenge is that computing the game value and the security strategies of both players is non-convex \cite{gilpin2008solving,sandholm2010state}. We need to find an approximated security strategy for each player with guaranteed performance.

For the informed player, the first challenge has been addressed in the previous work \cite{aumann1995repeated}, and the sufficient statistics of the informed player is the posterior probability of the state of the game conditioned on the history action of the informed player, which is also called the belief. For the second challenge, we first use the game value of a finite stage ($N$-stage) discounted game, a truncated version of the infinite stage discounted game, to approximate the game value of the infinite stage discounted game, and then use an LP similar to the one in finite horizon games to compute an approximated security strategy based on the approximated game value. Such an approximated game value is shown to converge to the true game value exponentially fast, and the difference between the game value and the worst case payoff of the approximated security strategy has a finite upper bound which converges to $0$ exponentially fast with respect to $N$. The technique used in this section is adopted from our previous work \cite{li2015efficient} which focused on the informed player's approximated security strategy in discounted stochastic asymmetric games. This paper further studies the strategies of the uninformed player.

For the uninformed player, the belief can not be the uninformed player's sufficient statistics because of its dependency on the informed player's strategy. To figure out the sufficient statistics of the uninformed player, \cite{de1996repeated,sorin2002first} studied the dual game of the infinite stage discounted repeated game (primal game), and showed that the security strategy of the uninformed player in the dual game with a special initial regret (a real vector of the same size as the belief) is also a security strategy of the uninformed player in the primal game. Moreover, \cite{de1996repeated,sorin2002first} also showed that in the dual game, the sufficient statistics of the uninformed player is anti-discounted expected vector payoff realized so far conditioned on the history action of the informed player and the possible state of the game, which is called anti-discounted regret in this paper. Based on the results in the previous work, we first transform the primal game into a dual game with a special initial regret, and then compute an anti-discounted regret based security strategy for the uninformed player. The problem is what the special initial regret is. This paper shows that the special initial regret is the difference between $\mathbf{0}$ and the worst case payoff of the uninformed player's security strategy given every possible state of the game in the primal infinite stage discounted game. Computing the special initial regret is non-convex. Therefore, we use the worst case payoff of the uninformed player's security strategy given every possible state of the game in finite stage discounted game to approximate the one in the infinite stage discounted game, and construct a linear program to compute it. Given the approximated initial regret, computing the uninformed player's security strategy in a dual game is still non-convex. We first use the game value of a finite stage ($N$-stage) dual discounted game, a truncated version of the infinite horizon dual discounted game, to approximate the game value of the infinite stage dual discounted game, then construct an approximated security strategy of the uninformed player in the dual discounted game based on the approximated game value, and finally develop an LP to compute the approximated security strategy for the uninformed player in the infinite stage dual discounted game. Using the same technique as for the informed player and the relations between the game values of the primal and dual games, we show that in the infinite stage primal discounted game, the difference between the worst case performance of the uninformed player's approximated security strategy and the game value is always finite, and converges to $0$ exponentially fast with respect to $N$.

This paper extended the results in our previous work \cite{li2016finite} from finite stage to infinite stage. For infinite stage games, we adopt the technique in our previous work \cite{li2015efficient} which studied the approximated security strategy of the informed player in discounted stochastic game, and extends the results from the informed player to the uninformed player. The remainder of this paper is organized as follows. Section II presents the main results for finite stage games. Section III discusses discounted infinite horizon games. Section IV illustrates the results on a network interdiction game. Finally, Section V presents some future work.

\section{Finite Stage Asymmetric Repeated Games}
\label{sec:finitehorizon}

\noindent{\textbf{Notation.}} Let $\mathbb{R}^n$ and $\mathbb{Z}^+$ denote $n$-dimensional real space and positive integers, respectively. Given a finite set $K$, its cardinality is denoted by $\magn{K}$, and $\Delta(K)$ is the set of probability distributions over $K$. The vectors $\mathbf{1}$ and $\mathbf{0}$ are appropriately dimensioned column vectors with all elements being $1$ and $0$, respectively. For $v(0),v(1),v(2),...$ a sequence of real numbers, we adopt the convention that $\sum_{t=1}^0 v(t)=0$, and $\prod_{t=1}^0 v(t)=1$. The supreme norm of a function $f: D\rightarrow \mathbb{R}$ is defined as $\|f\|_{\sup}=\sup_{x\in D}|f(x)|$, where $D$ is a non-empty set.

\subsection{Setup}
A two-player zero-sum asymmetric repeated game is specified by a five-tuple $(K,A,B,M,p_0)$, where
\begin{itemize}
  \item $K$ is a non-empty finite set, called the state set, the elements of which are called states.
  \item $A$ and $B$ are non-empty finite sets, called player 1 and 2's action sets, respectively.
  \item $M: K\times A \times B \rightarrow \mathbb{R}$ is the one-stage payoff function of player 1, or the one-stage penalty function of player 2. $M^k$ indicates the payoff matrix given state $k\in K$. The matrix element $M^k_{a,b}$, also denoted as $M(k,a,b)$, is the payoff given state $k\in K$, player 1's action $a\in A$, and player 2's action $b\in B$. The notation $M^k_{a,:}$ indicates the row vector payoff given state $k$ and player 1's action $a\in A$.
  \item $p_0\in \Delta(K)$ is the initial probability on $K$. We assume that $p_0^k>0$ for any $k\in K$.
\end{itemize}

A $N$-stage asymmetric repeated game is played as follows. Let $a_t,\ b_t$ denote the actions of player 1 and player 2 for stages $t\in \theset{1,2,...,N}$, respectively. At stage $t=1$, a state $k$ is chosen once and for all according to the probability distribution $p_0$, and communicated to player 1 only. Player 1 and 2 are called the informed and the uninformed player, respectively. Each player chooses his action independently, and the pair $(a_1,b_1)$ is observed by both players.  At stage $t=2$, both players again simultaneously choose their actions, and these are observable by both players. The payoff of player 1 and player 2 at stage $t$ is $M^k_{a_t,b_t}$ and $-M^k_{a_t,b_t}$, respectively. Since the sum of two players' payoffs is zero, this is a zero-sum game. The process is repeated for the remaining $t = 2,3,...,N$. These payoffs are not observed by player 2.

More formally, we will use the concept of behavior strategies. For any stage $t=1,\ldots,N$, the histories of player 1 and 2's actions prior to time $t$ are denoted by $h_t^A=\{a_1,\ldots,a_{t-1}\}$ and $h_t^B=\{b_1,\ldots,b_{t-1}\}$, respectively. For $t=1$, the null histories are denoted $h_1^A = h_1^B = \emptyset$. The corresponding \textit{set} of possible action sequences are denoted by $H^A_{t}=A^{t-1}$ and $H^B_t=B^{t-1}$.  A behavior strategy for player 1 is a collection of mappings $\sigma=(\sigma_t)_{t=1}^N$, where each $\sigma_t$ is a map from $K\times H^A_t \times H^B_t$ to $\Delta(A)$. Similarly, but taking into account the lack of information on the state $k\in K$, a behavior strategy for player 2 is a collection of mappings $\tau=(\tau_t)_{t=1}^N$, where $\tau_t$ is a map from $H^A_t \times H^B_t$ to $\Delta(B)$. Denote by $\Sigma$ and $\mathcal{T}$ the sets of behavior strategies of player 1 and 2, respectively.  The values $\sigma_t^a(k,h_t^A,h_t^B)$ for $a\in A$ and $\tau_t^b(h_t^A,h_t^B)$ for $b\in B$ denote the probabilities of playing $a$ and $b$ at stage $t$, respectively, given the histories $h_t^A\in H^A_t$ and $h_t^B\in H_t^B$, and realized state, $k\in K$.

Play proceeds as follows. As previously stated, at stage $t=1$, a state $k$ is chosen once and for all according to the probability distribution $p_0$. The action $a_1$ is a randomized outcome according to the behavior strategy distribution $\sigma_1(k,\emptyset,\emptyset)\in \Delta(A)$, and the action $b_1$ is a randomized outcome according to the behavior strategy distribution $\tau_1(\emptyset,\emptyset)$. At stage $t = 2,...,N$, the action $a_t$ is a randomized outcome according to the behavior strategy distribution $\sigma_t(k,h_t^A,h_t^B)\in \Delta(A)$, and the action $b_t$ is a randomized outcome according to the behavior strategy distribution $\tau_1(h_t^A,h_t^B)$, where we assume that these outcomes are conditionally independent given $h_t^A$ and $h_t^B$.

A triple $(p_0,\sigma,\tau)$ induces a probability distribution $P_{p_0,\sigma,\tau}$ on the set $\Omega=K\times (A\times B)^N$ of plays. Let
$\expect{p_0,\sigma,\tau}{\cdot}$ denote the corresponding expectation. The payoff with initial probability $p_0$ and strategies $\sigma$ and $\tau$ of the $N$-stage asymmetric information repeated game is defined as
\begin{align*}
  \gamma_N(p_0,\sigma,\tau)= \expect{p_0,\sigma,\tau}{\sum_{t=1}^{N}M(k,a_t,b_t)}.
\end{align*}

The $N$-stage game $\Gamma_T(p_0)$ is defined as the two-player zero-sum asymmetric repeated game equipped with initial distribution $p_0$, strategy spaces $\Sigma$ and $\mathcal{T}$, and payoff function $\gamma_N(p_0,\sigma,\tau)$. In game $\Gamma_N(p_0)$, the informed player seeks to \textit{maximize} the payoff $\gamma_N(p_0,\sigma,\tau)$, while the uninformed player seeks to \textit{minimize} it.

For the $N$-stage game $\Gamma_N(p_0)$, the security level $\underline{V}_N(p_0)$ of the informed player is defined as
$$\underline{V}_N(p_0)=\max_{\sigma\in\Sigma}\min_{\tau\in\mathcal{T}}\gamma_N(p_0,\sigma,\tau),$$
and the strategy $\sigma^*\in \Sigma$ which achieves the security level is called the security strategy of the informed player. Similarly, the security level $\overline{V}_N(p_0)$ of the uninformed player is defined as
$$\overline{V}_N(p_0)=\min_{\tau\in\mathcal{T}}\max_{\sigma\in\Sigma}\gamma_N(p_0,\sigma,\tau),$$
and the strategy $\tau^*\in \mathcal{T}$ which achieves the security level is called the security strategy of the uninformed player. When $\underline{V}_N(p_0)=\overline{V}_N(p_0)$, we say game $\Gamma_N(p_0)$ has a value, i.e. there exists a Nash equilibirum. Since the game $\Gamma_N(p_0)$ is a finite game, the game value always exists, and is denoted by $V_N(p_0)$ \cite{sorin2002first}.


\subsection{$H^B$ independent strategies}
\label{subsec: H B independent strategy}
A fundamental difference between a repeated asymmetric game and a one-shot asymmetric game is that in the repeated asymmetric game, the uninformed player can learn the system state from the informed player's actions. Indeed, the uninformed player's belief about the system state plays an important role for both players to make decisions \cite{aumann1995repeated}. Since only the informed player's actions are directly related to the system state, the uninformed player's history action sequence doesn't provide extra information about the system state given the informed player's history action sequence. Therefore, it is not surprised to see that given informed player's history action sequence, both players' security strategies are independent of the uninformed player's history action sequence \cite{sorin2002first}. Let's define an $H^B$ independent behavior strategy of player 1 as a collection of mappings $\bar{\sigma}=(\bar{\sigma}_t)_{t=1}^N$ where each $\bar{\sigma}_t$ is a map from $K\times H_t^A$ to $\Delta(A)$. Similarly, an $H^B$ independent behavior strategy of player 2 is a collection of mappings $\bar{\tau}=(\bar{\tau}_t)_{t=1}^N$ where $\bar{\tau}_t$ is a map from $H_t^A$ to $\Delta(B)$. Denote by $\bar{\Sigma}$ and $\bar{\mathcal{T}}$ the sets of $H^B$ independent behavior strategies of player 1 and 2. Clearly, $\bar{\Sigma}$ and $\bar{\mathcal{T}}$ are subsets of $\Sigma$ and $\mathcal{T}$, respectively.

\begin{proposition}[\cite{zamir1971relation,de1996repeated,sorin2002first}]
\label{prop:noP2}
Consider a two-player zero-sum $N$-stage asymmetric repeated game $\Gamma_N(p_0)$. Each player has a security strategy in game $\Gamma_N(p_0)$ that
is independent of player 2's history action sequence, i.e.
\begin{align*}
  \max_{\sigma\in \Sigma}\min_{\tau\in\mathcal{T}} \gamma_N(p_0,\sigma,\tau)=&\max_{\bar{\sigma}\in \Sigma}\min_{\tau\in\mathcal{T}} \gamma(p_0,\bar{\sigma},\tau)\\
  \min_{\tau\in\mathcal{T}}\max_{\sigma\in \Sigma} \gamma_N(p_0,\sigma,\tau)=&\min_{\bar{\tau}\in\bar{\mathcal{T}}} \max_{\sigma\in \Sigma} \gamma_N(p_0,\sigma,\bar{\tau}).
\end{align*}
\end{proposition}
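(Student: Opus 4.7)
The plan is to prove both equalities by induction on the horizon $N$, exploiting the dynamic-programming structure of $\Gamma_N(p_0)$. The easy inclusions $\bar\Sigma\subseteq\Sigma$ and $\bar{\mathcal{T}}\subseteq\mathcal{T}$ immediately yield $\max_{\bar\sigma}\min_\tau\gamma_N\le\max_\sigma\min_\tau\gamma_N$ and $\min_{\bar\tau}\max_\sigma\gamma_N\ge\min_\tau\max_\sigma\gamma_N$, so only the reverse inequalities require work. The base case $N=1$ is trivial because $H_1^A=H_1^B=\{\emptyset\}$ and every stage-$1$ strategy is vacuously $H^B$-independent.

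For the inductive step, assume the proposition (and thus a well-defined value $V_{N-1}$) for all $(N-1)$-stage games with any prior in $\Delta(K)$. The key is to establish the one-stage Bellman recursion
\begin{align*}
V_N(p_0)=\max_{x:K\to\Delta(A)}\min_{y\in\Delta(B)}\sum_{k}p_0^k\sum_{a,b}x^a(k)\,y^b\,\bigl[M^k_{a,b}+V_{N-1}(p_1(a,x))\bigr],
\end{align*}
where $p_1(a,x)^k=p_0^k x^a(k)/\sum_{k'}p_0^{k'}x^a(k')$ is the Bayesian posterior on $k$ conditional on $a_1=a$. Two observations make this recursion go through: (i) since $\tau_1$ never conditions on $k$, the posterior on $k$ after stage~$1$ depends only on $a_1$, not on $b_1$; and (ii) the inductive hypothesis supplies $H^B$-independent security strategies for the continuation game, whose value $V_{N-1}(p_1(a_1,x))$ is itself independent of $b_1$. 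The right-hand side is then a finite matrix game in $(x,y)$, and the minimax theorem delivers a saddle pair $(x^*,y^*)$. An $H^B$-independent security strategy for the informed player is assembled by playing $x^*$ at stage~$1$ and, from stage~$2$ onwards, using the inductive $H^B$-independent security strategy for $\Gamma_{N-1}(p_1(a_1,x^*))$; an analogous construction from $y^*$ works for the uninformed player, noting that $y^*$ is already independent of $k$.

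The main obstacle is justifying that these assembled strategies are optimal against \emph{arbitrary} opponents in $\Sigma$ and $\mathcal{T}$, not merely against $H^B$-independent ones. The argument hinges on the fact that $\tau_t$ is independent of $k$ for every $\tau\in\mathcal{T}$, so Bayes' rule forces the posterior on $k$ at any stage to be measurable with respect to $h_t^A$ alone and renders $h_t^B$ conditionally independent of $k$ given $h_t^A$ and the strategies. Consequently, for any opponent $\tau$, the law of total expectation decomposes $\gamma_N(p_0,\bar\sigma^*,\tau)$ into a stage-$1$ contribution plus a continuation payoff that, conditional on $a_1$, is bounded below by $V_{N-1}(p_1(a_1,x^*))$ through the inductive best-response guarantee; averaging over $a_1$ recovers the right-hand side of the Bellman recursion and hence $V_N(p_0)$. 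A symmetric argument handles the uninformed player. Carefully verifying this decomposition and the accompanying conditional-independence claims is the technical core of the proof.
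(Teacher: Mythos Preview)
The paper does not supply its own proof of this proposition; it is quoted from \cite{zamir1971relation,de1996repeated,sorin2002first}. So there is no in-paper argument to compare against, and I will assess your proposal on its own merits.

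Your inductive scheme is the standard one and is essentially correct for the \emph{informed} player. The crucial point you identify---that the posterior $p_1(a_1,x)$ after stage~$1$ depends only on $a_1$ and player~1's own mixing $x$, never on $b_1$---is exactly what makes the continuation guarantee $V_{N-1}(p_1(a_1,x^*))$ available to player~1 against every $\tau$, since player~1 controls $x^*$.

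The gap is in the last sentence: the ``symmetric argument'' for the uninformed player does \emph{not} go through as stated. If player~2 commits to a continuation $\bar\tau^{(a_1)}$ that is optimal for the posterior $p_1(a_1,x^*)$, then the inductive guarantee says only that $\max_{\sigma}\gamma_{N-1}(q,\sigma,\bar\tau^{(a_1)})=q^T\nu_{a_1}$ with $p_1(a_1,x^*)^T\nu_{a_1}=V_{N-1}(p_1(a_1,x^*))$. When player~1 deviates at stage~1 to some $x\neq x^*$, the realized posterior is $p_1(a_1,x)$, and the continuation bound becomes $p_1(a_1,x)^T\nu_{a_1}\ge V_{N-1}(p_1(a_1,x))$, the inequality pointing the \emph{wrong} way to combine with the saddle-point inequality $\Psi(x,y^*)\le V_N(p_0)$. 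The asymmetry is structural: the posterior is a function of player~1's strategy, so player~1 can fix it but player~2 cannot.

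The cited references handle player~2 by genuinely different devices: De~Meyer's dual game, vector payoffs and approachability, or (most economically, once you have player~1's part) by combining it with the easy best-response reduction of Proposition~\ref{prop:noP2also} and applying the minimax theorem on the restricted product space $\bar\Sigma\times\bar{\mathcal{T}}$:
\[
\min_{\bar\tau}\max_{\sigma}\gamma_N=\min_{\bar\tau}\max_{\bar\sigma}\gamma_N=\max_{\bar\sigma}\min_{\bar\tau}\gamma_N=\max_{\bar\sigma}\min_{\tau}\gamma_N=V_N(p_0).
\]
Any of these routes closes the argument; the bare ``symmetric'' induction does not.
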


If one player's behavior strategy is independent of the uninformed player's history action sequence, then the other player's best response to the $H^B$ independent strategy is independent of the uninformed player's history action sequence, too.
\begin{proposition}
\label{prop:noP2also}
Consider a two-player zero-sum $N$-stage asymmetric repeated game $\Gamma_N(p_0)$. For any $\bar{\sigma}\in \bar{\Sigma}$, and any $\bar{\tau}\in \bar{\mathcal{T}}$,
\begin{align}
  \min_{\tau\in \mathcal{T}}\gamma_N(p_0,\bar{\sigma},\tau)=&\min_{\bar{\tau}\in \bar{\mathcal{T}}} \gamma_N(p_0,\bar{\sigma},\bar{\tau}) \label{eq: noP2-1}\\
  \max_{\sigma\in \Sigma}\gamma_N(p_0,\sigma,\bar{\tau})=&\max_{\bar{\sigma}\in\bar{\Sigma}}\gamma_N(p_0,\bar{\sigma},\bar{\tau}). \label{eq: noP2-2}
\end{align}
\end{proposition}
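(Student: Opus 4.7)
The plan is to prove each equality by combining the trivial inclusion ($\bar{\mathcal{T}}\subseteq\mathcal{T}$, resp.\ $\bar{\Sigma}\subseteq\Sigma$) with an explicit construction that, from any fully history-dependent best response, extracts an $H^B$-independent strategy achieving the \emph{same} payoff. Concretely, for \eqref{eq: noP2-1}, the inclusion $\bar{\mathcal{T}}\subseteq\mathcal{T}$ gives $\min_{\tau\in\mathcal{T}}\gamma_N\leq\min_{\bar\tau\in\bar{\mathcal{T}}}\gamma_N$, and for the reverse direction I will show that every $\tau\in\mathcal{T}$ admits a $\bar\tau\in\bar{\mathcal{T}}$ with $\gamma_N(p_0,\bar\sigma,\bar\tau)=\gamma_N(p_0,\bar\sigma,\tau)$; the argument for \eqref{eq: noP2-2} will be symmetric.

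For \eqref{eq: noP2-1}, I define $\bar\tau_t(h_t^A)(b)$ as the conditional probability that $\tau$ plays $b$ at stage $t$ given $h_t^A$, averaged over $h_t^B$ under $P_{p_0,\bar\sigma,\tau}$ (and defined arbitrarily on histories $h_t^A$ of probability zero). I would then prove by induction on $t$ that
\begin{align*}
P_{p_0,\bar\sigma,\tau}(k,h_t^A,a_t,b_t)=P_{p_0,\bar\sigma,\bar\tau}(k,h_t^A,a_t,b_t).
\end{align*}
This forces $\expect{p_0,\bar\sigma,\tau}{M(k,a_t,b_t)}=\expect{p_0,\bar\sigma,\bar\tau}{M(k,a_t,b_t)}$ at every stage, and summing over $t$ yields equality of $\gamma_N$. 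Two structural facts drive the induction: (i)~because $\bar\sigma$ ignores $h_t^B$, the marginal $P_{p_0,\bar\sigma,\tau}(k,h_t^A)=p_0(k)\prod_{s=1}^{t-1}\bar\sigma_s(k,h_s^A)(a_s)$ does not depend on $\tau$ at all; and (ii)~under $(\bar\sigma,\tau)$, conditional on $h_t^A$, the sequence $h_t^B$ is independent of $k$ with conditional distribution $\prod_{s=1}^{t-1}\tau_s(h_s^A,h_s^B)(b_s)$, so averaging $\tau_t$ over this conditional distribution really does yield a $k$-free kernel $\bar\tau_t(h_t^A)$.

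For \eqref{eq: noP2-2}, I analogously set $\bar\sigma_t(k,h_t^A)(a)=P_{p_0,\sigma,\bar\tau}(a_t=a\mid k,h_t^A)$ and run the parallel induction. Here the conditional-independence structure is even simpler on the $\bar\tau$-side: $\bar\tau$'s $H^B$-independence gives $P(b_t\mid k,h_t^A,h_t^B)=\bar\tau_t(h_t^A)(b_t)$ directly, and the marginal $P(k,h_t^A)$ propagates under both $(\sigma,\bar\tau)$ and $(\bar\sigma,\bar\tau)$ via the recursion $P(k,h_{t+1}^A)=P(k,h_t^A)\,P(a_t\mid k,h_t^A)$. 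Matching the conditional kernel $P(a_t\mid k,h_t^A)$ by construction then propagates the equality of joints stage by stage.

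The step I expect to be the main obstacle is verifying the conditional-independence claim (ii) cleanly: this is what guarantees that the marginalized uninformed-player strategy is \emph{genuinely} $H^B$-independent and not a state-dependent object in disguise (and symmetrically, that the marginalized informed-player strategy in \eqref{eq: noP2-2} still admits a product-form dependence on $k$ and $h_t^A$ alone). Once (i) and (ii) are in place, the inductive identification of joint distributions, and hence equality of stage payoffs, is essentially mechanical, and both equalities of the proposition drop out.
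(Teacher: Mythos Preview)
Your proposal is correct and follows essentially the same approach as the paper: for each equality, one direction is the trivial inclusion, and for the other you construct from an arbitrary history-dependent strategy an $H^B$-independent strategy with the same payoff by marginalizing over $h_t^B$. Your probabilistic definitions $\bar\tau_t(h_t^A)(b)=P_{p_0,\bar\sigma,\tau}(b_t=b\mid h_t^A)$ and $\bar\sigma_t(k,h_t^A)(a)=P_{p_0,\sigma,\bar\tau}(a_t=a\mid k,h_t^A)$ are exactly the paper's explicit product-formula constructions rewritten in conditional-probability language, and your inductive verification of equality of joints is a more careful spelling-out of what the paper simply asserts.
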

\begin{proof}
Since $\bar{\mathcal{T}}\in \mathcal{T}$, we have $\min_{\tau\in \mathcal{T}}\gamma_N(p_0,\bar{\sigma},\tau)\leq\min_{\bar{\tau}\in \bar{\mathcal{T}}} \gamma_N(p_0,\bar{\sigma},\bar{\tau})$. Meanwhile, for any $\tau\in \mathcal{T}$, we can design $\bar{\tau}_t^{b_t}(h_t^A)=\sum_{h_t^B\in H_t^B}\prod_{s=1}^t\tau_s^{b_s}(h_s^A,h_s^B)$ for all $t=1,\ldots,N,$ such that $\gamma_N(p_0,\bar{\sigma},\tau)=\gamma_N(p_0,\bar{\sigma},\bar{\tau})$. Hence, we have $\min_{\tau\in \mathcal{T}}\gamma_N(p_0,\bar{\sigma},\tau) \geq \min_{\bar{\tau}\in \bar{\mathcal{T}}} \gamma_N(p_0,\bar{\sigma},\bar{\tau}).$ Therefore, equation (\ref{eq: noP2-1}) is shown.

Similarly, $\bar{\Sigma}\in \Sigma$ implies that $\max_{\sigma\in \Sigma}\gamma_N(p_0,\sigma,\bar{\tau}) \geq \max_{\bar{\sigma}\in\bar{\Sigma}}\gamma_N(p_0,\bar{\sigma},\bar{\tau})$. Meanwhile, for any $\sigma\in\Sigma$, we can design $\bar{\sigma}_t^{a_t}(k,h_t^A)
=\frac{\sum_{h_t^B\in H_t^B} \prod_{s=1}^t \sigma_s^{a_s}(k,h_s^A,h_s^B) \prod_{s=1}^{t-1}\bar{\tau}_s^{b_s}(h_s^A)}{\sum_{h_{t-1}^B\in H_{t-1}^B} \prod_{s=1}^{t-1} \sigma_s^{a_s}(k,h_s^A,h_s^B) \prod_{s=1}^{t-2}\bar{\tau}_s^{b_s}(h_s^A)}$ for all $t=1,\ldots,N$, such that $\gamma_N(p_0,\sigma,\bar{\tau})=\gamma_N(p_0,\bar{\sigma},\bar{\tau})$, which implies that $\max_{\sigma\in \Sigma}\gamma_N(p_0,\sigma,\bar{\tau}) \leq \max_{\bar{\sigma}\in\bar{\Sigma}}\gamma_N(p_0,\bar{\sigma},\bar{\tau})$. Therefore, equation (\ref{eq: noP2-2}) is shown.
\end{proof}
Proposition \ref{prop:noP2} and \ref{prop:noP2also} imply that when computing players' security strategies, we can ignore the uninformed player's history action sequence, which greatly reduces the number of both players' information sets in the extensive game tree, and hence reduces the computational complexity of the security strategies.

\subsection{LP formulations of security strategies}
A $N$-stage asymmetric information repeated game, as a finite game, can always be expressed as a finite extensive game tree\cite{kreps1982sequential}. Assuming perfect recall, i.e. each player can record all history actions of both players, we can use sequence form to construct a linear program to compute the security strategy. Roughly speaking, in two-player zero-sum games, with sequence form, the total payoffs at the leaf nodes of the game tree are provided first, then the probability of the sequence from the root node to the leaf node is characterized in the form of every player's realization plan, i.e. a player's strategy production, and finally an LP formulation can be derived to compute the security strategies of two players based on an LP's dual problem. Moreover, the size of the linear program is linear in the size of the game tree, and hence polynomial in the size of the uninformed player's action set \cite{koller1996efficient}. In our case, the analysis in subsection \ref{subsec: H B independent strategy} indicates that both players can ignore the uninformed player's history action sequence when making decisions. In other words, the uninformed player can forget what he did before, which violates the perfect recall assumption in the sequence form. Here, we will adopt the realization plan in the sequence form, and take advantage of the $H^B$ independent strategies to develop LP formulations with reduced computational complexity to compute the $H^B$ independent security strategies.

As in the sequence form, we define the \emph{realization plan} $q_t(h_t^A;k)$ of the informed player's history action sequence $h_t^A$ given state $k$ at stage $t$ as
\begin{align}
  q_t(h_t^A;k)=\prod_{s=1}^{t-1}\bar{\sigma}^{a_s}_s(k,h_s^A), \label{eq: q}
\end{align}
where $a_s$ and $h_s^A$ are the informed player's action and history action sequence at stage $s$ in the history action sequence $h_t^A$, denoted by $a_s,h_s^A\in h_t^A$. Therefore, the realization plan $q$ satisfy the following constraints:
\begin{align}
 q_1(h_1^A;k)&=1,&&\forall k\in K, \label{eq: q constraint 1}\\
 \sum_{a_t\in A}q_{t+1}((h_t^A,a_t);k)&=q_{t}(h_t^A;k),&&\forall k\in K, h_{t}^A\in H_{t}^A,\nonumber\\
  &&& \forall t=1,\ldots,N,  \label{eq: q constraint 2}\\
  q_t(h_t^A;k)&\geq 0, &&\forall k\in K, h_{t}^A\in H_{t}^A,\nonumber \\
&&& \forall t=2,\ldots,N+1, \label{eq: q constraint 3}
\end{align}
where $(h_t^A,a_t)$ indicates concatenation. A realization plan of the informed player is a collection of the informed player's realization plans $q=(q_t)_{t=1}^{N+1}$ at all stages. Indeed, the realization plan $q_t(h_t^A;k)$ is the conditional probability $\prob{h_t^A|k}$. The set of realization plans of the informed player is denoted by $Q$, including all properly dimensioned real vectors satisfying equation (\ref{eq: q constraint 1}-\ref{eq: q constraint 3}).

A very important difference between a one-shot game and a repeated game is that the uninformed player can learn the system state from the informed player's history actions. The informed player can characterize his revelation of information by the posterior probability $\prob{k|h_t^A}$, which is also called the belief state of player 2. Let $p_t\in \Delta(K)$ denote the posterior probability over the system state $k\in K$ at stage $t$ given $h_t^A$, i.e. $p_t^k(h_t^A)=\prob{k|h_t^A}$. The belief state $p_{t+1}$ at stage $t+1$ can be computed recursively as a function of $p_t$, the informed player's strategy $x_t^k=\bar{\sigma}_t(k,h_t^A)$, and the informed player's realized action $a_t$ based on the Bayesian law. Therefore, we have
\begin{align}
  p_{t+1}^k(h_{t+1}^A)=\pi(p_t,x_t,a_t)=\frac{p_t^k(h_t^A)x_t^k(a_t)}{\bar{x}_{p_t,x_t,(a_t)}}, \label{eq: belief state}
\end{align}
with $p_1=p$ in game $\Gamma_N(p)$. Here, $x_t^k(a_t)=\sigma_t^{a_t}(k,h_t^A)$, and $\bar{x}_{p_t,x_t}(a_t)=\sum_{k\in K}p_t^k(h_t^A)x_t^k(a_t)$. The variable $\bar{x}$ can be seen as the weighted average of $x_t$. Based on the belief state, the value function $V_N(p)$ satisfies a backward recursive equation which is similar to the Bellman's equation \cite{aumann1995repeated,sorin2002first}.
\begin{align}
  V_t(p)=&\max_{x\in \Delta(A)^{|K|}}\min_{y\in \Delta(B)}\sum_{k\in K}p^k{x^k}^TM^ky \nonumber\\
  &+\sum_{a_1\in A}\bar{x}_{p,x}(a_1)V_{t-1}(\pi(p,x,a_1)). \label{eq: recursive formula, finite stage}
\end{align}

Based on the realization plan $q$ and the backward recursive formula (\ref{eq: recursive formula, finite stage}), we construct a linear program to compute the security strategy for the informed player.
\begin{theorem}
\label{theorem: LP formula for informed one in T-stage game}
Consider a two-player zero-sum $N$-stage asymmetric repeated game $\Gamma_N(p)$. The game value $V_N(p)$ satisfies
\begin{align}
V_N(p)
 =& \max_{q,\ell\in Q , L} \sum_{t=1}^N\sum_{h_t^A\in H_t^A} \ell_{h_t^A}\label{eq: T stage, informed, LP formula}\\
 s.t. & \  \sum_{k\in K,a\in A} p^k q_{t+1}((h_t^A,a);k)M^k_{a,:} \geq \ell_{h_t^A} \mathbf{1}^T , \nonumber\\
 & \forall t=1,\cdots,N, \forall h_t^A\in H_t^A. \label{eq: informed player's main constraint}
 \end{align}
 where $Q$ is a set including all properly dimensioned real vectors satisfying (\ref{eq: q constraint 1}-\ref{eq: q constraint 3}), $L$ is a properly dimensioned real space, and $(h_t^A,a)$ indicates concatenation. The informed player's security strategy $\bar{\sigma}^*$ is
 \begin{align}
 \bar{\sigma}^{a*}_t(k,h_t^A)=q_{t+1}^*((h_t^A,a);k)/q_t^*(h_t^A;k), \forall a\in A. \label{eq: informed player's security strategy, T stage}
 \end{align}
\end{theorem}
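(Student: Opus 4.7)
The plan is to proceed by induction on $N$, using the Bellman-type recursion (\ref{eq: recursive formula, finite stage}) and standard LP duality for matrix games. For the base case $N=1$ the continuation term in (\ref{eq: recursive formula, finite stage}) is absent, so $V_1(p)=\max_x\min_{y\in\Delta(B)}\sum_k p^k(x^k)^T M^k y$; dualizing the inner min and identifying $q_2((a);k):=x^k_a$ (with $q_1(\emptyset;k)=1$) rewrites this as exactly the LP (\ref{eq: T stage, informed, LP formula})--(\ref{eq: informed player's main constraint}) at $N=1$.

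For the inductive step, assume the claim at horizon $N-1$ and split the RHS of (\ref{eq: recursive formula, finite stage}) into (a) the stage-$1$ payoff and (b) the continuation values. For (a), apply the scalar LP duality $\min_{y\in\Delta(B)} c^T y=\max_{\ell_\emptyset}\{\ell_\emptyset:c\ge\ell_\emptyset\mathbf 1\}$, which introduces the variable $\ell_\emptyset$ and the first-stage instance of (\ref{eq: informed player's main constraint}) upon setting $q_2((a);k):=x^k_a$. For (b), invoke the induction hypothesis on each $V_{N-1}(\pi(p,x,a_1))$, producing subgame variables $(q^{(a_1)},\ell^{(a_1)})$ on the $(N-1)$-stage game with prior $\pi(p,x,a_1)$. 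The crucial step is the \emph{rescaling} $q_{s+1}((a_1,\tilde h_s^A);k):=x^k_{a_1}\,q^{(a_1)}_s(\tilde h_s^A;k)$ and $\ell_{(a_1,\tilde h_s^A)}:=\bar x_{p,x}(a_1)\,\ell^{(a_1)}_{\tilde h_s^A}$ for $s\ge 1$. Using the identity $p^k x^k_{a_1}=\bar x_{p,x}(a_1)\,\pi(p,x,a_1)^k$, a direct substitution shows that the subgame flow constraints lift to (\ref{eq: q constraint 1})--(\ref{eq: q constraint 3}) for the original game (with $\sum_a q_2((a);k)=1$ gluing the layers), the subgame payoff constraints lift to (\ref{eq: informed player's main constraint}) for $t\ge 2$, and $\bar x_{p,x}(a_1)V_{N-1}(\pi(p,x,a_1))=\sum_{s\ge 1}\sum_{\tilde h_s^A}\ell_{(a_1,\tilde h_s^A)}$. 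Summing over $a_1$ and adding $\ell_\emptyset$ reproduces (\ref{eq: T stage, informed, LP formula}), closing the induction.

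For the strategy recovery (\ref{eq: informed player's security strategy, T stage}), let $(q^*,\ell^*)$ be an LP optimum. The flow constraints (\ref{eq: q constraint 1})--(\ref{eq: q constraint 3}) ensure that $\bar\sigma^{a*}_t(k,h_t^A):=q^*_{t+1}((h_t^A,a);k)/q^*_t(h_t^A;k)$ is a valid probability distribution wherever $q^*_t(h_t^A;k)>0$, with an arbitrary extension elsewhere (those histories have zero probability under state $k$, so the extension is immaterial). The same induction read forward shows that $\sum_{t,h_t^A}\ell^*_{h_t^A}$ is a lower bound on $\gamma_N(p,\bar\sigma^*,\tau)$ for every $\tau\in\mathcal T$, using Proposition \ref{prop:noP2also} to restrict to $H^B$-independent responses, so $\bar\sigma^*$ attains $V_N(p)$. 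The main obstacle I anticipate is the bookkeeping around the rescaling, particularly the degenerate branches where $\bar x_{p,x}(a_1)=0$: the reconstruction of $q^{(a_1)}$ from a given $q$ is singular there, but because such a subtree contributes zero to both sides, adopting the convention $q\equiv 0$ and $\ell\equiv 0$ on that subtree makes the induction go through without loss.
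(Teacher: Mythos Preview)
Your proposal is correct and follows essentially the same route as the paper: induction on $N$ via the recursion (\ref{eq: recursive formula, finite stage}), LP duality for the inner $\min_{y\in\Delta(B)}$, and gluing the subgame LPs into the parent LP through a change of variables $s=s'+1$, $h_s^A=(a_1,\tilde h_{s'}^A)$. The only cosmetic difference is that the paper absorbs the factor $\bar x_{p,x}(a_1)$ by invoking a positive-homogeneity lemma, $\bar x_{p,x}(a_1)V_{N-1}(\pi(p,x,a_1))=V_{N-1}(\bar x_{p,x}\,\pi(p,x,a_1))$, whereas you achieve the identical effect through your explicit rescaling $q_{s+1}((a_1,\tilde h_s^A);k):=x^k_{a_1}q^{(a_1)}_s(\tilde h_s^A;k)$ and $\ell_{(a_1,\tilde h_s^A)}:=\bar x_{p,x}(a_1)\ell^{(a_1)}_{\tilde h_s^A}$ together with the identity $p^k x^k_{a_1}=\bar x_{p,x}(a_1)\pi(p,x,a_1)^k$; your treatment of the degenerate branches $\bar x_{p,x}(a_1)=0$ and of the strategy recovery is in fact more explicit than the paper's.
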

\begin{proof}
By the duality theorem \footnote{Consider a primal LP problem and the corresponding dual LP problem. If solutions to both problem exists, then optimal feasible solutions to both problems exist, and the optimal values of the two problems are equal. \cite{dantzig2016linear}}, it is easy to see that equation (\ref{eq: T stage, informed, LP  formula}-\ref{eq: informed player's main constraint}) is true for $N=1$. Let's assume that $V_{t-1}(p)$ satisfies (\ref{eq: T stage, informed, LP  formula}-\ref{eq: informed player's main constraint}) for all $t=2,\ldots.$ According to Lemma III.1 of \cite{li2014lp}, we have $\bar{x}_{p,x}(a_1)V_{t-1}(\pi(p,x,a_1))=V_{t-1}(\bar{x}_{p,x}\pi(p,x,a_1)).$ Therefore, the second term of (\ref{eq: recursive formula, finite stage}) satisfies
\begin{align*}
&\sum_{a_1\in A}\bar{x}_{p,x}(a_1)V_{t-1}(\pi(p,x,a_1))\\
=&\sum_{a_1\in A}\max_{q,\ell_{a_1}\in Q,L_{a_1}}\sum_{s'=1}^{t-1}\sum_{h_{s'}^A\in H_{s'}^A}\ell_{(a_1,h_{s'}^A)}\\
s.t. & \sum_{k\in K}p^k x^k(a_1)q_{s'+1}((h_{s'}^A,a);k)\geq \ell_{(a_1,h_{s'}^A)}\mathbf{1}^T, \\
&\forall s'=1,\ldots,t-1, h_{s'}^A\in H_{s'}^A.
\end{align*}
Let $s=s'+1$ and $h_s^A=(a_1,h_{s'}^A)$. we have
\begin{align*}
&\sum_{a_1\in A}\bar{x}_{p,x}(a_1)V_{t-1}(\pi(p,x,a_1))\\
=&\max_{q,\ell\in Q,L}\sum_{s=2}^{t}\sum_{h_{s}^A\in H_{s}^A}\ell_{h_s^A}\\
s.t. & \sum_{k\in K}p^k q_{s+1}((h_{s}^A,a);k)\geq \ell_{h_s^A}\mathbf{1}^T, \\
&\forall s=2,\ldots,t, h_{s}^A\in H_{s}^A.
\end{align*}

By the duality theorem, it is easy to verify that
\begin{align*}
\min_{y\in \Delta(B)}\sum_{k\in K}p^k{x^k}^TM^ky=&\max_{\ell_{h_1^A}\in \mathbb{R}} \ell_{h_t^A} \\
s.t. & \sum_{k\in K}p^k{x^k}^TM^k \geq \ell_{h_1^A}\mathbf{1}^T.
\end{align*}
According to equation (\ref{eq: recursive formula, finite stage}), and with the fact that $x^k(a_1)=q_2(a_1;k)$, we show that equation (\ref{eq: T stage, informed, LP  formula}-\ref{eq: informed player's main constraint}) still holds for $V_t(p)$ for $t=2,\ldots.$

Once we get the optimal solution $q^*$, according to (\ref{eq: q}), the security strategy of the informed player can be computed according to (\ref{eq: informed player's security strategy, T stage}).
\end{proof}

Our LP formulation of informed player's security strategy has its size linear in the size of the state set and the size of uninformed player's action set, polynomial in the size of informed player's action set, and exponential in time horizon. Let's first analyze the variable size. Variable $q$ consists of $(q_t)_{t=1}^{N+1}$, where $q_t$ is of size $|K|\times|H_t^A|=|K|\times|A^{t-1}|$, and hence $q$ consists of $|K|(1+|A|+\cdots+|A|^N)=O(|K||A|^{N+1})$ scalars. Variable $\ell$ consists of $(1+|A|+\cdots+|A|^{N-1})=O(|A|^N)$ scalars. In all, we see that the LP formulation has $O(|K||A|^{N+1})$ scalar variables. Next, let's take a look at the constraint size. Constraint (\ref{eq: q constraint 1}) includes $|K|$ equations. Constraint (\ref{eq: q constraint 2}) includes $|K|(1+|A|+\cdots+|A|^{N-1})=O(|K||A|^N)$ equations. Constraint (\ref{eq: q constraint 3}) includes $|K|(1+|A|+\cdots+|A|^{N})=O(|K||A|^{N+1})$ equations. Constraint (\ref{eq: informed player's main constraint}) includes $|B|(1+|A|+\cdots+|A|^{N-1})=O(|B||A|^{N})$ equations. In all, there are $O((|K|+|B|)|A|^{N+1})$ equations. Therefore, the size of the LP formulation to compute the informed player's security strategy is linear in $|K|$ and $|B|$, polynomial in $|A|$, and exponential in $N$.

Next, let's take a look at the uninformed player's security strategy. Define the conditional expected total payoff $u(\bar{\tau};k,h_{N+1}^A)$ given uninformed player's strategy $\bar{\tau}\in \bar{\mathcal{T}}$, state $k\in K$, and informed player's history action sequence $h_{N+1}^A\in H_{N+1}^A$ as
\begin{align}
  u(\bar{\tau};k,h_{N+1}^A)=\expect{\bar{\tau}}{\sum_{t=1}^N M(k,a_t,b_t)|k,h_{N+1}^A}. \label{eq: realized vector payoff}
\end{align}
It is easy to show that
\begin{align}
  u(\bar{\tau};k,h_{N+1}^A)=\sum_{t=1}^N M^k_{a_t,:}y_{h_t^A}, \label{eq: realized vector payoff matrix form}
\end{align}
where $y_{h_t^A}=\bar{\tau}_t(h_t^A)$, and $a_t,h_t^A\in h_{N+1}^A$. We notice that $u(\bar{\tau};k,h_{N+1}^A)$ is a linear function of $\bar{\tau}$, or in other words, $y$.

\begin{theorem} \label{theorem: LP formulation of uninformed player in t stage games}
Consider a two-player zero-sum $N$-stage asymmetric repeated game $\Gamma_N(p)$. The game value $V_N(p)$ satisfies
\begin{align}
  V_N(p)
=&\min_{y \in Y,\ell\in \mathbb{R}^{|K|}} p^T \ell \label{eq: LP for uninformed player}\\
s.t. &\ u(y;k,:) \leq\ell^k\mathbf{1}, &&\forall k\in K, \label{eq: uninformed player's main constraint}\\
&\mathbf{1}^Ty_{h_t^A}=1, &&\forall h_t^A\in H_t^A, \forall t=1,\ldots,N,\label{eq: y constraint 1}\\
&y_{h_t^A}\geq \mathbf{0}, &&\forall h_t^A\in H_t^A, \forall t=1,\ldots,N. \label{eq: y constraint 2}
\end{align}
where $Y$ is a properly dimensioned real space, and $u(y;k,:)$ is a $|H_{N+1}^A|$ dimensional column vector whose element is $u(y;k,h_{N+1}^A)$, a linear function of $y$ satisfying equation (\ref{eq: realized vector payoff matrix   form}). The uninformed player's security strategy $\bar{\tau}^*(h_t^A)$ is $y^*(h_t^A)$.
\end{theorem}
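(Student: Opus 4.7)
The plan is to reduce the min--max definition of $V_N(p)$ to the LP by, first, restricting both players to $H^B$-independent strategies, and second, observing that against a fixed $H^B$-independent strategy $\bar{\tau}$ of the uninformed player, the informed player's best-response problem decouples across states $k\in K$ into a single-agent planning problem whose optimal value is $\max_{h_{N+1}^A} u(\bar{\tau};k,h_{N+1}^A)$. The $\ell^k$ variables in the LP then serve as epigraph variables encoding this inner maximum.

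More concretely, I would proceed as follows. First I would invoke Proposition~\ref{prop:noP2} to write
\begin{equation*}
V_N(p)=\min_{\bar{\tau}\in\bar{\mathcal{T}}}\max_{\sigma\in\Sigma}\gamma_N(p,\sigma,\bar{\tau}),
\end{equation*}
and then Proposition~\ref{prop:noP2also} to further restrict the inner maximization to $\bar{\Sigma}$. Second, I would condition the payoff on $k$, using $p_0^k>0$, to obtain
\begin{equation*}
\gamma_N(p,\bar{\sigma},\bar{\tau})=\sum_{k\in K}p^k\sum_{h_{N+1}^A\in H_{N+1}^A}q_{N+1}(h_{N+1}^A;k)\,u(\bar{\tau};k,h_{N+1}^A),
\end{equation*}
where $q$ is the informed player's realization plan from (\ref{eq: q}). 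Third, I would argue that as $\bar{\sigma}$ ranges over $\bar{\Sigma}$, the induced marginals $q_{N+1}(\cdot;k)$ range over \emph{all} probability distributions on $H_{N+1}^A$ independently for each $k$, because the informed player's behavior strategy can choose action distributions separately for each state and each $h_t^A$. Therefore the maximization over $\bar{\sigma}$ decouples across $k$ and, being a linear maximization over a simplex, is attained at a vertex:
\begin{equation*}
\max_{\bar{\sigma}\in\bar{\Sigma}}\gamma_N(p,\bar{\sigma},\bar{\tau})=\sum_{k\in K}p^k\max_{h_{N+1}^A\in H_{N+1}^A}u(\bar{\tau};k,h_{N+1}^A).
\end{equation*}

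Fourth, I would linearize the inner maximum by introducing the epigraph variable $\ell^k\ge u(\bar{\tau};k,h_{N+1}^A)$ for every $h_{N+1}^A$, which turns $\min_{\bar{\tau}}\sum_k p^k\max_{h_{N+1}^A} u(\bar{\tau};k,h_{N+1}^A)$ into $\min_{\bar{\tau},\ell}p^T\ell$ subject to (\ref{eq: uninformed player's main constraint}). Identifying the variable $y_{h_t^A}$ with $\bar{\tau}_t(h_t^A)$ translates the constraint $\bar{\tau}_t(h_t^A)\in\Delta(B)$ into (\ref{eq: y constraint 1})--(\ref{eq: y constraint 2}), and the linearity of $u(y;k,h_{N+1}^A)$ in $y$ established in (\ref{eq: realized vector payoff matrix form}) certifies the whole problem as a linear program. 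Finally, reading $\bar{\tau}^{*}$ directly from an optimizer $y^{*}$ gives the advertised security strategy.

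The main obstacle is the decoupling argument in the third step: one must show carefully that restricting $\bar{\sigma}$ to $\bar{\Sigma}$ still allows the informed player to realize an arbitrary, state-dependent deterministic choice of terminal history $h_{N+1}^A$, and that mixing over such choices cannot help since $u(\bar{\tau};k,h_{N+1}^A)$ is $\bar{\sigma}$-independent. Once this is in place, everything else is LP bookkeeping, paralleling the use of the duality theorem in the proof of Theorem~\ref{theorem: LP formula for informed one in T-stage game}, and the size analysis of the constraints and variables in $y$ and $\ell$ is routine.
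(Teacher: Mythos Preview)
Your proposal is correct and follows essentially the same route as the paper's proof: the paper defines $\nu_N^k(\bar{\tau})=\max_{\bar{\sigma}(k)}\expect{\bar{\sigma},\bar{\tau}}{\sum_t M^k_{a_t,b_t}\mid k}$, rewrites it as $\max_{q_{N+1}(\cdot;k)\in\Delta(H_{N+1}^A)}\sum_{h_{N+1}^A}q_{N+1}(h_{N+1}^A;k)\,u(\bar{\tau};k,h_{N+1}^A)$, applies LP duality to obtain $\nu_N^k(\bar{\tau})=\min\{\ell^k:u(\bar{\tau};k,:)\le\ell^k\mathbf{1}\}$, and then uses $V_N(p)=\min_{\bar{\tau}}\sum_k p^k\nu_N^k(\bar{\tau})$. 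Your decoupling-plus-epigraph presentation is exactly this argument, with the ``duality theorem'' step stated as an epigraph linearization; the obstacle you single out (that $q_{N+1}(\cdot;k)$ can range over all of $\Delta(H_{N+1}^A)$ independently across $k$) is precisely the identification the paper makes without further comment.
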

\begin{proof}
Let's define $\nu_N^k(\bar{\tau})=\max_{\bar{\sigma}(k)\in \bar{\Sigma}(k)} \expect{\bar{\sigma},\bar{\tau}}{\sum_{t=1}^N M^k_{a_t,b_t}|k},$ where $\bar{\sigma}(k)$ indicates the informed player's $H^B$ independent behavior strategy given the system state $k\in K$ and $\bar{\Sigma}(k)$ is the corresponding set including all possible $\bar{\sigma}(k)$. We have
\begin{align*}
&\nu_N^k(\bar{\tau})=\max_{\bar{\sigma}(k)\in \bar{\Sigma}(k)} \sum_{h_{N+1}^A\in H_{N+1}^A}\prob{h_{N+1}^A|k} u(\bar{\tau};k,h_{N+1}^A)\\
=&\max_{q_{N+1}(:,k)\in \Delta(H_{N+1}^A)} \sum_{h_{N+1}^A\in H_{N+1}^A}q_{N+1}(h_{N+1};k)u(\bar{\tau};k,h_{N+1}^A).
\end{align*}
According to the duality theorem, we have
\begin{align}
  \nu_N^k(\bar{\tau})=&\min_{\ell^k\in \mathbb{R}}\ell^k \label{eq: 4}\\
  s.t. &\ u(\bar{\tau};k,:)\leq \ell^k \mathbf{1}. \label{eq: 5}
\end{align}

The game value $V_N(p)$ satisfies
\begin{align*}
V_N(p)=&\min_{\bar{\tau}\in\bar{\mathcal{T}}}\sum_{k\in K} p^k \mu_N^k(\bar{\tau})\\
=& \min_{y\in Y, \ell\in \mathbb{R}^{|K|}} \sum_{k\in K} p^k \ell^k\\
s.t. & u(y;k,:) \leq \mathbf{1}\ell^k, \forall k\in K.
\end{align*}
\end{proof}

The LP formulation of the uninformed player's security strategy has its size linear in the size of the state set and his own action set, polynomial in the size of the informed player's action set, and exponential in time horizon. We first analyze the variable size. Variable $y$ includes $(y_t)_{t=1}^N$, where $y_t$ is of size $|B||A^{t-1}|$, and hence $y$ has $|B|(1+|A|+\cdots+|A|^{N-1})=O(|B||A|^N)$ scalar variables. Variable $\ell$ is of size $|K|$. In all, the variable size is in the order of $|B||A|^N+|K|$. We then study the constraint size. Constraint (\ref{eq: y constraint 1}) consists of $(1+|A|+\cdots+|A|^{N-1})=O(|A|^N)$ equations. Constraint (\ref{eq: y constraint 2}) consists of $|B|(1+|A|+\cdots+|A|^{N-1})=O(|B||A|^N)$ equations. Constraint (\ref{eq: uninformed player's main constraint}) consists of $|A|^N|K|$ equations. In all, the constraint size is of order $O((|B|+|K|+1)|A|^N)$. Therefore, the size of the LP formulation to compute the uninformed player's security strategy is linear in $|K|$ and $|B|$, polynomial in $|A|$, and exponential in $N$.

\section{$\lambda$-Discounted Asymmetric Repeated Games}
In finite-stage asymmetric information repeated games, the security strategies of the players depend on the informed player's history actions. As the time horizon gets long, players need a large amount of memory to record the history actions. Since the horizon of a $\lambda$-discounted asymmetric repeated game is infinite, it is necessary for players to find fixed-sized sufficient statistics for decision making. After figuring out the fixed-sized sufficient statistics, we find that players' security strategies in $\lambda$-discounted asymmetric repeated game are still hard to compute, and hence approximated security strategies with guaranteed performance are provided. This section talks about the sufficient statistics and the approximated security strategies player by player.

\subsection{Setup}
A two-player zero-sum $\lambda$-discounted asymmetric repeated game is specified by the same five-tuple $(K,A,B,M,p_0)$ and played in the same way as described in the two-player zero-sum $N$-stage asymmetric repeated game. The payoff of player 1 at stage $t$ is $\lambda(1-\lambda)^{t-1}M(k,a_t,b_t)$ for some $\lambda\in (0,1)$, and the game is played for infinite horizon. The payoff of the $\lambda$-discounted asymmetric repeated game with initial probability $p_0$ and strategies $\sigma$ and $\tau$ is defined as
\begin{align*}
  \gamma_{\lambda}(p_0,\sigma,\tau)=E_{p_0,\sigma,\tau}\left(\sum_{t=1}^\infty \lambda(1-\lambda)^{t-1} M(k,a_t,b_t)\right).
\end{align*}

The $\lambda$-discounted game $\Gamma_\lambda(p_0)$ is defined as a two-player zero-sum asymmetric repeated game equipped with initial distribution $p_0$, strategy spaces $\Sigma$ and $\mathcal{T}$, and payoff function $\gamma_\lambda(p_0,\sigma,\tau)$. The security strategies $\sigma^*$ and $\tau^*$, and security levels $\underline{V}_\lambda(p_0)$ and $\overline{V}_\lambda(p_0)$ are defined in the same way as in $N$-stage game in Section \ref{sec:finitehorizon} for player 1 and 2, respectively. Since $\gamma_\lambda(p_0,\sigma,\tau)$ is bilinear over $\sigma$ and $\tau$, $\Gamma_\lambda(p_0)$ has a value $V_\lambda(p_0)$ according to Sion's minimax Theorem, i.e. $V_\lambda(p_0)=\underline{V}_\lambda(p_0)=\overline{V}_\lambda(p_0)$ \cite{sorin2002first}.

\subsection{The informed player}

\subsubsection{The informed player's security strategy}
The belief state $p_t$ in (\ref{eq: belief state}) plays an important role in decision making of the informed player. Indeed, in a $\lambda$-discounted asymmetric repeated game $\Gamma_\lambda(p)$, the belief state $p_t$ is sufficient statistics of the informed player.
\begin{proposition}[\cite{sorin2002first}]
  \label{proposition: recursive formula in discounted games}
Consider a two-player zero-sum $\lambda$-discounted asymmetric repeated game $\Gamma_\lambda(p)$. The game value $V_\lambda(p)$ satisfies
\begin{align}
  V_\lambda(p)= &\max_{x\in \Delta(A)^{|K|}} \min_{y\in\Delta(B)} \nonumber\\
&\left(\lambda \sum_{k\in K} p^k{x^{k}}^TM^ky+(1-\lambda)\mathbf{T}_{p,x}(V_\lambda) \right), \label{eq: dynamic programming}
\end{align}
where
\begin{align}
  \mathbf{T}_{p,x}(V_\lambda)=\sum_{a\in A}\bar{x}_{p,x}^aV_\lambda(\pi(p,x,a)). \label{eq: T}
\end{align}
Moreover, the informed player has a security strategy that depends only on the belief state $p_t$ at each stage $t$, and is independent of the uninformed player's history action sequence.
\end{proposition}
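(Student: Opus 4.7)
The plan is to derive the recursion via a Shapley-type fixed-point argument, and then read off the sufficient-statistic property directly from a stationary optimal strategy built out of the fixed point.

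First, I would introduce the Bellman-type operator $\Phi$ acting on the Banach space $\mathcal{B}$ of bounded functions $W:\Delta(K)\to\mathbb{R}$ (endowed with $\|\cdot\|_{\sup}$) by
\begin{align*}
\Phi(W)(p) = \max_{x\in\Delta(A)^{|K|}} \min_{y\in\Delta(B)} \Bigl(\lambda\sum_{k\in K}p^k{x^k}^T M^k y + (1-\lambda)\mathbf{T}_{p,x}(W)\Bigr).
\end{align*}
The inner expression is bilinear in $y$ and the marginal $\bar{x}_{p,x}$, so using concavity in $x$ (inherited at the fixed-point level because $V_\lambda$ is concave in $p$, a standard property of this game family) and linearity in $y$, Sion's minimax theorem guarantees the max-min exists and equals the min-max. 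Then I would show that $\Phi$ is a $(1-\lambda)$-contraction: for any two $W_1,W_2\in\mathcal{B}$, the first term is unchanged, and $|\mathbf{T}_{p,x}(W_1)-\mathbf{T}_{p,x}(W_2)|\leq\|W_1-W_2\|_{\sup}$ because $\sum_a \bar{x}_{p,x}^a=1$. Combining with the routine bound $|\max\min f-\max\min g|\le\|f-g\|_{\sup}$ gives the contraction. Banach's fixed-point theorem then produces a unique $W^*\in\mathcal{B}$ with $\Phi(W^*)=W^*$.

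Next, I would identify $W^*$ with $V_\lambda$. The cleanest route is via truncation: let $V_{N,\lambda}$ be the value of the $N$-stage $\lambda$-discounted truncation. A one-stage dynamic-programming decomposition, using that conditional on the realised $a_1$ the continuation is a $\lambda$-discounted game on the posterior $\pi(p,x,a_1)$ (by the Markov property of the belief update from Bayes' rule), shows $V_{N,\lambda}=\Phi(V_{N-1,\lambda})$ with $V_{0,\lambda}\equiv 0$. By geometric summability of the discount weights, $V_{N,\lambda}\to V_\lambda$ in sup norm, and by the contraction $V_{N,\lambda}=\Phi^N(0)\to W^*$; uniqueness of the limit gives $V_\lambda=W^*$, which is the recursion (\ref{eq: dynamic programming}).

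For the sufficient-statistic claim, let $x^*(p)\in\argmax$ in the operator $\Phi$ evaluated at $W^*=V_\lambda$, pick any measurable selector, and define the stationary behavior strategy $\bar{\sigma}^*_t(k,h_t^A)=x^*(p_t(h_t^A))^k$, where $p_t(h_t^A)$ is computed recursively from $p_0$ via (\ref{eq: belief state}). By construction $\bar{\sigma}^*\in\bar{\Sigma}$ (it ignores $h_t^B$), and a standard one-stage-deviation / telescoping argument against any $\tau\in\mathcal{T}$ shows that the continuation value from stage $t$ onward, conditional on history, equals $V_\lambda(p_t)$ and hence $\gamma_\lambda(p,\bar{\sigma}^*,\tau)\ge V_\lambda(p)$; taking $\min_\tau$ shows $\bar{\sigma}^*$ is a security strategy, and it visibly depends only on $p_t$ (equivalently, on $h_t^A$ through $p_t$).

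The main obstacle is ensuring the operator $\Phi$ is well-posed at the minimax step: one must argue that the inner $(x,y)$-problem admits a saddle despite $\mathbf{T}_{p,x}(W)$ being only implicitly concave in $x$. This is handled by working in the subspace of concave upper-semicontinuous functions on $\Delta(K)$, which is preserved by $\Phi$ (the $\mathbf{T}$ term is concave in $p$ when $W$ is, by the splitting lemma for belief states), so Sion's theorem applies throughout the fixed-point iteration and the argmax selector $x^*(p)$ can be chosen measurably.
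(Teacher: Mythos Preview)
The paper does not supply its own proof of this proposition; it is quoted directly from \cite{sorin2002first} as a known result. Your proposal is the classical Shapley-type contraction argument, which is exactly the route taken in that reference, so in substance you are reconstructing the intended proof and the approach is correct.

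Two minor technical remarks on your sketch. First, since $\mathbf{T}_{p,x}(W)$ does not depend on $y$, Sion's theorem is not actually needed to make $\Phi$ well-defined as a max-min: the inner $\min_y$ is over a linear function on a simplex, and the outer $\max_x$ is over a continuous function on a compact set once $W$ is continuous. Sion (and the concavity machinery in your final paragraph) only matters if you additionally want $\max_x\min_y=\min_y\max_x$ at each stage, which is relevant for the uninformed player's side but not for the statement at hand. Second, in your telescoping step the continuation value under $\bar{\sigma}^*$ against an arbitrary $\tau$ is \emph{at least} $V_\lambda(p_t)$, not equal to it; this inequality is of course what you need, and your stated conclusion $\gamma_\lambda(p,\bar{\sigma}^*,\tau)\ge V_\lambda(p)$ is correct.
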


First of all, Proposition \ref{proposition: recursive formula in discounted games} points out that the informed player's security strategy is independent of the uninformed player's history action, just as what it is in $N$-stage game. Following the same steps, we can show that the uninformed player's best response to an $H^B$ independent strategy is also $H^B$ independent. Second, Proposition \ref{proposition: recursive formula in discounted games} provides the sufficient statistics $p_t$ of the informed player. So the informed player only needs to record $p_t\in \Delta(K)$ instead of all of his own history actions. Finally, given the belief state $p_t$, Proposition \ref{proposition: recursive formula in discounted games} gives a Bellman-like equation (\ref{eq: dynamic programming}) to compute the informed player's security strategy.

Unfortunately, computing the value $V_\lambda(p)$ and the informed player's corresponding security strategy $\sigma^*$ is non-convex \cite{gilpin2008solving,sandholm2010state}. Therefore, we need to find an approximated security strategy that is easy to compute, and has some performance guarantee.

\subsubsection{The informed player's approximated security strategy}
One way to approximate the security strategy is to approximate the game value $V_\lambda(p)$ first, and then compute the security strategy based on the approximated game value. Here, we will use the game value $V_{\lambda,N}(p)$ of a $\lambda$-discounted $N$-stage asymmetric repeated game $\Gamma_{\lambda,N}(p)$ to approximate the game value $V_\lambda(p)$.

A $\lambda$-discounted $N$-stage repeated asymmetric game $\Gamma_{\lambda,N}(p_0)$ is a truncated version of a $\lambda$-discounted asymmetric repeated game $\Gamma_\lambda(p_0)$ with time horizon $N$. To be more specific, a $\lambda$-discounted $N$-stage asymmetric repeated game $\Gamma_{\lambda,N}(p_0)$ is specified by the same five-tuple $(K,A,B,M,p_0)$ and played in the same way as in a $\lambda$-discounted asymmetric repeated game $\Gamma_\lambda(p_0)$. The one-stage payoff is the same as in $\Gamma_\lambda(p_0)$, i.e. $\lambda(1-\lambda)^{t-1}M(k,a_t,b_t)$. The only difference between a $\lambda$-discounted $N$-stage repeated asymmetric game $\Gamma_{\lambda,N}(p_0)$ and a $\lambda$-discounted repeated asymmetric game $\Gamma_\lambda(p_0)$ is that the game is played for $N$ stages in $\Gamma_{\lambda,N}(p_0)$, and infinite stages in $\Gamma_\lambda(p_0)$. Therefore, the payoff of game $\Gamma_{\lambda,N}(p_0)$ is
\begin{align*}
    \gamma_{\lambda,N}(p_0,\sigma,\tau)=E_{p_0,\sigma,\tau}\left(\sum_{t=1}^N \lambda(1-\lambda)^{t-1} M(k,a_t,b_t)\right).
\end{align*}
A $\lambda$-discounted $N$-stage repeated asymmetric game $\Gamma_{\lambda,N}(p_0)$ is defined as a two-player zero-sum repeated asymmetric game equipped with initial probability $p_0$, strategy spaces $\Sigma$ and $\mathcal{T}$, and payoff function $\gamma_{\lambda,N}(p_0,\sigma,\tau)$.

Following the standard arguments as in the proof of Proposition \ref{proposition: recursive formula in discounted games}, we see that the game value $V_{\lambda,N+1}(p)$ of the $\lambda$-discounted $N$-stage game $\Gamma_{\lambda,N+1}(p)$ satisfies the recursive formula as below.
\begin{align}
  V_{\lambda,N+1}(p)
=&\max_{x\in\Delta(A)^{|K|}}\min_{y\in\Delta(B)}\left(\lambda \sum_{k\in K} p^k{x^k}^TM^ky\right.\nonumber\\
& \left.+(1-\lambda)\mathbf{T}_{p,x}(V_{\lambda,N})\right), \label{eq: value iteration}
\end{align}
with $V_{\lambda,0}(p)\equiv 0$.

Before we go ahead to provide the approximated security strategy based on this approximated game value, we are interested in how good the approximated game value is, and how fast it converges to the real game value. To this purpose, we define an operator $\mathbf{F}_x$ as
\begin{align}
\mathbf{F}_{x}^V(p)=&\min_{y\in \Delta(B)}\{\lambda \sum_{k\in K} p^k{x^k}^TM^ky +(1-\lambda)\mathbf{T}_{p,x}(V)\}.
\label{eq: F}
\end{align}
It's clear that $V_\lambda(p)=\max_{x\in\Delta(A)^{|K|}} \mathbf{F}_{x}^{V_\lambda}(p)$, and $V_{\lambda,N+1}(p)=\max_{x\in\Delta(A)^{|K|}} \mathbf{F}_{x}^{V_{\lambda,N}}(p)$. The operator $\mathbf{F}_x$ is actually a contraction mapping.

\begin{lemma}
\label{lemma: contractor F}
Let $\mathcal{V}$ be the set of functions mapping from $\Delta(K)$ to $\mathbb{R}$. Given any $x\in\Delta(A)^{|K|}$ and $\lambda\in (0,1)$, the operator $\mathbf{F}_x:\mathcal{V}\rightarrow\mathcal{V}$ defined in (\ref{eq: F}) is a contraction mapping with contraction constant $1-\lambda,$ i.e.
\begin{align*}
\|\mathbf{F}_x^{V_1}-\mathbf{F}_x^{V_2}\|_{\sup}\leq (1-\lambda)\|V_1-V_2\|_{\sup}, \forall V_1,V_2\in\mathcal{V}.
\end{align*}
\end{lemma}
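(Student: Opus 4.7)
The plan is to bound the supremum norm of the difference $\mathbf{F}_x^{V_1}-\mathbf{F}_x^{V_2}$ at an arbitrary $p\in\Delta(K)$ and then take the supremum. First I would invoke the standard elementary inequality that for any two bounded functions $f_1,f_2$ on $\Delta(B)$,
\[
\bigl|\min_{y\in\Delta(B)}f_1(y)-\min_{y\in\Delta(B)}f_2(y)\bigr|\leq \max_{y\in\Delta(B)}|f_1(y)-f_2(y)|,
\]
applied to the two inner objectives defining $\mathbf{F}_x^{V_1}(p)$ and $\mathbf{F}_x^{V_2}(p)$.

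Since the bilinear term $\lambda\sum_{k\in K}p^k{x^k}^TM^ky$ is identical in both objectives, it cancels pointwise in $y$, leaving
\[
\bigl|\mathbf{F}_x^{V_1}(p)-\mathbf{F}_x^{V_2}(p)\bigr|\leq (1-\lambda)\bigl|\mathbf{T}_{p,x}(V_1)-\mathbf{T}_{p,x}(V_2)\bigr|.
\]
Plugging in the definition of $\mathbf{T}_{p,x}$ from (\ref{eq: T}) and using the triangle inequality,
\[
\bigl|\mathbf{T}_{p,x}(V_1)-\mathbf{T}_{p,x}(V_2)\bigr|
\leq \sum_{a\in A}\bar{x}_{p,x}^a\,\bigl|V_1(\pi(p,x,a))-V_2(\pi(p,x,a))\bigr|.
\]

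The final ingredient is that $\{\bar{x}_{p,x}^a\}_{a\in A}$ is itself a probability distribution on $A$: by the definition $\bar{x}_{p,x}(a)=\sum_{k}p^kx^k(a)$, one has $\sum_{a}\bar{x}_{p,x}(a)=\sum_{k}p^k\sum_{a}x^k(a)=\sum_{k}p^k=1$, and each term is nonnegative. Bounding each $|V_1(\pi(p,x,a))-V_2(\pi(p,x,a))|$ by $\|V_1-V_2\|_{\sup}$ therefore gives $|\mathbf{T}_{p,x}(V_1)-\mathbf{T}_{p,x}(V_2)|\leq \|V_1-V_2\|_{\sup}$, hence $|\mathbf{F}_x^{V_1}(p)-\mathbf{F}_x^{V_2}(p)|\leq (1-\lambda)\|V_1-V_2\|_{\sup}$. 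Taking the supremum over $p\in\Delta(K)$ completes the proof.

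There is no real obstacle in this argument; the only things to be careful about are the $\min$–$\min$ inequality (to make sure the cancellation of the bilinear term is legitimate, it must be identical in both objectives, which it is since it does not involve $V$) and the verification that $\bar{x}_{p,x}$ is a probability distribution, which is what converts the weighted combination into a convex average and produces the $\|\cdot\|_{\sup}$ bound without any extra constants.
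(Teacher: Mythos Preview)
Your proof is correct and essentially matches the paper's argument. The only cosmetic difference is that the paper observes directly that the term $(1-\lambda)\mathbf{T}_{p,x}(V)$ does not depend on $y$, so it can be pulled outside the $\min$ and the difference $\mathbf{F}_x^{V_1}(p)-\mathbf{F}_x^{V_2}(p)$ is \emph{exactly} $(1-\lambda)(\mathbf{T}_{p,x}(V_1)-\mathbf{T}_{p,x}(V_2))$; your min--min inequality yields the same bound but as an inequality, which is harmless here since the $y$-dependence cancels anyway.
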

\begin{proof}
Since the second term of mapping $\mathbf{F}_x$ in equation (\ref{eq: F}) is irrelevant to $y$, we have
\begin{align*}
\mathbf{F}_x^{V_1}(p)=&\min_{y\in \Delta(B)}\{\lambda\sum_{k\in K}p^k{x^k}^TM^ky\}+(1-\lambda)\mathbf{T}_{p,x}(V_1),\\
\mathbf{F}_x^{V_2}(p)=&\min_{y\in \Delta(B)}\{\lambda\sum_{k\in K}p^k
{x^k}^TM^ky\}+(1-\lambda)\mathbf{T}_{p,x}(V_2).
\end{align*}
Therefore, according to the definition of $\mathbf{T}$ in (\ref{eq: T}),
\begin{align*}
&|\mathbf{F}_x^{V_1}(p)-\mathbf{F}_x^{V_2}(p)|\\
\leq&(1-\lambda)\sum_{a\in A}\bar{x}_{p,x}(a)|V_1(\pi(a;p,x))-V_2(\pi(a;p,x))|.
\end{align*}
Its supreme norm, hence, satisfies
\begin{align*}
&\|\mathbf{F}_x^{V_1}-\mathbf{F}_x^{V_2}\|_{\sup}\\
\leq&\sup_{p\in \Delta(K)} (1-\lambda)\sum_{a\in A}\bar{x}_{p,x}(a)|V_1(\pi(a;p,x))-V_2(\pi(a;p,x))|\\
\leq & (1-\lambda)\|V_1-V_2\|_{\sup} \sup_{p\in \Delta(K)}\sum_{a\in A}\bar{x}_{p,x}(a)\\
=& (1-\lambda)\|V_1-V_2\|_{\sup}
\end{align*}
\end{proof}

Lemma \ref{lemma: contractor F} further implies that the approximated game value $V_{\lambda,N}$ converges to the real game value $V_\lambda$ exponentially fast with respect to $N$, which is shown in the following theorem.
\begin{theorem}
\label{theorem: regret of value iteration}
Given $\lambda\in (0,1)$, the approximated game value $V_{\lambda,N+1}$ converges to $V_\lambda$ exponentially fast with rate $1-\lambda$, i.e.
\begin{align}
\|V_\lambda-V_{\lambda,N+1}\|_{\sup}\leq & (1-\lambda)\|V_\lambda-V_{\lambda,N}\|_{\sup}\label{eq: value function converges informed}\\
\leq &(1-\lambda)^{N+1} \|V_\lambda\|_{\sup}. \label{eq: value function converges informed 1}
\end{align}
\end{theorem}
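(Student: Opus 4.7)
The plan is to first establish the one-step bound (\ref{eq: value function converges informed}) as the main step, and then obtain (\ref{eq: value function converges informed 1}) by iterating it $N+1$ times down to the base case $V_{\lambda,0}\equiv 0$.

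For the one-step bound, I will use the two fixed-point representations noted immediately after the definition of $\mathbf{F}_x$, namely $V_\lambda(p)=\max_{x\in \Delta(A)^{|K|}}\mathbf{F}_x^{V_\lambda}(p)$ and $V_{\lambda,N+1}(p)=\max_{x\in \Delta(A)^{|K|}}\mathbf{F}_x^{V_{\lambda,N}}(p)$. Combining these with the elementary inequality $|\max_x f(x)-\max_x g(x)|\leq \max_x |f(x)-g(x)|$ gives, for every $p\in \Delta(K)$,
\begin{align*}
|V_\lambda(p)-V_{\lambda,N+1}(p)|\leq \max_{x\in \Delta(A)^{|K|}}\bigl|\mathbf{F}_x^{V_\lambda}(p)-\mathbf{F}_x^{V_{\lambda,N}}(p)\bigr|.
\end{align*}
Taking the supremum over $p\in \Delta(K)$ on both sides and invoking Lemma \ref{lemma: contractor F}, which gives a $p$-uniform contraction bound with constant $1-\lambda$ that is independent of $x$, we obtain (\ref{eq: value function converges informed}).

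For (\ref{eq: value function converges informed 1}), I simply iterate (\ref{eq: value function converges informed}) a total of $N+1$ times to get $\|V_\lambda-V_{\lambda,N+1}\|_{\sup}\leq (1-\lambda)^{N+1}\|V_\lambda-V_{\lambda,0}\|_{\sup}$. Since $V_{\lambda,0}\equiv 0$ by the construction of the truncated game, the right-hand side reduces to $(1-\lambda)^{N+1}\|V_\lambda\|_{\sup}$, and boundedness of $V_\lambda$ (guaranteed by boundedness of $M$ and the geometric discount factor) ensures this quantity is finite.

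The only place that requires care is the interchange of $\max_x$ and $\sup_p$ in the step above: one has to check that the bound extracted from Lemma \ref{lemma: contractor F} is uniform in $x$, so that the outer $\max_{x}$ does not introduce any extra term beyond $(1-\lambda)\|V_\lambda-V_{\lambda,N}\|_{\sup}$. Since the lemma's statement is exactly this uniform-in-$x$ contraction, this obstacle dissolves, and the rest of the argument is a routine induction.
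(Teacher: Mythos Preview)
Your proposal is correct and follows essentially the same route as the paper's proof. The only cosmetic difference is that you invoke the standard inequality $|\max_x f(x)-\max_x g(x)|\leq \max_x|f(x)-g(x)|$ directly, whereas the paper reproves it by a two-case argument (choosing the maximizer of whichever side is larger); the iteration down to $V_{\lambda,0}\equiv 0$ for (\ref{eq: value function converges informed 1}) is identical.
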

\begin{proof}
From equation (\ref{eq: dynamic programming}) and (\ref{eq: value iteration}), we have
\begin{align*}
&|V_\lambda(p)-V_{\lambda,N+1}(p)|\\
=&|\max_{x\in\Delta(A)^{|K|}}\mathbf{F}_{x}^{V_\lambda}(p) - \max_{x\in \Delta(A)^{|K|}}\mathbf{F}_x^{V_{\lambda,N}}(p)|.
\end{align*}
Let $x^*$ and $x^\star$ be the solution to $\max_{x\in\Delta(A)^{|K|}}\mathbf{F}_{x}^{V_\lambda}(p)$ and $\max_{x\in \Delta(A)^{|K|}}\mathbf{F}_x^{V_{\lambda,N}}(p)$, respectively.

Given $p\in \Delta(K)$, if $V_\lambda(p)\geq V_{\lambda,N+1}(p)$, we have
\begin{align*}
|V_\lambda(p)-V_{\lambda,N+1}(p)| &\leq |\mathbf{F}_{x^*}^{V_\lambda}(p)-\mathbf{F}_{x^*}^{V_{\lambda,N}}(p)|\\
&\leq (1-\lambda)\|V_\lambda-V_{\lambda,N}\|_{\sup}.
\end{align*}

Given $p\in \Delta(K)$, if $V_\lambda(p)\leq V_{\lambda,N+1}(p)$, we have
\begin{align*}
|V_\lambda(p)-V_{\lambda,N+1}(p)| &\leq |\mathbf{F}_{x^\star}^{V_\lambda}(p)-\mathbf{F}_{x^\star}^{V_{\lambda,N}}(p)|\\
&\leq (1-\lambda)\|V_\lambda-V_{\lambda,N}\|_{\sup}.
\end{align*}

Therefore, we have for any $p\in \Delta(K)$, $|V_\lambda(p)-V_{\lambda,N+1}(p)| \leq (1-\lambda)\|V_\lambda-V_{\lambda,N}\|_{\sup}$, which further implies equation (\ref{eq: value function converges informed}) and (\ref{eq: value function converges informed 1}).
\end{proof}

In $\lambda$-discounted game $\Gamma_\lambda(p)$, $\bar{\sigma}_{\lambda,N}:K\times \Delta(K)\rightarrow \Delta(A)$ indicates the informed player's stationary strategy that is computed based on the approximated game value $V_{\lambda,N}$, and satisfies the following formula.
\begin{align}
\bar{\sigma}_{\lambda,N}(:,p)=&\argmax_{x\in \Delta(A)^{|K|}} \min_{y\in\Delta(B)}&\left(\lambda \sum_{k\in K} p^k {x^k}^T M^k y\right. \nonumber\\
&\left.+(1-\lambda)\mathbf{T}_{p,x}(V_{\lambda,N})\right), \label{eq: sigma n}
\end{align}
where $\bar{\sigma}_{\lambda,N}(:,p)$ is a $|A|\times|K|$ matrix whose $k$th column is $\bar{\sigma}_{\lambda,N}(k,p)$. Clearly, $\bar{\sigma}_{\lambda,N}(:,p)$ can be also seen as player $1$'s security strategy at stage $1$ in the $\lambda$-discounted $N+1$-stage asymmetric repeated game $\Gamma_{\lambda,N+1}(p)$. Following the same steps as in Theorem \ref{theorem: LP formula for informed one in T-stage game}, we can construct a linear program to compute the approximated game value
$V_{\lambda,N+1}(p)$ and the corresponding approximated security strategy $\bar{\sigma}_{\lambda,N}(k,p)$.

\begin{theorem}
\label{theorem: LP of informed player in discounted game}
Consider a two-player zero-sum $\lambda$-discounted asymmetric game $\Gamma_\lambda(p)$. The approximated game value $V_{\lambda,N+1}(p)$ satisfies
\begin{align}
  V_{\lambda,N+1}(p)=&\max_{q,\ell\in Q,L} \sum_{t=1}^{N+1} \sum_{h_t^A\in H_t^A} \lambda(1-\lambda)^{t-1} \ell_{h_t^A} \label{eq: LP formulationa for informed player in discounted game}\\
s.t. & \sum_{k\in K,a\in A}q_{t+1}(k,(h_t^A,a)) M^k_{a,:}\geq \ell_{h_t^A} \mathbf{1}^T, \nonumber\\
&\forall t=1,2,\ldots,N+1, h_t^A\in H_t^A, \label{eq: LP formulationa for informed player in discounted game 2}
\end{align}
where $q \in Q$ is a set including all properly dimensioned real vectors satisfying (\ref{eq: q constraint 1}-\ref{eq: q constraint 3}), $L$ is a properly dimensioned real space, and $(h_t^A,a_t)$ corresponds to concatenation. The approximated security strategy
\begin{align}
\bar{\sigma}^a_{\lambda,N}(k,p)=q_2^*(a;k), \forall a\in A. \label{eq: informed player's approximated strategy}
\end{align}
\end{theorem}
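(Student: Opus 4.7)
The plan is to prove the theorem by induction on $N$, closely paralleling the proof of Theorem \ref{theorem: LP formula for informed one in T-stage game} but tracking the discount weights $\lambda(1-\lambda)^{t-1}$ carefully. For the base case $N=0$, the recursion (\ref{eq: value iteration}) reduces to the one-shot minimax $V_{\lambda,1}(p) = \max_{x}\min_{y} \lambda \sum_{k} p^k {x^k}^T M^k y$. Applying LP duality to the inner minimization (exactly as at the end of the proof of Theorem \ref{theorem: LP formula for informed one in T-stage game}) and setting $q_2(a;k) = x^k(a)$ yields (\ref{eq: LP formulationa for informed player in discounted game}-\ref{eq: LP formulationa for informed player in discounted game 2}) with a single $t=1$ term weighted by $\lambda$, and one verifies that $q_1 \equiv 1$ together with this $q_2$ lies in $Q$.

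For the inductive step, assume the statement holds for $V_{\lambda,N}$. I would invoke Lemma III.1 of \cite{li2014lp} to rewrite $\bar{x}_{p,x}(a_1)\, V_{\lambda,N}(\pi(p,x,a_1)) = V_{\lambda,N}\!\left(\bar{x}_{p,x}(a_1)\pi(p,x,a_1)\right)$, whose argument has $k$-th component $p^k x^k(a_1)$. Substituting the inductive LP (with dummy variables $q^{(a_1)}, \ell^{(a_1)}$ for each branch) and summing over $a_1 \in A$ writes $\sum_{a_1}\bar{x}_{p,x}(a_1)V_{\lambda,N}(\pi(p,x,a_1))$ as a joint maximization whose constraint coefficients carry an extra factor of $x^k(a_1)$ in front of each subgame realization plan. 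I would then relabel $s = s' + 1$, $h_s^A = (a_1, h_{s'}^A)$, and set $q_{s+1}((h_s^A, a);k) := x^k(a_1)\, q^{(a_1)}_{s'+1}((h_{s'}^A, a);k)$, together with $q_2(a_1;k) := x^k(a_1)$. A direct check against (\ref{eq: q constraint 1}-\ref{eq: q constraint 3}) shows that this bijectively produces the realization plans $q \in Q$ of the full $(N+1)$-stage tree.

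It remains to absorb the immediate-reward term and the outer $\max_x$. The discount weights transform cleanly: the subgame's weight $\lambda(1-\lambda)^{s'-1}$ at stage $s'$, multiplied by the outer $(1-\lambda)$, becomes $\lambda(1-\lambda)^{s-1}$ at full-game stage $s = s'+1$, matching the $t \geq 2$ terms of (\ref{eq: LP formulationa for informed player in discounted game}). For the $t=1$ term, LP duality gives $\min_{y}\lambda\sum_{k}p^k{x^k}^T M^k y = \max_{\ell_\emptyset}\{\lambda \ell_\emptyset : \sum_k p^k {x^k}^T M^k \geq \ell_\emptyset \mathbf{1}^T\}$, which via $q_2(a;k) = x^k(a)$ is exactly the $t=1$ case of (\ref{eq: LP formulationa for informed player in discounted game 2}). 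Folding the outer $\max_x$ into the joint maximization over $q$ (since $q_2$ now encodes $x$) reproduces (\ref{eq: LP formulationa for informed player in discounted game}-\ref{eq: LP formulationa for informed player in discounted game 2}) for $V_{\lambda,N+1}(p)$. Equation (\ref{eq: informed player's approximated strategy}) then follows because the optimal $x^{\ast k}(a)$ equals $q_2^{\ast}(a;k)$ by construction, and $x^{\ast}$ is precisely $\bar{\sigma}_{\lambda,N}(:,p)$ by (\ref{eq: sigma n}).

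The main obstacle I anticipate is the realization-plan bookkeeping in the inductive step: verifying that the multiplied-and-concatenated $q_{s+1}$ is a valid realization plan of the full $(N+1)$-stage tree, and that the identification is bijective (so that $\max$ over subgame plans transports to $\max$ over full-game plans), is notationally delicate even though each individual constraint check is mechanical. The discount-factor bookkeeping is the other fiddly step, but the algebra $(1-\lambda)\cdot \lambda(1-\lambda)^{s'-1} = \lambda(1-\lambda)^{s'}$ makes it unambiguous once the indices are aligned.
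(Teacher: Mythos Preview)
Your proposal is correct and is precisely the approach the paper intends: the paper does not give a separate proof of this theorem but states that it follows the same steps as Theorem~\ref{theorem: LP formula for informed one in T-stage game}, which is exactly the induction you carry out, with the only new ingredient being the discount-weight bookkeeping $(1-\lambda)\cdot\lambda(1-\lambda)^{s'-1}=\lambda(1-\lambda)^{s-1}$ that you have handled correctly.
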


\subsubsection{The performance analysis of the informed player's approximated security strategy} Now that we can compute the informed player's approximated security strategy, the next question is which performance this strategy can guarantee. To this purpose, we first define the \emph{security level $J^{\bar{\sigma}_{\lambda,N}}(p)$} guaranteed by the approximated security strategy $\bar{\sigma}_{\lambda,N}$ as
\begin{align}
J^{\bar{\sigma}_{\lambda,N}}(p)=\min_{\bar{\tau}\in\bar{\mathcal{T}}}\gamma_{\lambda}(p,\bar{\sigma}_{\lambda,N},\bar{\tau}).
\label{eq: security level}
\end{align}
Since $\bar{\sigma}_{\lambda,N}$ is a stationary strategy, according to the standard procedure of dynamic programming, its security level $J^{\bar{\sigma}_{\lambda,N}}$ has the following property.
\begin{lemma}
\label{lemma: stationary policy stationary dynamic equation}
Let $\bar{\sigma} \in \bar{\Sigma}$ be the informed player's stationary strategy that depends only on the belief state $p_t$ besides the state $k\in K$. The security level $J^{\bar{\sigma}}$ of $\bar{\sigma}$ satisfies $J^{\bar{\sigma}}(p)=\mathbf{F}^{J^{\bar{\sigma}}}_{\bar{\sigma}(:,p)}(p)$.
\end{lemma}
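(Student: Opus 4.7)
The plan is to directly unroll the definition $J^{\bar\sigma}(p)=\min_{\bar\tau\in\mathcal{T}}\gamma_\lambda(p,\bar\sigma,\bar\tau)$ one stage, and then recognize the tail as $J^{\bar\sigma}$ evaluated at the updated belief. First I would invoke Proposition \ref{prop:noP2also} (applied to the discounted game, by the same argument used in the finite-stage case) to replace $\min_{\bar\tau\in\mathcal{T}}$ by $\min_{\bar\tau\in\bar{\mathcal{T}}}$. This lets me write $\bar\tau_1=y\in\Delta(B)$ and treat each $\bar\tau_t$ for $t\ge 2$ as a map from $H_t^A$ to $\Delta(B)$ only, which is essential for decoupling stages below.

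Next, writing $x=\bar\sigma(:,p)$, I would split the discounted payoff into the stage-$1$ contribution and the tail:
\begin{align*}
\gamma_\lambda(p,\bar\sigma,\bar\tau)
=&\;\lambda\sum_{k\in K}p^k(x^k)^TM^k y\\
&+(1-\lambda)\,\expect{p,\bar\sigma,\bar\tau}{\textstyle\sum_{t=2}^\infty \lambda(1-\lambda)^{t-2}M(k,a_t,b_t)}.
\end{align*}
The stage-$1$ action $a_1$ has marginal probability $\bar x_{p,x}(a_1)$, and conditional on $a_1$ the posterior over $k\in K$ is $\pi(p,x,a_1)$ by the Bayes update (\ref{eq: belief state}). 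Because $\bar\sigma$ is stationary in the belief state, the continuation of $\bar\sigma$ after observing $a_1$ is again $\bar\sigma$ started from the belief $\pi(p,x,a_1)$.

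For the continuation strategy of the uninformed player, the key observation is that an $H^B$-independent strategy $\bar\tau$ decomposes, from stage $2$ onward, into independent sub-strategies indexed by the prefix $a_1$: for each value of $a_1$, the family $(\bar\tau_t(a_1,\cdot))_{t\ge 2}$ can be chosen freely in $\bar{\mathcal{T}}$ for the subtree rooted at $a_1$. Therefore the minimization over $\bar\tau_{\ge 2}$ commutes with the outer sum over $a_1$, and by stationarity each minimized branch equals $J^{\bar\sigma}(\pi(p,x,a_1))$. Gathering terms and using the definition (\ref{eq: T}) of $\mathbf{T}_{p,x}$:
\begin{align*}
J^{\bar\sigma}(p)
=\min_{y\in\Delta(B)}\!\Big\{\lambda\!\!\sum_{k\in K}\!\! p^k (x^k)^T\!M^k y+(1-\lambda)\mathbf{T}_{p,x}(J^{\bar\sigma})\Big\},
\end{align*}
which is exactly $\mathbf{F}^{J^{\bar\sigma}}_{\bar\sigma(:,p)}(p)$ by (\ref{eq: F}).

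The main obstacle is the decoupling argument in the third paragraph: one must be careful that an $H^B$-independent continuation strategy is genuinely a free product across the branches indexed by $a_1$, so that pushing the min inside the sum over $a_1$ is legitimate and each branch's optimal value is precisely $J^{\bar\sigma}$ at the updated belief. This in turn relies on the fact that, in the subtree rooted at $a_1$, the relevant informed-player strategy is again $\bar\sigma$ (by stationarity in belief) and the relevant uninformed strategies still range over $\bar{\mathcal{T}}$, so the induced continuation game is indeed $\Gamma_\lambda(\pi(p,x,a_1))$ with the same stationary $\bar\sigma$.
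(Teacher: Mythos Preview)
Your proposal is correct and follows essentially the same approach as the paper: once $\bar\sigma$ is fixed and stationary in the belief, the remaining minimization over $\bar\tau$ is a one-player discounted problem, and the Bellman equation $J^{\bar\sigma}(p)=\mathbf{F}^{J^{\bar\sigma}}_{\bar\sigma(:,p)}(p)$ follows. The paper's proof is terser---it simply invokes ``Bellman's principle'' for the resulting optimization problem---whereas you spell out explicitly the restriction to $\bar{\mathcal{T}}$ via Proposition~\ref{prop:noP2also} and the decoupling of the continuation strategies across the branches indexed by $a_1$; this is the content the paper leaves implicit, and your careful treatment of it is sound.
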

\begin{proof}
Since player 1's strategy is fixed to be $\bar{\sigma}$, the discounted game $\Gamma_\lambda$ becomes a discounted optimization problem, and hence satisfies Bellman's principle, i.e.
\begin{align*}
J^{\bar{\sigma}}(p)
=&\min_{y \in \Delta(B)} \left(\lambda \sum_{k\in K} p^k \bar{\sigma}(k,p)^T M^k y\right.\\
&\left.+ (1-\lambda)\sum_{a\in A}\bar{x}_{p,\bar{\sigma}(:,p)}(a){J^{\bar{\sigma}}(\pi(p,\bar{\sigma}(:,p),a))}\right)\\
=& \min_{y \in \Delta(B)} \left(\lambda \sum_{k\in K} p^k {\bar{\sigma}(k,p)}^T M^k y \right.\\
&\left.+ (1-\lambda)\mathbf{T}_{p,\bar{\sigma}(:,p)}(J^{\bar{\sigma}})\right)\\
=& \mathbf{F}_{\bar{\sigma}(:,p)}^{J^{\bar{\sigma}}}(p).
\end{align*}
\end{proof}

Now, we are ready to show that the difference between the approximated security strategy's security level $J^{\bar{\sigma}_{\lambda,N}}$ and the game value is bounded from above, which is stated in the following theorem.
\begin{theorem}
\label{theorem: error bound on RHP}
The security level $J^{\bar{\sigma}_{\lambda,N}}$ of the informed player's approximated security strategy $\bar{\sigma}_{\lambda,N}$ defined in equation (\ref{eq: sigma n}) converges to the game value exponentially fast in $N$ with rate $1-\lambda$. To be more specific,
\begin{align}
\|V_\lambda-J^{\bar{\sigma}_{\lambda,N}}\|_{\sup}\leq &\frac{2(1-\lambda)}{\lambda}\|V_\lambda-V_{\lambda,N}\|_{\sup}.\label{eq: performance difference of informed player}\\
\leq &(1-\lambda)^{N+1}\frac{2\|V_\lambda\|_{\sup}}{\lambda}.  \label{eq: performance converges informed}
\end{align}
\end{theorem}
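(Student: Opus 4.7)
The second inequality is immediate once the first is in hand: Theorem \ref{theorem: regret of value iteration} already furnishes $\|V_\lambda-V_{\lambda,N}\|_{\sup}\leq (1-\lambda)^N\|V_\lambda\|_{\sup}$, and substituting this into the first inequality yields the stated exponential rate. So the real work is in establishing the bound $\|V_\lambda-J^{\bar{\sigma}_{\lambda,N}}\|_{\sup}\leq \frac{2(1-\lambda)}{\lambda}\|V_\lambda-V_{\lambda,N}\|_{\sup}$.

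My plan is a classic approximate-dynamic-programming argument built around a triangle-inequality split through the intermediate object $V_{\lambda,N+1}$, namely
\begin{align*}
\|V_\lambda-J^{\bar{\sigma}_{\lambda,N}}\|_{\sup}\leq \|V_\lambda-V_{\lambda,N+1}\|_{\sup}+\|V_{\lambda,N+1}-J^{\bar{\sigma}_{\lambda,N}}\|_{\sup}.
\end{align*}
The first summand is controlled by one application of Theorem \ref{theorem: regret of value iteration}, giving $(1-\lambda)\|V_\lambda-V_{\lambda,N}\|_{\sup}$. The work lies in the second summand. Here I would exploit the two defining identities, both written in the form of the operator $\mathbf{F}$ evaluated at the \emph{same} strategy $\bar{\sigma}_{\lambda,N}(:,p)$: since $\bar{\sigma}_{\lambda,N}(:,p)$ is by definition (equation (\ref{eq: sigma n})) a maximizer of $\mathbf{F}_x^{V_{\lambda,N}}(p)$, we have $V_{\lambda,N+1}(p)=\mathbf{F}_{\bar{\sigma}_{\lambda,N}(:,p)}^{V_{\lambda,N}}(p)$, while Lemma \ref{lemma: stationary policy stationary dynamic equation} gives $J^{\bar{\sigma}_{\lambda,N}}(p)=\mathbf{F}_{\bar{\sigma}_{\lambda,N}(:,p)}^{J^{\bar{\sigma}_{\lambda,N}}}(p)$. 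Applying the contraction bound from Lemma \ref{lemma: contractor F} to the same $\mathbf{F}_x$ then yields
\begin{align*}
\|V_{\lambda,N+1}-J^{\bar{\sigma}_{\lambda,N}}\|_{\sup}\leq (1-\lambda)\|V_{\lambda,N}-J^{\bar{\sigma}_{\lambda,N}}\|_{\sup}.
\end{align*}

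The main obstacle is that this last bound is self-referential in $J^{\bar{\sigma}_{\lambda,N}}$, so I need one more triangle inequality to close the loop: $\|V_{\lambda,N}-J^{\bar{\sigma}_{\lambda,N}}\|_{\sup}\leq \|V_{\lambda,N}-V_{\lambda,N+1}\|_{\sup}+\|V_{\lambda,N+1}-J^{\bar{\sigma}_{\lambda,N}}\|_{\sup}$. Substituting back and solving for $\|V_{\lambda,N+1}-J^{\bar{\sigma}_{\lambda,N}}\|_{\sup}$ produces a factor $1-(1-\lambda)=\lambda$ in the denominator and yields $\|V_{\lambda,N+1}-J^{\bar{\sigma}_{\lambda,N}}\|_{\sup}\leq \tfrac{1-\lambda}{\lambda}\|V_{\lambda,N}-V_{\lambda,N+1}\|_{\sup}$. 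Finally, bound $\|V_{\lambda,N}-V_{\lambda,N+1}\|_{\sup}$ by inserting $V_\lambda$ and using Theorem \ref{theorem: regret of value iteration} once more to get $(2-\lambda)\|V_\lambda-V_{\lambda,N}\|_{\sup}$.

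Summing the two pieces then gives a prefactor $(1-\lambda)+\tfrac{(1-\lambda)(2-\lambda)}{\lambda}=\tfrac{2(1-\lambda)}{\lambda}$ in front of $\|V_\lambda-V_{\lambda,N}\|_{\sup}$, which is exactly the claimed bound (\ref{eq: performance difference of informed player}); combining with Theorem \ref{theorem: regret of value iteration} yields (\ref{eq: performance converges informed}). The delicate point is not any calculation but the structural observation that \emph{both} $V_{\lambda,N+1}$ and $J^{\bar{\sigma}_{\lambda,N}}$ can be expressed through the \emph{same} operator $\mathbf{F}_{\bar{\sigma}_{\lambda,N}(:,p)}$, so that Lemma \ref{lemma: contractor F} can be applied to their difference; without this alignment via the argmax identity, the contraction mapping argument would not produce a useful inequality.
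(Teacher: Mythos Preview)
Your proof is correct and follows essentially the same approach as the paper: both split through $V_{\lambda,N+1}$, use the key observation that $V_{\lambda,N+1}(p)=\mathbf{F}_{\bar{\sigma}_{\lambda,N}(:,p)}^{V_{\lambda,N}}(p)$ and $J^{\bar{\sigma}_{\lambda,N}}(p)=\mathbf{F}_{\bar{\sigma}_{\lambda,N}(:,p)}^{J^{\bar{\sigma}_{\lambda,N}}}(p)$ to invoke the contraction Lemma \ref{lemma: contractor F}, and then close a self-referential inequality to produce the $1/\lambda$ factor. The only cosmetic difference is in how the loop is closed: the paper inserts $V_\lambda$ into $\|V_{\lambda,N}-J^{\bar{\sigma}_{\lambda,N}}\|_{\sup}$ and solves directly for $\|V_\lambda-J^{\bar{\sigma}_{\lambda,N}}\|_{\sup}$, whereas you insert $V_{\lambda,N+1}$, solve for $\|V_{\lambda,N+1}-J^{\bar{\sigma}_{\lambda,N}}\|_{\sup}$, and then bound $\|V_{\lambda,N}-V_{\lambda,N+1}\|_{\sup}$ separately; both routes yield the identical constant $2(1-\lambda)/\lambda$.
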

\begin{proof}
Lemma \ref{lemma: stationary policy stationary dynamic equation} indicates that
\begin{align*}
&|V_\lambda(p)-J^{\bar{\sigma}_{\lambda,N}}(p)|\\
\leq & |V_\lambda(p)-V_{\lambda,N+1}(p)|+|V_{\lambda,N+1}(p)-J^{\bar{\sigma}_{\lambda,N}}(p)|\\
= & |V_\lambda(p)-V_{\lambda,N+1}(p)|+|\mathbf{F}_{\bar{\sigma}_{\lambda,N}(:,p)}^{V_{\lambda,N}}(p)-\mathbf{F}^{J^{\bar{\sigma}_{\lambda,N}}}_{\bar{\sigma}_{\lambda,N}(:,p)}(p)|.
\end{align*}
Take the supreme norm on both sides, and use Lemma \ref{lemma: contractor F} and Theorem \ref{theorem: regret of value iteration}. We have
\begin{align*}
&\|V_\lambda-J^{\bar{\sigma}_{\lambda,N}}\|_{\sup}\\
\leq&(1-\lambda)\left(\|V_{\lambda}-V_{\lambda,N}\|_{\sup} +\|V_{\lambda,N}-J^{\bar{\sigma}_{\lambda,N}}\|_{\sup}\right)\\
\leq & (1-\lambda)\left(\|V_{\lambda}-V_{\lambda,N}\|_{\sup} +\|V_{\lambda,N}-V_\lambda\|_{\sup}\right.\\
&\left.+\|V_\lambda-J^{\bar{\sigma}_{\lambda,N}}\|_{\sup}\right),
\end{align*}
which implies equation (\ref{eq: performance difference of informed player}). Together with Theorem \ref{theorem: regret of value iteration}, equation (\ref{eq: performance converges informed}) is shown.
\end{proof}
Notice that as $N$ goes to infinity, the difference between the game value and the security level of the approximated security strategy $\bar{\sigma}_{\lambda,N}$ goes to $0$.

We would like to provide an algorithm to conclude this subsection about the informed player's approximated security strategy.
\begin{algorithm}{The informed player's algorithm in $\lambda$-discounted asymmetric repeated game\hfill{}}
  \begin{enumerate}[(i)]
    \item Initialization
        \begin{itemize}
          \item Read payoff matrices $M$, initial probability $p_0$, and system state $k$.
          \item Set $N$.
          \item Let $t=1$, and $p_1=p_0$.
        \end{itemize}
    \item Compute the informed player's approximated security strategy $\bar{\sigma}_{\lambda,N}$ based on (\ref{eq: informed player's approximated strategy}) where $q^*_2$ is the optimal solution of LP (\ref{eq: LP formulationa for informed player in   discounted game}-\ref{eq: LP formulationa for informed player in discounted game 2}) with $p=p_t$.
    \item Choose an action $a\in A$ according to the probability $\bar{\sigma}_{\lambda,N}(k,p_t)$, and announce it publicly.
    \item Update the belief state $p_{t+1}$ according to (\ref{eq: belief state}).
    \item Update $t=t+1$ and go to step (ii).
  \end{enumerate}
\end{algorithm}

\subsection{The uninformed player}
Because of the lack of access to the informed player's strategy, the belief state $p_t$ is not available to the uninformed player, and hence can not serve as the uninformed player's sufficient statistics. De Meyer first introduced the dual game of an asymmetric repeated game in \cite{de1996repeated}, and pointed out that the uninformed player's security strategy in the dual game with a special initial regret is also the uninformed player's security strategy in the `primal' game. One applaudable property of the uninformed player's security strategy in the dual game is that the security strategy depends only on a fixed-sized sufficient statistics that is fully available to the uninformed player. The questions are what is the `special' initial regret, and how to compute the corresponding security strategy in the dual game. To answer these questions, we first introduce the dual game of an asymmetric repeated game.

\subsubsection{The uninformed player's security strategy and the dual game}
Given a $\lambda$-discounted asymmetric repeated game $\Gamma_\lambda(p)$, which is also called the primal game hereafter, its dual game $\tilde{\Gamma}_\lambda(w)$ is defined with respect to $p$, where $w\in\mathbb{R}^{|K|}$ is called the initial regret. The dual game is played the same way as in the primal game, except that the system state $k\in K$ is chosen by player 1 (informed player) instead of the nature. In the dual game, Player 2 (uninformed player) is still not informed of the system state. Let $p$ be player $1$'s strategy to choose the system state, player 1's payoff or player 2's penalty in the dual game $\tilde{\Gamma}_\lambda(w)$ is defined as
\begin{align}
\tilde{\gamma}_\lambda(w,p,\sigma,\tau)=\expect{p,\sigma,\tau}{w^k+\sum_{t=1}^\infty \lambda(1-\lambda)^{t-1}M(k,a_t,b_t)}. \label{eq: tilde gamma}
\end{align}

The $\lambda$-discounted asymmetric repeated dual game $\tilde{\Gamma}_\lambda(w)$ has a game value denoted by $\tilde{V}_\lambda(w)$ satisfying \cite{de1996repeated}
\begin{align}
\tilde{V}_\lambda(w)=\min_{\tau\in\mathcal{T}}\max_{p\in\Delta(K),\sigma\in\Sigma}\tilde{\gamma}_\lambda(w,p,\sigma,\tau) \nonumber \\ =\max_{p\in\Delta(K),\sigma\in\Sigma}\min_{\tau\in\mathcal{T}}\tilde{\gamma}_\lambda(w,p,\sigma,\tau). \label{eq: tilde V lambda}
\end{align}
The game value of the dual game and the game value of the primal game are related in the following way \cite{de1996repeated,sorin2002first}.
\begin{align}
\tilde{V}_\lambda(w)=& \max_{p\in \Delta(K)}\{V_\lambda(p) + p^Tw\},\label{eq: V to V dag}\\
V_\lambda(p)=&\min_{w\in\mathbb{R}^{|K|}}\{\tilde{V}_\lambda(w) - p^Tw\}.\label{eq: V dag to V}
\end{align}

It was shown that the security strategies of the uninformed player in both the primal and the dual games depend only on the informed player's history actions \cite{de1996repeated,sorin2002first}. Following the same steps as in Proposition \ref{prop:noP2also}, we can show that the informed player's best responses to an $H^B$ independent strategy in both the primal and the dual games are also $H^B$ independent. Therefore, we only consider $H^B$ independent strategies for both players in the rest of this subsection.

Let's define the anti-discounted regret $w_t^k$ at stage $t$ with respect to state $k$ given informed player's history action sequence $h_t^A$ as
\begin{align*}
  w_t^k(h_t^A)=\frac{\expect{\bar{\tau}}{w^k+\sum_{s=1}^{t-1}\lambda(1-\lambda)^{s-1}M^k_{a_s,b_s}|k,h_t^A}}{(1-\lambda)^{t-1}},\forall k\in K.
\end{align*}
The anti-discounted regret $w_t^k(h_t^A)$ can be computed recursively as
\begin{align}
w_{t+1}^k((h_t^A,a_t)) =& \frac{w_{t}^k(h_t^A)+\lambda M_{a_t,:}^k \bar{\tau}(h_t^A)}{1-\lambda}, \forall k\in K,  \label{eq: w}
\end{align}
with $w_1=w$.
The anti-discounted regret $w_t$ is indeed the sufficient statistics for the uninformed player in the dual game.
\begin{proposition}\cite{de1996repeated,sorin2002first}
\label{proposition: security strategy relation in primal and dual games}
The value $\tilde{V}_\lambda(w)$ of the $\lambda$-discounted dual asymmetric repeated game $\tilde{\Gamma}_\lambda(w)$ satisfies
\begin{align}
  \tilde{V}_\lambda(w)=&\min_{y\in \Delta(B)}\max_{a\in A}
(1-\lambda)\tilde{V}_\lambda\left(\frac{w+\lambda M_a y}{1-\lambda}\right), \label{eq: recursive formuala for discounted dual game}
\end{align}
where $M_a$ is a $|K|\times |B|$ matrix whose $k$th row is $M_{a,:}^k$.
Moreover, the uninformed player has a security strategy that depends only at each stage $t$ on $w_t$.
\end{proposition}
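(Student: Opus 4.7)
My strategy is to derive the dual recursion from the primal Bellman equation (\ref{eq: dynamic programming}) via the conjugate relations (\ref{eq: V to V dag})--(\ref{eq: V dag to V}), and then obtain the sufficient-statistic claim by a standard dynamic programming argument applied to the resulting equation.

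I would start from $\tilde{V}_\lambda(w)=\max_{p\in\Delta(K)}\{V_\lambda(p)+p^T w\}$ and substitute the primal Bellman equation for $V_\lambda(p)$. Reparametrising by the joint distribution $q^k(a)=p^k x^k(a)$ and invoking the homogeneity identity $\bar{x}_{p,x}(a)V_\lambda(\pi(p,x,a))=V_\lambda(\bar{x}_{p,x}(a)\pi(p,x,a))$ (Lemma III.1 of \cite{li2014lp}, as used in the proof of Theorem \ref{theorem: LP formula for informed one in T-stage game}), I would rewrite the continuation as $\sum_a V_\lambda(q(\cdot,a))$ evaluated at sub-probability vectors. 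Inverting the conjugate relation inside each continuation term by writing $V_\lambda(q(\cdot,a))=\min_{w'(a)}\{\tilde{V}_\lambda(w'(a))-q(\cdot,a)^T w'(a)\}$, I introduce one dual variable $w'(a)$ per observable first-stage action. A Sion-type minimax exchange on the bilinear-in-$(q,y)$, linear-in-$w'$ objective then allows swapping $\max_{p,x}$ with $\min_{y,w'}$; because the resulting inner objective is linear in each $x^k\in\Delta(A)$ and then linear in $p\in\Delta(K)$, the maximisation collapses to a pointwise $\max_{a\in A}$. After normalising by $(1-\lambda)$, the regret argument inside $\tilde{V}_\lambda$ turns out to be exactly $(w+\lambda M_a y)/(1-\lambda)$, yielding formula (\ref{eq: recursive formuala for discounted dual game}).

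For the sufficient-statistic claim I would treat the derived recursion as a Bellman equation whose state variable is $w$, coupled with the deterministic transition (\ref{eq: w}). Standard dynamic programming then shows that a stationary selector $\bar{\tau}^*(w)\in\argmin_y\max_a(1-\lambda)\tilde{V}_\lambda((w+\lambda M_a y)/(1-\lambda))$ is optimal, and applying $\bar{\tau}^*$ recursively yields a behavior strategy which at every stage $t$ depends only on $w_t$. The main technical obstacle will be justifying the minimax exchange and the collapse from a mixture over $(p,x)$ to the pure $\max_a$: this requires the convexity of $\tilde{V}_\lambda$ in $w$, inherited from (\ref{eq: V to V dag}) as a supremum of affine functions, together with verifying that the feasible sets for $q,y,w'$ satisfy the convex--compact hypotheses of Sion's theorem so that the swap of orders and the subsequent exploitation of linearity in $x^k$ on the simplex are both legitimate.
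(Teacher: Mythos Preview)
The paper does not supply a proof of this proposition: it is quoted directly from \cite{de1996repeated,sorin2002first}, so there is no in-paper argument against which to compare your proposal. That said, your plan is essentially the conjugate-duality derivation that those references use, and it is sound in outline.

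One point deserves tightening. You propose to swap $\max_{p,x}$ with $\min_{y,w'}$ via Sion's theorem, and you correctly flag the compact--convex hypotheses as the obstacle. The genuine difficulty is that the dual variables $w'(a)\in\mathbb{R}^{|K|}$ range over a noncompact set, so Sion does not apply directly at that stage. The cleanest fix is not to try to justify the swap over $w'$ at all, but to eliminate $w'$ explicitly: after your reparametrisation to $q\in\Delta(K\times A)$ and the (legitimate) swap of $\max_q$ with $\min_y$, the inner $\max_q$ of a linear functional collapses to $\max_{a\in A}\max_{k\in K}$, and what remains for each fixed $a$ is
\[
\min_{w'\in\mathbb{R}^{|K|}}\Big\{\max_{k}\big[(w+\lambda M_a y)^k-(1-\lambda)w'^k\big]+(1-\lambda)\tilde{V}_\lambda(w')\Big\}.
\]
Writing $\max_k[\,\cdot\,]=\max_{p\in\Delta(K)}p^T[\,\cdot\,]$ and using $\tilde{V}_\lambda(w')=\max_{p}\{V_\lambda(p)+p^T w'\}$ shows this infimum is attained at $w'=(w+\lambda M_a y)/(1-\lambda)$ with value $(1-\lambda)\tilde{V}_\lambda((w+\lambda M_a y)/(1-\lambda))$. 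This closes the argument without any appeal to minimax over an unbounded set. Your sufficient-statistic claim then follows exactly as you describe, since the resulting recursion is a Bellman equation in the state $w$ with the deterministic update (\ref{eq: w}).
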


Meanwhile, it was also shown in \cite{de1996repeated,sorin2002first} that the security strategy of the uninformed player in the dual game $\tilde{\Gamma}_\lambda(w^*)$ with some special initial regret $w^*$ is also the security strategy for the uninformed player in the primal game $\Gamma_\lambda(p)$.
\begin{proposition}{(Corollary 2.10 and 3.25 in \cite{sorin2002first})}
\label{proposition: uninformed player's security strategies in primal and dual games match}
Consider a $\lambda$-discounted asymmetric repeated game $\Gamma_\lambda(p)$ and its dual game $\tilde{\Gamma}_\lambda(w)$. Let $w^*$ be the optimal solution to the optimal problem on the right hand side of equation (\ref{eq: V dag to V}). The security strategy for the uninformed player in the dual game $\tilde{\Gamma}_\lambda(w^*)$ is also the security strategy for the uninformed player in the primal game $\Gamma_\lambda(p)$.
\end{proposition}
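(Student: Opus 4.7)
The plan is to exploit the simple additive relationship between the dual payoff and the primal payoff, together with the optimality of $w^\ast$ in equation (\ref{eq: V dag to V}), to squeeze the worst-case primal performance of $\tau^\ast$ between $V_\lambda(p)$ and itself.

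First I would rewrite the dual payoff (\ref{eq: tilde gamma}) as $\tilde{\gamma}_\lambda(w,p',\sigma,\tau) = {p'}^T w + \gamma_\lambda(p',\sigma,\tau)$, which holds because the initial regret term $w^k$ only depends on the state drawn from $p'$ and is independent of the play. Let $\tau^\ast$ denote the uninformed player's security strategy in the dual game $\tilde{\Gamma}_\lambda(w^\ast)$, so that, by (\ref{eq: tilde V lambda}),
\begin{align*}
\tilde{V}_\lambda(w^\ast)=\max_{p'\in\Delta(K),\,\sigma\in\Sigma}\tilde{\gamma}_\lambda(w^\ast,p',\sigma,\tau^\ast).
\end{align*}

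Next, I would restrict the maximization on the right-hand side to the specific prior $p' = p$ of the primal game; this gives the inequality
\begin{align*}
\tilde{V}_\lambda(w^\ast)\ \ge\ \max_{\sigma\in\Sigma}\tilde{\gamma}_\lambda(w^\ast,p,\sigma,\tau^\ast)\ =\ p^T w^\ast+\max_{\sigma\in\Sigma}\gamma_\lambda(p,\sigma,\tau^\ast).
\end{align*}
On the other hand, the hypothesis that $w^\ast$ is optimal in (\ref{eq: V dag to V}) yields $V_\lambda(p)=\tilde{V}_\lambda(w^\ast)-p^T w^\ast$. Substituting this identity into the previous inequality eliminates the $p^T w^\ast$ term and produces the upper bound $\max_{\sigma\in\Sigma}\gamma_\lambda(p,\sigma,\tau^\ast)\le V_\lambda(p)$.

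Finally, I would invoke the generic min-max inequality: for every $\tau\in\mathcal{T}$, $\max_{\sigma\in\Sigma}\gamma_\lambda(p,\sigma,\tau)\ge \min_{\tau'\in\mathcal{T}}\max_{\sigma\in\Sigma}\gamma_\lambda(p,\sigma,\tau')=V_\lambda(p)$, where the existence of the value was noted just after (\ref{eq: tilde gamma}). Applied to $\tau^\ast$, this gives the matching lower bound, so $\max_{\sigma\in\Sigma}\gamma_\lambda(p,\sigma,\tau^\ast)=V_\lambda(p)$, which is exactly the statement that $\tau^\ast$ is a security strategy for the uninformed player in the primal game $\Gamma_\lambda(p)$. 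The only subtle step is the third one, where the optimality condition on $w^\ast$ must be read as saying that $p$ is a maximizer of $p'\mapsto V_\lambda(p')+{p'}^T w^\ast$ in (\ref{eq: V to V dag}); this is where the Fenchel-type duality between (\ref{eq: V to V dag}) and (\ref{eq: V dag to V}) is used, and it is the reason the bound can be tightened at exactly the prior $p$ of interest rather than only at some $p'$ that player 1 would prefer.
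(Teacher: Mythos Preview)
The paper does not give its own proof of this proposition; it is quoted as a citation (Corollaries 2.10 and 3.25 in \cite{sorin2002first}) and then used as a black box. Your argument is correct and is in fact the standard proof behind those corollaries: the additive decomposition $\tilde{\gamma}_\lambda(w,p',\sigma,\tau)={p'}^Tw+\gamma_\lambda(p',\sigma,\tau)$, restriction of the dual maximization to $p'=p$, and the defining equality $V_\lambda(p)=\tilde{V}_\lambda(w^\ast)-p^Tw^\ast$ immediately give $\max_{\sigma}\gamma_\lambda(p,\sigma,\tau^\ast)\le V_\lambda(p)$, while the reverse inequality is automatic from the existence of the value.

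One small comment: your closing paragraph about Fenchel duality and ``tightening at exactly the prior $p$'' is accurate background but is not actually needed in your chain of inequalities. Step~4 already uses the \emph{equality} $V_\lambda(p)=\tilde{V}_\lambda(w^\ast)-p^Tw^\ast$, which follows directly from $w^\ast$ being a minimizer in (\ref{eq: V dag to V}); you do not need to argue separately that $p$ is a maximizer in (\ref{eq: V to V dag}). So the proof is complete after step~6 without the extra remark.
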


Proposition \ref{proposition: uninformed player's security strategies in primal and dual games match} indicates that given an initial probability in the primal game, there exists an initial regret in the dual game such that the security strategies of the uninformed player in the primal and the dual games are the same. Therefore, when playing the primal game $\Gamma_\lambda(p)$, the uninformed player can find out the corresponding initial regret $w^*$ in the dual game first, and then play the dual game instead.

\subsubsection{The special initial regret $w^*$ and its approximation}
The next question is what the special initial regret $w^*$ is. Mathematically speaking, $w^*$ is the optimal solution to the problem $\min_{w\in\mathbb{R}^{|K|}}\{\tilde{V}_\lambda(w) - p^Tw\}$. We are also curious about the physical meaning of $w^*$, i.e. what exactly $w^*$ stands for in the primal game $\Gamma_\lambda(p)$. To this purpose, let's first define the uninformed player's worst case regret $\mu_\lambda\in\mathbb{R}^{|K|}$ of strategy $\bar{\tau}\in\bar{\mathcal{T}}$ in the primal game as
\begin{align}
  \mu_\lambda^k(\tau)=-\max_{\bar{\sigma}(k)\in \bar{\Sigma}(k)}\expect{p,\bar{\sigma},\bar{\tau}}{\sum_{t=1}^\infty \lambda(1-\lambda)^{t-1}M^k_{a_t,b_t}|k},
  \label{eq: mu discounted game}
\end{align}
where $\bar{\sigma}(k)$ indicates informed player's $H^B$ independent behavior strategy if the system state is $k\in K$, and $\bar{\Sigma}(k)$ is the corresponding set including all $\sigma(k)$.

The special initial regret $w^*$ is actually the uninformed player's worst case regret of his security strategy.
\begin{theorem}
  \label{theorem: w^*}
Consider a two-player zero-sum $\lambda$-discounted asymmetric repeated game $\Gamma_\lambda(p)$. Let $\tau^*$ be the uninformed player's security strategy in $\Gamma_\lambda(p)$. An optimal solution $w^*$ to the optimal problem $\min_{w\in\mathbb{R}^{|K|}}\{\tilde{V}_\lambda(w) - p^Tw\}$ is $w^*=\mu_\lambda(\tau^*)$, i.e.
\begin{align}
  \min_{w\in\mathbb{R}^{|K|}}\{\tilde{V}_\lambda(w) - p^Tw\}=\tilde{V}_\lambda(\mu_\lambda(\tau^*)) - p^T\mu_\lambda(\tau^*). \label{eq: optimal solution}
\end{align}
\end{theorem}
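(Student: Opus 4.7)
The plan is to verify that $w^{*}=\mu_\lambda(\tau^{*})$ is optimal by computing $\tilde V_\lambda(w^{*})$ directly and then invoking (\ref{eq: V dag to V}). The key algebraic fact is a state-by-state decomposition of the payoff against any fixed uninformed strategy: because the state $k$ is drawn once at the start and the informed side may index its strategy by $k$, for every $q\in\Delta(K)$ and every $\bar\tau\in\bar{\mathcal{T}}$,
\[
\max_{\bar\sigma\in\bar\Sigma}\gamma_\lambda(q,\bar\sigma,\bar\tau)
= \sum_{k\in K} q^k \max_{\bar\sigma(k)\in\bar\Sigma(k)} \expect{\bar\sigma(k),\bar\tau}{\sum_{t=1}^\infty \lambda(1-\lambda)^{t-1}M^k_{a_t,b_t}\given k}
= -q^T\mu_\lambda(\bar\tau).
\]
Restricting to $H^B$-independent strategies is without loss by Proposition \ref{prop:noP2} and the remarks following Proposition \ref{proposition: security strategy relation in primal and dual games}.

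Applying this identity with $\bar\tau=\tau^{*}$ at $q=p$, the security property $\max_\sigma\gamma_\lambda(p,\sigma,\tau^*)=V_\lambda(p)$ yields $V_\lambda(p)=-p^T\mu_\lambda(\tau^*)$. For every other $q\in\Delta(K)$, the strategy $\tau^*$ is merely one feasible element of $\mathcal{T}$ in the game $\Gamma_\lambda(q)$, so
\[
-q^T\mu_\lambda(\tau^*) = \max_{\bar\sigma\in\bar\Sigma}\gamma_\lambda(q,\bar\sigma,\tau^*) \geq \min_{\tau\in\mathcal{T}}\max_{\sigma\in\Sigma}\gamma_\lambda(q,\sigma,\tau) = V_\lambda(q).
\]
Equivalently, $V_\lambda(q)+q^T\mu_\lambda(\tau^*)\leq 0$ for all $q\in\Delta(K)$, with equality at $q=p$.

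Setting $w^{*}=\mu_\lambda(\tau^*)$ and substituting into the conjugate-type characterization (\ref{eq: V to V dag}) of $\tilde V_\lambda$ gives
\[
\tilde V_\lambda(w^*) = \max_{q\in\Delta(K)}\{V_\lambda(q)+q^Tw^*\} = 0,
\]
where the inequality above bounds the maximum by $0$ and the equality at $q=p$ shows it is attained. Therefore
\[
\tilde V_\lambda(w^*)-p^Tw^* = -p^T\mu_\lambda(\tau^*) = V_\lambda(p) = \min_{w\in\mathbb{R}^{\magn{K}}}\{\tilde V_\lambda(w)-p^Tw\},
\]
which is exactly (\ref{eq: optimal solution}).

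The one nontrivial step I expect to dwell on is the state-wise decomposition at the top: it must be checked that, once the discounted payoff is conditioned on the realized state $k$, the resulting expectation depends on $\bar\sigma$ only through its restriction $\bar\sigma(k)$, so that the $\max$ over $\bar\sigma$ commutes with the sum over $k$. This is where the reduction to $H^B$-independent strategies is used in an essential way. Given that decomposition, the rest of the argument is just Fenchel conjugacy between $V_\lambda$ and $\tilde V_\lambda$ applied at the candidate $w^*$, so no delicate estimates or limiting arguments are needed.
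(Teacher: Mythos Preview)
Your proposal is correct and follows essentially the same route as the paper's proof: both arguments establish $\tilde V_\lambda(\mu_\lambda(\tau^*))=0$ via the conjugate formula (\ref{eq: V to V dag}) together with the bound $V_\lambda(q)+q^T\mu_\lambda(\tau^*)\le 0$ (with equality at $q=p$) coming from the security property of $\tau^*$, and then invoke (\ref{eq: V dag to V}) to conclude. Your explicit statement of the state-wise decomposition $\max_{\bar\sigma}\gamma_\lambda(q,\bar\sigma,\bar\tau)=-q^T\mu_\lambda(\bar\tau)$ makes the step the paper leaves implicit in its equations (\ref{eq: 1}) and (\ref{eq: 3}) more transparent, but the logic is the same.
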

\begin{proof}
Equation (\ref{eq: V dag to V}) shows that the left hand side of (\ref{eq: optimal solution}) equals to $V_\lambda(p)$. We will show that the right hand side of (\ref{eq: optimal solution}) equals to $V_\lambda(p)$, too.

First, we have
\begin{align}
V_\lambda(p)=\max_{\sigma\in \Sigma} \gamma_\lambda(p,\sigma,\tau^*)= -p^T\mu_\lambda(\tau^*). \label{eq: 1}
\end{align}

Next, we show that
\begin{align}
\tilde{V}_\lambda(\mu_\lambda(\tau^*))=0. \label{eq: 2}
\end{align}

Equation (\ref{eq: V to V dag}) implies that $\tilde{V}_\lambda(\mu_\lambda(\tau^*))\geq V_\lambda(p)+p^T\mu_\lambda(\tau^*)=0$. Meanwhile, for any $p'\in \Delta(K)$, we have
\begin{align}
  V_\lambda(p')=&\min_{\tau\in\mathcal{T}}\max_{\sigma\in \Sigma} \gamma_\lambda(p',\sigma,\tau)\nonumber\\
  \leq &\max_{\sigma\in \Sigma} \gamma_\lambda(p,\sigma,\tau^*)
  =-p^T\mu_\lambda(\tau^*). \label{eq: 3}
\end{align}
Notice here that $\tau^*$ is the uninformed player's security strategy in $\Gamma_\lambda(p)$, and is not necessarily the uninformed player's security strategy in $\Gamma_\lambda(p')$. Equation (\ref{eq: 3}) implies that $V_\lambda(p)+p^T\mu_\lambda(\tau^*)\leq 0$. Therefore, equation (\ref{eq: 2}) is true. Together with (\ref{eq: 1}), we show that the right hand side of (\ref{eq: optimal solution}) equals to $V_\lambda(p)$, which completes the proof.
\end{proof}

The uninformed player's worst case regret $w^*$ of this security strategy can be seen as the dual variable of the initial probability $p$. The production of the two variables recovers the opposite of the game value (see equation (\ref{eq: 1}). While the informed player's security strategy depends only on $p$ and its Bayesian update $p_t$, the uninformed player can fully rely on $w^*$ and its anti-discounted update $w_t$ to generate his security strategy. Moreover, the belief state $p_t$ is fully available to the informed player, while the anti-discounted regret $w_t$ is fully available to the uninformed player.

Theorem \ref{theorem: w^*} characterizes the physical meaning of the special initial regret $w^*$. The next question is how to compute it. Unfortunately, computing $w^*$ is difficult, since it relies on the security strategy for the uninformed player and the game value in the primal game, which is non-convex \cite{sandholm2010state}. Therefore, we propose to approximate $w^*$ based on the $\lambda$-discounted $N$-stage asymmetric repeated game $\Gamma_{\lambda,N}(p)$, a truncated version of the primal game $\Gamma_\lambda(p)$. Let $\bar{\tau}^\star\in \bar{\mathcal{T}}$ be the security strategy for the uninformed player in $\Gamma_{\lambda,N}(p)$. The approximation $w^\star$ of the special initial regret is $\mu_{\lambda,N}(\bar{\tau}^\star)$ which is defined as $\mu_{\lambda,N}^k(\bar{\tau}^\star)=-\max_{\bar{\sigma}(k)\in \bar{\Sigma}(k)} \expect{p,\bar{\sigma},\bar{\tau}^\star}{\sum_{t=1}^N \lambda(1-\lambda)^{t-1}M^k_{a_t,b_t}|k}$.

Similarly to the $N$-stage game, in the $\lambda$-discounted $N$-stage game, we define the conditional expected total payoff $u_\lambda(\bar{\tau};k,h_{N+1}^A)$ given uninformed player's strategy $\bar{\tau}\in \bar{\mathcal{T}}$, state $k\in K$ and informed player's history action sequence $h_{N+1}^A\in H_{N+1}^A$ as
\begin{align}
  u_{\lambda,N}(\bar{\tau};k,h_{N+1}^A)=\expect{\bar{\tau}}{\sum_{t=1}^N \lambda(1-\lambda)^{t-1}M^k_{a_t,b_t}|k,h_{N+1}^A}, \label{eq: u lambda 1}
\end{align}
which satisfies
\begin{align}
  u_{\lambda,N}(\bar{\tau};k,h_{N+1}^A)=\sum_{t=1}^N \lambda(1-\lambda)^{t-1} M^k_{a_t,:}y_{h_t^A}. \label{eq: u lambda 2}
\end{align}
Following the same steps as in Theorem \ref{theorem: LP formulation of uninformed player in t stage games}, we can construct an LP formulation to compute $V_{\lambda,N}(p)$ and $\mu_{\lambda,N}(\bar{\tau}^\star)$.
\begin{theorem}
  \label{theorem: LP formulation to compute vector payoff}
Consider a $\lambda$-discounted asymmetric repeated game $\Gamma_\lambda(p)$. The approximated game value $V_{\lambda,N}(p)$ satisfies
\begin{align}
  V_{\lambda,N}(p)=&\min_{y\in Y,\ell\in \mathbb{R}^{|K|}}\sum_{k\in K}p^k \ell^k \label{eq: LP initial vector payoff}\\
  s.t.&\ u_{\lambda,N}(y;k,:) \leq \ell^k\mathbf{1}, &&\forall k\in K, \label{eq: LP initial vector payoff 1}\\
  &\ \mathbf{1}^T y_{h_t^A}=1, && \forall h_t^A\in H_t^A,\forall t=1,\ldots,N,\label{eq: LP initial vector payoff 2}\\
  &\ y_{h_t^A}\geq \mathbf{0}, && \forall h_t^A\in H_t^A,\forall t=1,\ldots,N, \label{eq: LP initial vector payoff 3}
\end{align}
where $Y$ is a properly dimensioned real space, and $u_\lambda(y;k,:)$ is a $|H_{N+1}^A|$ dimensional column vector whose element is $u_\lambda(y;k,h_{N+1}^A)$, a linear function of $y$ satisfying equation (\ref{eq:   u lambda 2}). The approximated regret $w^\star$ is $-\ell^{*}$.
\end{theorem}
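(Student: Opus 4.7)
The plan is to mirror the argument used for Theorem \ref{theorem: LP formulation of uninformed player in t stage games}, but with the one-stage payoff $M^k_{a_t,b_t}$ replaced by the $\lambda$-discounted payoff $\lambda(1-\lambda)^{t-1} M^k_{a_t,b_t}$, and then to connect the dual variables $\ell$ to the approximated initial regret $w^\star = \mu_{\lambda,N}(\bar{\tau}^\star)$.

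First I would define, in parallel with the proof of Theorem \ref{theorem: LP formulation of uninformed player in t stage games}, the conditional worst-case discounted payoff
\begin{align*}
\nu_{\lambda,N}^k(\bar{\tau})
 = \max_{\bar{\sigma}(k)\in \bar{\Sigma}(k)}
 \expect{\bar{\sigma},\bar{\tau}}{\sum_{t=1}^N \lambda(1-\lambda)^{t-1} M^k_{a_t,b_t} \given k}.
\end{align*}
Using (\ref{eq: u lambda 1}) and the fact that the informed player's $H^B$-independent strategy given state $k$ is in bijection (via (\ref{eq: q})) with a realization plan $q_{N+1}(\cdot\,;k) \in \Delta(H_{N+1}^A)$, one rewrites this as a maximum over $q_{N+1}(\cdot\,;k)$ of the inner product of $q_{N+1}(\cdot\,;k)$ with the vector $u_{\lambda,N}(\bar{\tau};k,:)$. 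Since $\bar{\tau}$ is held fixed, (\ref{eq: u lambda 2}) shows the objective is linear in $q_{N+1}(\cdot\,;k)$, so this is a bona fide LP over the simplex.

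Next I would invoke LP duality on that maximization to obtain
\begin{align*}
\nu_{\lambda,N}^k(\bar{\tau}) = \min_{\ell^k\in \mathbb{R}} \ell^k
\qquad \text{s.t.} \qquad u_{\lambda,N}(\bar{\tau};k,:) \leq \ell^k \mathbf{1}.
\end{align*}
Because the game value of $\Gamma_{\lambda,N}(p)$ equals $\min_{\bar{\tau}\in\bar{\mathcal{T}}} \sum_{k\in K} p^k \nu_{\lambda,N}^k(\bar{\tau})$ (by Proposition \ref{prop:noP2} applied to the truncated discounted game and the definition of $\nu_{\lambda,N}^k$), substituting the dual above and merging the outer minimization over $\bar{\tau}$ with the per-state minimization over $\ell^k$ yields exactly (\ref{eq: LP initial vector payoff})--(\ref{eq: LP initial vector payoff 3}), where the constraints (\ref{eq: LP initial vector payoff 2})--(\ref{eq: LP initial vector payoff 3}) encode that $y_{h_t^A} = \bar{\tau}_t(h_t^A) \in \Delta(B)$. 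This establishes the LP identity for $V_{\lambda,N}(p)$.

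Finally, for the claim that $w^\star = -\ell^*$, I would compare the definition of $\mu_{\lambda,N}^k(\bar{\tau}^\star)$ with $\nu_{\lambda,N}^k(\bar{\tau}^\star)$: by inspection, $\mu_{\lambda,N}^k(\bar{\tau}^\star) = -\nu_{\lambda,N}^k(\bar{\tau}^\star)$. At the optimum of the outer LP, each dual variable $\ell^{k*}$ equals $\nu_{\lambda,N}^k(y^*)$ (the constraint (\ref{eq: LP initial vector payoff 1}) is tight in the $\max$ direction for the optimal $y^*$), so $w^\star = \mu_{\lambda,N}(\bar{\tau}^\star) = -\ell^*$ as claimed. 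I expect the main subtlety to be making precise the equivalence between the informed player's $H^B$-independent strategies conditioned on state $k$ and realization plans $q_{N+1}(\cdot\,;k) \in \Delta(H_{N+1}^A)$, since this is what lets us treat the inner maximization as a linear program over a simplex and invoke duality cleanly; everything else is a direct transcription of the finite-stage argument with the extra discount factor $\lambda(1-\lambda)^{t-1}$ carried through the payoff.
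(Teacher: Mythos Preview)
Your proposal is correct and follows exactly the approach the paper intends: the paper itself gives no explicit proof here, stating only that one should follow the same steps as in Theorem~\ref{theorem: LP formulation of uninformed player in t stage games} with the discounted payoff, which is precisely what you do. The only minor points are that $\ell^k$ is a primal (not dual) variable in this LP, and your tightness argument for $\ell^{k*}=\nu_{\lambda,N}^k(y^*)$ implicitly uses the paper's standing assumption $p^k>0$ for all $k$; both are harmless.
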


\subsubsection{The Uninformed player's approximated security strategy}
Now that the approximated initial regret $w^\star$ for the dual game $\tilde{\Gamma}_\lambda(w^\star)$ is computed, the next step is to compute the uninformed player's security strategy in the dual game, which is again non-convex \cite{sandholm2010state}. Similar to what we do in approximating the informed player's security strategy, we use the game value of a $\lambda$-discounted $N$-stage dual game $\tilde{\Gamma}_{\lambda,N}(w^\star)$ to approximate $\tilde{V}_\lambda(w^\star)$, and derive the uninformed player's approximated security strategy based on the approximated game value.

A $\lambda$-discounted $N$-stage asymmetric repeated dual game $\tilde{\Gamma}_{\lambda,N}(w)$ is played the same way as a $\lambda$-discounted asymmetric repeated dual game $\tilde{\Gamma}_\lambda(w)$ except that $\tilde{\Gamma}_{\lambda,N}(w)$ is only played for $N$-stages. Since $\tilde{\Gamma}_{\lambda,N}(w)$ is a finite game, it has a value denoted by $\tilde{V}_{\lambda,N}(w)$, i.e.
\begin{align}
  &\tilde{V}_{\lambda,N}(w)\\
  =&\min_{\bar{\tau}\in\bar{\mathcal{T}}}\max_{p\in \Delta(K),\bar{\sigma}\in \bar{\Sigma}}\expect{p,\bar{\sigma},\bar{\tau}}{w+\sum_{t=1}^T\lambda(1-\lambda)^{t-1}M^k_{a_t,b_t}}\\
  =&\max_{p\in \Delta(K),\bar{\sigma}\in \bar{\Sigma}}\min_{\bar{\tau}\in\bar{\mathcal{T}}}\expect{p,\bar{\sigma},\bar{\tau}}{w+\sum_{t=1}^N\lambda(1-\lambda)^{t-1}M^k_{a_t,b_t} } \label{eq: tilde V lambda T}
\end{align}Following the same steps as in the proof of Proposition 3.23 in \cite{sorin2002first}, we derive that the game value $\tilde{V}_{\lambda,N+1}(w)$ of dual game $\tilde{\Gamma}_{\lambda,N+1}(w)$ satisfies the following recursive formula.
\begin{align}
  \tilde{V}_{\lambda,N+1}(w)=&\min_{y\in \Delta(B)}\max_{a\in A}(1-\lambda)\tilde{V}_{\lambda,N}\left(\frac{w+\lambda M_ay}{1-\lambda}\right),  \label{eq: value iteration dual}
\end{align}
with $\tilde{V}_{\lambda,0}(w)=\max_{k\in K} w^k$.
Moreover, since $\tilde{\Gamma}_{\lambda,N}(w)$ is a dual game of $\Gamma_{\lambda,N}(p)$, their game values have the following relations.
\begin{align}
  \tilde{V}_{\lambda,N}(w)=&\max_{p\in \Delta(K)}\{V_{\lambda,N}(p)+p^Tw\}, \label{eq: game value relation discounted T stage 1}\\
  V_{\lambda,N}(p)=& \min_{w\in\mathbb{R}^{|K|}}\{\tilde{V}_{\lambda,N}(w)-p^Tw\}.\label{eq: game value relation discounted T stage 2}
\end{align}
Based on the relations between the game values of the $\lambda$-discounted game $\Gamma_\lambda(p)$, the $\lambda$-discounted $N$-stage games $\Gamma_{\lambda,N}(p)$ and their dual games, we have the following lemma.
\begin{lemma}
Consider a two-player zero-sum $\lambda$-discounted asymmetric repeated game $\Gamma_\lambda(p)$ and its dual game $\tilde{\Gamma}_\lambda(w)$, and a two-player zero-sum $\lambda$-discounted $N$-stage asymmetric repeated game $\Gamma_{\lambda,N}(p)$ and its dual game $\tilde{\Gamma}_{\lambda,N}(w)$. Their game values satisfy
\begin{align}
  \|V_\lambda-V_{\lambda,N}\|_{\sup}=\|\tilde{V}_\lambda-\tilde{V}_{\lambda,N}\|_{\sup}. \label{eq: game value difference meet}
\end{align}
\end{lemma}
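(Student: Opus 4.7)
The plan is to exploit the Fenchel–Legendre style duality relations between the primal and dual value functions, namely $\tilde{V}_\lambda(w)=\max_{p\in\Delta(K)}\{V_\lambda(p)+p^Tw\}$ and $V_\lambda(p)=\min_{w\in\mathbb{R}^{|K|}}\{\tilde{V}_\lambda(w)-p^Tw\}$ (and likewise for the $N$-stage pair). The key observation is that whenever two value functions are related to their conjugates by the same linear coupling $p^Tw$, the coupling cancels when one compares two such conjugates, so the conjugation is non-expansive in the supremum norm. I will carry this out in two directions and combine them.

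For the inequality $\|\tilde V_\lambda - \tilde V_{\lambda,N}\|_{\sup}\leq\|V_\lambda-V_{\lambda,N}\|_{\sup}$, I fix an arbitrary $w\in\mathbb{R}^{|K|}$ and let $p^\dagger\in\Delta(K)$ attain the maximum in $\tilde V_\lambda(w)=\max_{p\in\Delta(K)}\{V_\lambda(p)+p^Tw\}$; this max is attained because $\Delta(K)$ is compact and $V_\lambda$ is continuous. Then
\begin{align*}
\tilde V_\lambda(w)-\tilde V_{\lambda,N}(w)
 &\leq \bigl(V_\lambda(p^\dagger)+p^{\dagger T}w\bigr)-\bigl(V_{\lambda,N}(p^\dagger)+p^{\dagger T}w\bigr)\\
 &=V_\lambda(p^\dagger)-V_{\lambda,N}(p^\dagger)\leq \|V_\lambda-V_{\lambda,N}\|_{\sup},
\end{align*}
and the symmetric argument with the maximizer of $\tilde V_{\lambda,N}(w)$ gives the reverse bound, so $|\tilde V_\lambda(w)-\tilde V_{\lambda,N}(w)|\leq\|V_\lambda-V_{\lambda,N}\|_{\sup}$ for every $w$.

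For the reverse inequality, I use the dual representation $V_\lambda(p)=\min_w\{\tilde V_\lambda(w)-p^Tw\}$ and the analogue for $V_{\lambda,N}$. Fixing $p$ and an arbitrary $\epsilon>0$, I pick $w_\epsilon$ that is $\epsilon$-optimal in $\min_w\{\tilde V_{\lambda,N}(w)-p^Tw\}$, substitute into the corresponding expression for $V_\lambda(p)$, and obtain
\begin{align*}
V_\lambda(p)-V_{\lambda,N}(p)&\leq\bigl(\tilde V_\lambda(w_\epsilon)-p^Tw_\epsilon\bigr)-\bigl(\tilde V_{\lambda,N}(w_\epsilon)-p^Tw_\epsilon\bigr)+\epsilon\\
&\leq\|\tilde V_\lambda-\tilde V_{\lambda,N}\|_{\sup}+\epsilon.
\end{align*}
Sending $\epsilon\to 0$ and repeating with the roles of $V_\lambda$ and $V_{\lambda,N}$ swapped yields $|V_\lambda(p)-V_{\lambda,N}(p)|\leq\|\tilde V_\lambda-\tilde V_{\lambda,N}\|_{\sup}$ for every $p$, which together with the first direction gives \eqref{eq: game value difference meet}.

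I expect the only mildly delicate step to be the second direction, because the infimum over $w\in\mathbb{R}^{|K|}$ is over a non-compact set, so a priori one cannot simply plug in an attained minimizer. The $\epsilon$-argument above sidesteps this, though Theorem \ref{theorem: w^*} in fact guarantees attainment at $w^*=\mu_\lambda(\tau^*)$ (and analogously in the $N$-stage game), so one could alternatively argue with exact minimizers as in the first direction. Either way, once one recognizes the Fenchel-conjugate structure, the proof reduces to this standard non-expansiveness argument.
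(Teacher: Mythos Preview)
Your proof is correct and follows essentially the same Fenchel-conjugate non-expansiveness argument as the paper: both directions use the duality relations $\tilde V=\max_p\{V+p^Tw\}$ and $V=\min_w\{\tilde V-p^Tw\}$ and plug in the (near-)optimizer of one side into the other to cancel the coupling term. The only difference is cosmetic: the paper assumes exact minimizers $w^*,w^\star$ exist and does a case split on the sign of the difference, whereas you handle the non-compact infimum with an $\epsilon$-argument (and correctly note that Theorem~\ref{theorem: w^*} guarantees attainment anyway).
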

\begin{proof}
First, we show $\|V_\lambda-V_{\lambda,N}\|_{\sup}\leq\|\tilde{V}_\lambda-\tilde{V}_{\lambda,N}\|_{\sup}$. According to equation (\ref{eq: V dag to V}) and (\ref{eq: game value   relation discounted T stage 2}), we have
\begin{align*}
  &|V_\lambda(p)-V_{\lambda,N}(p)|\\
  =&|\min_{w\in\mathbb{R}^{|K|}}\{\tilde{V}_\lambda(w)-p^Tw\}- \min_{w\in\mathbb{R}^{|K|}}\{\tilde{V}_{\lambda,N}(w)-p^Tw\}|.
\end{align*}
Let $w^*$ and $w^\star$ be the optimal solution to the problem $\min_{w\in\mathbb{R}^{|K|}}\{\tilde{V}_\lambda(w)-p^Tw\}$ and $\min_{w\in\mathbb{R}^{|K|}}\{\tilde{V}_{\lambda,N}(w)-p^Tw\}$, respectively. If $\min_{w\in\mathbb{R}^{|K|}}\{\tilde{V}_\lambda(w)-p^Tw\}\geq \min_{w\in\mathbb{R}^{|K|}}\{\tilde{V}_{\lambda,N}(w)-p^Tw\}$, then we have $|V_\lambda(p)-V_{\lambda,N}(p)| \leq |\tilde{V}_\lambda(w^\star)-\tilde{V}_{\lambda,N}(w^\star)|.$ Otherwise, we have $|V_\lambda(p)-V_{\lambda,N}(p)| \leq |\tilde{V}_\lambda(w^*)-\tilde{V}_{\lambda,N}(w^*)|.$ Therefore, for any $p\in \Delta(K)$, $|V_\lambda(p)-V_{\lambda,N}(p)| \leq \|\tilde{V}_\lambda-\tilde{V}_{\lambda,N}\|_{\sup}$, which implies that $\|V_\lambda-V_{\lambda,N}\|_{\sup}\leq\|\tilde{V}_\lambda-\tilde{V}_{\lambda,N}\|_{\sup}$.

Following the same steps, based on equation (\ref{eq: V to V dag}) and (\ref{eq: game value relation   discounted T stage 1}), we derive that $\|\tilde{V}_\lambda-\tilde{V}_{\lambda,N}\|_{\sup}\leq \|V_\lambda-V_{\lambda,N}\|_{\sup}$. Therefore, equation (\ref{eq: game   value difference meet}) is shown.
\end{proof}

Before we draw the uninformed player's approximated security strategy based on the approximated game value $\tilde{V}_{\lambda,N}(w^\star)$, we are interested in how far away the approximated game value is from the real game value. To this purpose, we define an operator $\tilde{\mathbf{F}}_y$ as
\begin{align}
  \tilde{\mathbf{F}}_y^{\tilde{V}}(w)=(1-\lambda)\max_{a\in A}\tilde{V}\left(\frac{w+\lambda M_ay}{1-\lambda}\right), \label{eq: tilde f}
\end{align}
where $y\in \Delta(B)$, $w\in\mathbb{R}^{|K|}$, and $\tilde{V}:\mathbb{R}^{|K|}\rightarrow \mathbb{R}$.
With the same technique as in Lemma \ref{lemma: contractor F}, we can show that $\tilde{\mathbf{F}}$ is also a contraction mapping.
\begin{lemma}
  \label{lemma: contractor tilde f}
Given any $y\in \Delta(B)$ and $\lambda\in (0,1)$, the operator $\tilde{\mathbf{F}}_y$ defined as in (\ref{eq: tilde f}) is a contraction mapping with contraction constant $1-\lambda$, i.e.
\begin{align}
  \|\tilde{\mathbf{F}}_y^{\tilde{V}_1}-\tilde{\mathbf{F}}_y^{\tilde{V}_2}\|_{\sup}\leq (1-\lambda)\|\tilde{V}_1-\tilde{V}_2\|_{\sup},
\end{align}
where $\tilde{V}_{1,2}:\mathbb{R}^{|K|}\rightarrow \mathbb{R}$.
\end{lemma}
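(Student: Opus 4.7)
The plan is to mimic the proof of Lemma \ref{lemma: contractor F} almost verbatim, with the key observation that although the operator $\tilde{\mathbf{F}}_y$ contains a $\max_{a\in A}$ instead of the expectation operator $\mathbf{T}_{p,x}$, the inequality $|\max_a f(a)-\max_a g(a)|\leq \max_a |f(a)-g(a)|$ serves the same role that convexity of the expectation served in the primal proof.

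First I would fix an arbitrary $w\in\mathbb{R}^{|K|}$ and $y\in\Delta(B)$, and write
\begin{align*}
\tilde{\mathbf{F}}_y^{\tilde{V}_i}(w)=(1-\lambda)\max_{a\in A}\tilde{V}_i\!\left(\tfrac{w+\lambda M_ay}{1-\lambda}\right),\quad i=1,2.
\end{align*}
Applying the elementary inequality just mentioned, I would obtain
\begin{align*}
|\tilde{\mathbf{F}}_y^{\tilde{V}_1}(w)-\tilde{\mathbf{F}}_y^{\tilde{V}_2}(w)|
\leq (1-\lambda)\max_{a\in A}\Bigl|\tilde{V}_1\!\left(\tfrac{w+\lambda M_ay}{1-\lambda}\right)-\tilde{V}_2\!\left(\tfrac{w+\lambda M_ay}{1-\lambda}\right)\Bigr|.
\end{align*}

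Next, since $(w+\lambda M_a y)/(1-\lambda)\in\mathbb{R}^{|K|}$ for every $a\in A$, the pointwise difference at this argument is bounded by the supremum norm $\|\tilde{V}_1-\tilde{V}_2\|_{\sup}$ taken over the whole domain $\mathbb{R}^{|K|}$. This yields $|\tilde{\mathbf{F}}_y^{\tilde{V}_1}(w)-\tilde{\mathbf{F}}_y^{\tilde{V}_2}(w)|\leq (1-\lambda)\|\tilde{V}_1-\tilde{V}_2\|_{\sup}$. Finally, taking the supremum over $w\in\mathbb{R}^{|K|}$ on the left-hand side delivers the desired contraction inequality with constant $1-\lambda$.

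There is no substantive obstacle: the only thing to be careful about is that the domain of $\tilde{V}$ is the unbounded space $\mathbb{R}^{|K|}$ rather than the simplex $\Delta(K)$, so one should verify that $\|\tilde{V}_1-\tilde{V}_2\|_{\sup}$ is well-defined (i.e.\ finite) on the class of value functions under consideration. This is guaranteed because dual game values $\tilde{V}_\lambda$ and their truncations $\tilde{V}_{\lambda,N}$ differ from each other by a bounded quantity, as they share the translation property $\tilde{V}(w+c)=\tilde{V}(w)+c$ and differ only in the tail of the discounted payoff stream, which is uniformly bounded by $(1-\lambda)^N\|M\|_\infty$. Apart from this remark, the argument is a line-for-line analogue of Lemma \ref{lemma: contractor F}.
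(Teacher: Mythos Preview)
Your proposal is correct and matches the paper's approach: the paper does not spell out a proof but simply states that the same technique as in Lemma~\ref{lemma: contractor F} applies, and your argument is precisely that adaptation, replacing the averaging step $\sum_a \bar{x}_{p,x}(a)\,|\cdot|$ by the elementary inequality $|\max_a f(a)-\max_a g(a)|\le\max_a|f(a)-g(a)|$. Your additional remark on finiteness of $\|\tilde V_1-\tilde V_2\|_{\sup}$ over the unbounded domain $\mathbb{R}^{|K|}$ is a useful clarification that the paper omits; note, however, that the contraction inequality itself holds formally even without it, since both sides are then infinite.
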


Lemma \ref{lemma: contractor tilde f} further implies that the approximated value $\tilde{V}_{\lambda,N}$ converges to the real game value $\tilde{V}_\lambda$ exponentially fast with respect to $N$. The proof is similar to the proof of Theorem \ref{theorem: regret of value iteration}.
\begin{theorem}
  \label{theorem: convergence rate tilde V}
Consider the game value $\tilde{V}_\lambda(w)$ of a $\lambda$-discounted asymmetric repeated dual game $\tilde{\Gamma}_\lambda(w)$ and the game value $\tilde{V}_{\lambda,N}(w)$ of a $\lambda$-discounted $N$-stage asymmetric repeated dual game $\tilde{\Gamma}_{\lambda,N}(w)$. The game value $\tilde{V}_{\lambda,N}$ converges to $\tilde{V}_\lambda$ exponentially fast with respect to the time horizon $N$ with convergence rate $1-\lambda$, i.e.
\begin{align}
  \|\tilde{V}_\lambda-\tilde{V}_{\lambda,N+1}\|_{\sup}\leq (1-\lambda)\|\tilde{V}_\lambda-\tilde{V}_{\lambda,N}\|_{\sup}.
\end{align}
\end{theorem}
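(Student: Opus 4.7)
The plan is to mirror the proof of Theorem~\ref{theorem: regret of value iteration}, using the min-over-$y$ recursive characterization of the dual value functions together with the contraction property of $\tilde{\mathbf{F}}_y$ established in Lemma~\ref{lemma: contractor tilde f}.

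First, I would rewrite (\ref{eq: recursive formuala for discounted dual game}) and (\ref{eq: value iteration dual}) through the operator defined in (\ref{eq: tilde f}), namely $\tilde{V}_\lambda(w)=\min_{y\in\Delta(B)}\tilde{\mathbf{F}}_y^{\tilde{V}_\lambda}(w)$ and $\tilde{V}_{\lambda,N+1}(w)=\min_{y\in\Delta(B)}\tilde{\mathbf{F}}_y^{\tilde{V}_{\lambda,N}}(w)$. At a fixed $w\in\mathbb{R}^{|K|}$, let $y^*$ and $y^\star$ denote the minimizers of the right-hand sides of these two problems, respectively.

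Next, for each $w$ I would split on the sign of $\tilde{V}_\lambda(w)-\tilde{V}_{\lambda,N+1}(w)$. If it is nonnegative, substituting the suboptimal $y^\star$ into the identity for $\tilde{V}_\lambda$ gives $\tilde{V}_\lambda(w)\leq\tilde{\mathbf{F}}_{y^\star}^{\tilde{V}_\lambda}(w)$, while by construction $\tilde{V}_{\lambda,N+1}(w)=\tilde{\mathbf{F}}_{y^\star}^{\tilde{V}_{\lambda,N}}(w)$, so
$$|\tilde{V}_\lambda(w)-\tilde{V}_{\lambda,N+1}(w)|\leq \tilde{\mathbf{F}}_{y^\star}^{\tilde{V}_\lambda}(w)-\tilde{\mathbf{F}}_{y^\star}^{\tilde{V}_{\lambda,N}}(w)\leq (1-\lambda)\|\tilde{V}_\lambda-\tilde{V}_{\lambda,N}\|_{\sup},$$
where the last inequality is Lemma~\ref{lemma: contractor tilde f} applied at $w$. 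The reverse case, where $\tilde{V}_{\lambda,N+1}(w)\geq \tilde{V}_\lambda(w)$, uses $y^*$ in place of $y^\star$ and produces the same bound. Taking the supremum over $w$ on the left closes the argument.

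The only care needed is keeping the inequality directions straight when replacing a true minimizer by a suboptimal choice; this is the mirror image of the max-over-$x$ bookkeeping already carried out in Theorem~\ref{theorem: regret of value iteration}, and so introduces no new obstacle. In particular, no separate iteration is required here, since the statement only asks for the one-step contraction inequality.
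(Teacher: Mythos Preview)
Your proposal is correct and is exactly the argument the paper intends: the paper does not write out a proof, stating only that it is similar to the proof of Theorem~\ref{theorem: regret of value iteration}, and your case split on the sign of $\tilde{V}_\lambda(w)-\tilde{V}_{\lambda,N+1}(w)$ together with Lemma~\ref{lemma: contractor tilde f} is precisely the min-over-$y$ mirror of that max-over-$x$ argument.
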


Applying the approximated game value $\tilde{V}_{\lambda,N}$ in equation (\ref{eq: recursive formuala for discounted dual game}), we derive the uninformed player's approximated security strategy $\bar{\tau}_{\lambda,N}(w_t)$ in dual game $\tilde{\Gamma}_\lambda(w^\star)$ as
\begin{align}
  \bar{\tau}_{\lambda,N}(w_t)=\argmin_{y\in \Delta(B)} \max_{a\in A} (1-\lambda)\tilde{V}_{\lambda,N}\left(\frac{w_t+\lambda M_ay}{1-\lambda}\right), \label{eq: uninformed player's approximated security strategy}
\end{align}
where $w_t$ is updated according to (\ref{eq: w}). Comparing equation (\ref{eq: uninformed   player's approximated security strategy}) and (\ref{eq: value iteration dual}), we see that the approximated security strategy $\bar{\tau}_{\lambda,N}(w_t)$ can be seen as the uninformed player's security strategy at stage $1$ in a $\lambda$-discounted $N+1$-stage dual game $\tilde{\Gamma}_{\lambda,N+1}(w_t)$. Similar to the LP formulation computing the game value of $\Gamma_{\lambda,N}(p)$, we construct an LP formulation to compute the game value of $\tilde{\Gamma}_{\lambda,N+1}(w)$ and the uninformed player's approximated security strategy $\bar{\tau}_{\lambda,N}$.
\begin{theorem}
\label{theorem: LP formulation uninformed player approximated security strategy}
Consider a two-player zero-sum $\lambda$-discounted $N+1$-stage dual game $\tilde{\Gamma}_{\lambda,N+1}(w)$. Its game value $\tilde{V}_{\lambda,N+1}(w)$ satisfies
\begin{align}
  \tilde{V}_{\lambda,N+1}(w)=&\min_{y\in Y,\ell\in \mathbb{R}^{|K|},L\in\mathbb{R}} L \label{eq: LP uninformed player's approximated security strategy}\\
  s.t.& w+\ell \leq L\mathbf{1} \label{eq: LP uninformed player's approximated security strategy 1}\\
  & \ u_{\lambda,N+1}(y;k,:) \leq \ell^k\mathbf{1}, &&\forall k\in K, \label{eq: LP uninformed player's approximated security strategy 2}\\
  &\ \mathbf{1}^T y_{h_t^A}=1, && \forall h_t^A\in H_t^A,\nonumber\\
  &&&\forall t=1,\ldots,N+1, \label{eq: LP uninformed player's approximated security strategy 3}\\
  &\ y_{h_t^A}\geq \mathbf{0}, && \forall h_t^A\in H_t^A,\nonumber\\
  &&&\forall t=1,\ldots,N+1, \label{eq: LP uninformed player's approximated security strategy 4}
\end{align}
where $Y$ is a properly dimensioned real space, and $u_{\lambda,N+1}(y;k,:)$ is a $|H_{N+2}^A|$ dimensional column vector whose element is $u_{\lambda,N+1}(y;k,h_{N+2}^A)$, a linear function of $y$ satisfying equation (\ref{eq:   u lambda 2}).

Moreover, suppose in dual game $\tilde{\Gamma}_\lambda(w_0)$, at stage $t$, the anti-discounted regret $w_t=w.$ The uninformed player's approximated security strategy $\bar{\tau}_{\lambda,N}(w)$ is $y^*_{h_1^A}.$
\end{theorem}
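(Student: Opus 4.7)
The plan is to derive the LP (\ref{eq: LP uninformed player's approximated security strategy})--(\ref{eq: LP uninformed player's approximated security strategy 4}) directly from the minimax characterization of $\tilde{V}_{\lambda,N+1}(w)$ in (\ref{eq: tilde V lambda T}) by layering two LP-duality reformulations, exactly as in the proofs of Theorems \ref{theorem: LP formulation of uninformed player in t stage games} and \ref{theorem: LP formulation to compute vector payoff}. First, I would observe that in the dual game the informed player chooses both $p\in\Delta(K)$ and $\bar{\sigma}\in\bar{\Sigma}$ to maximize the payoff, and since the expected payoff is linear in $p$, the maximum over $\Delta(K)$ is attained at a vertex, so the inner max in (\ref{eq: tilde V lambda T}) decomposes as
\begin{align*}
\max_{k\in K}\left\{w^k+\max_{\bar{\sigma}(k)\in\bar{\Sigma}(k)}\expect{\bar{\sigma},\bar{\tau}}{\sum_{t=1}^{N+1}\lambda(1-\lambda)^{t-1}M^k_{a_t,b_t}\Big|k}\right\}.
\end{align*}

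Second, for a fixed $H^B$-independent strategy $\bar{\tau}$ encoded by the vector $y$ indexed by informed histories, I would rewrite the inner maximum using (\ref{eq: u lambda 1})--(\ref{eq: u lambda 2}) as
\begin{align*}
\max_{q_{N+2}(\,\cdot\,;k)\in\Delta(H_{N+2}^A)}\sum_{h_{N+2}^A\in H_{N+2}^A} q_{N+2}(h_{N+2}^A;k)\,u_{\lambda,N+1}(y;k,h_{N+2}^A),
\end{align*}
and then invoke LP duality exactly as in the derivation of (\ref{eq: 4})--(\ref{eq: 5}) to obtain $\min_{\ell^k\in\mathbb{R}} \ell^k$ subject to $u_{\lambda,N+1}(y;k,:)\leq \ell^k\mathbf{1}$. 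This converts the inner best response of the informed player, for every $k$, into linear constraints on $\ell\in\mathbb{R}^{|K|}$.

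Third, the remaining outer maximum $\max_{k\in K}\{w^k+\ell^k\}$ is handled by a standard epigraph reformulation: introduce the scalar $L$ and enforce $w^k+\ell^k\leq L$ for every $k\in K$, i.e., $w+\ell\leq L\mathbf{1}$. Nesting the three minimizations (over $L$, $\ell$ and $y$) and appending constraints (\ref{eq: LP uninformed player's approximated security strategy 3})--(\ref{eq: LP uninformed player's approximated security strategy 4}) that declare $y$ to be a collection of distributions over $B$ then yields the LP in the statement. For the security-strategy part, I would invoke (\ref{eq: uninformed player's approximated security strategy}), which identifies $\bar{\tau}_{\lambda,N}(w)$ with the uninformed player's first-stage action in $\tilde{\Gamma}_{\lambda,N+1}(w)$; since the LP realizes the minimax value and the first-stage distribution of $\bar{\tau}$ is precisely the component $y_{h_1^A}$, the assertion $\bar{\tau}_{\lambda,N}(w)=y^*_{h_1^A}$ follows.

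The main technical obstacle is the interchange between the two minimizations and the outer maximum over $k$: one must verify that pushing $\min_{\ell^k}$ inside and then converting $\max_k$ to $\min_L$ is lossless, which relies on the fact that each $\ell^k$ participates only in the constraints indexed by its own $k$ (so the $|K|$ inner minimizations are independent) and that the objective $L$ depends on all $\ell^k$ only through the constraint $w+\ell\leq L\mathbf{1}$. Beyond that, the argument is routine bookkeeping analogous to Theorems \ref{theorem: LP formulation of uninformed player in t stage games} and \ref{theorem: LP formulation to compute vector payoff}, and does not require any additional contraction or recursion machinery.
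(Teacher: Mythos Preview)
Your proposal is correct and follows essentially the same route as the paper's proof: both arguments write $\tilde{V}_{\lambda,N+1}(w)$ as $\min_{\bar{\tau}}\max\{w^k-\mu_{\lambda,N+1}^k(\bar{\tau})\}$, replace each $-\mu_{\lambda,N+1}^k(\bar{\tau})$ by the LP dual $\min_{\ell^k}\ell^k$ subject to $u_{\lambda,N+1}(y;k,:)\le\ell^k\mathbf{1}$ (exactly the step (\ref{eq: 4})--(\ref{eq: 5})), and then convert the outer maximum into $\min_L L$ subject to $w+\ell\le L\mathbf{1}$. The only difference is cosmetic: the paper keeps the maximum over $p\in\Delta(K)$ and invokes Sion's minimax theorem to swap $\max_p$ with $\min_\ell$ before dualizing $\max_p$ into $\min_L$, whereas you pass to vertices of $\Delta(K)$ at the outset and then use the monotonicity/independence of the per-$k$ constraints to merge the nested minimizations directly. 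Your shortcut is legitimate (since each $\ell^k$ is only lower-bounded by its own constraint block and enters the objective only through $w^k+\ell^k\le L$, the optimal choice is $\ell^k=\max_{h}u_{\lambda,N+1}(y;k,h)$ and the interchange is lossless) and has the small advantage of not needing Sion; otherwise the two proofs are the same.
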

\begin{proof}
According to equation (\ref{eq: tilde   V lambda T}), we have
\begin{align*}
  \tilde{V}_{\lambda,N+1}(w)=\min_{\bar{\tau}\in\bar{\mathcal{T}}}\max_{p\in \Delta(K)} \sum_{k\in K}p^k(w^k-\mu_{\lambda,N+1}(\bar{\tau})).
\end{align*}
Similar to how we derive equation (\ref{eq: 4}), we have
\begin{align*}
  -\mu_{\lambda,N+1}^k(\bar{\tau})=&\min_{\ell^k\in\mathbb{R}} \ell^k\\
  s.t.& u_{\lambda,N+1}(\bar{\tau};k,:) \leq \ell^k\mathbf{1}.
\end{align*}
Therefore, we have
\begin{align*}
  \tilde{V}_{\lambda,N+1}(w)=&\min_{\bar{\tau}\in\bar{\mathcal{T}}}\max_{p\in \Delta(K)}\min_{\ell\in\mathbb{R}^{|K|}} \sum_{k\in K} p^k(w^k+\ell^k)\\
  s.t.& u_{\lambda,N+1}(\bar{\tau};k,:) \leq \ell^k\mathbf{1}, \forall k\in K.
\end{align*}
Since $\sum_{k\in K}p^k(w^k+\ell^k)$ is bilinear in $p$ and $\ell$, according to Sion's minimax theorem, we have
\begin{align*}
  \tilde{V}_{\lambda,N+1}(w)=&\min_{\bar{\tau}\in\bar{\mathcal{T}}} \min_{\ell\in\mathbb{R}^{|K|}} \max_{p\in \Delta(K)} \sum_{k\in K} p^k(w^k+\ell^k)\\
  s.t.& u_{\lambda,N+1}(\bar{\tau};k,:) \leq \ell^k\mathbf{1}, \forall k\in K.
\end{align*}
According to the duality theorem, given any $\bar{\tau}\in \bar{\mathcal{T}}$ and $\ell\in \mathbb{R}^{|K|}$, we have
\begin{align*}
  &\max_{p\in \Delta(K)} \sum_{k\in K} p^k(w^k+\ell^k)\\
  s.t.& u_{\lambda,N+1}(\bar{\tau};k,:) \leq \ell^k\mathbf{1}, \forall k\in K\\
  =& \min_{L\in \mathbb{R}} L\\
  s.t. & w+\ell \leq L\mathbf{1},\\
  & u_{\lambda,N+1}(\bar{\tau};k,:) \leq \ell^k\mathbf{1}, \forall k\in K,
\end{align*}
which completes the proof.
\end{proof}

Now, we know how to compute the approximated special initial regret $w^\star$ and the uninformed player's approximated security strategy $\bar{\tau}_{\lambda,N}$ in the dual game $\tilde{\Gamma}_\lambda(w^\star)$. This approximated security strategy $\bar{\tau}_{\lambda,N}$ is also the uninformed player's approximated security strategy in the primal game $\Gamma_\lambda(p)$. Let's conclude this subsection with the uninformed player's algorithm in the $\lambda$-discounted asymmetric repeated game $\Gamma_\lambda(p)$.
\begin{algorithm}{The uninformed player's approximated security strategy in $\lambda$-discounted asymmetric repeated game $\Gamma_\lambda(p_0)$ \hfill{}}
\begin{enumerate}[(i)]
  \item Initialization
    \begin{itemize}
      \item Read payoff matrices $M$ and initial probability $p_0$.
      \item Set $N$.
      \item Solve the LP problem (\ref{eq: LP initial vector payoff   1}-\ref{eq: LP initial vector   payoff 2}) with $p=p_0$, and let $w^\star=-\ell^*$.
      \item Let $t=1$ and $w_1=w^\star$.
    \end{itemize}
  \item Solve the LP problem (\ref{eq: LP   uninformed player's approximated security strategy}-\ref{eq: LP uninformed   player's approximated security strategy 4}) with $w=w_t$, and the uninformed player's approximated security strategy $\bar{\tau}(w_t)$ is $y^*_{h_1^A}$.
  \item Choose an action $b\in B$ according to the probability $\bar{\tau}_{\lambda,N}(w_t)$, and announce it publicly.
  \item Read the informed player's action, and update the anti-discounted regret $w_{t+1}$ according to (\ref{eq: w}).
  \item Update $t=t+1$ and go to step (ii).
\end{enumerate}
\end{algorithm}

\subsubsection{The performance difference between the suboptimal strategy and the security strategy}
With the uninformed player's approximated security strategy $\bar{\tau}_{\lambda,N}$, we are interested in the worst case cost guaranteed by this strategy, which is also called the security level of $\bar{\tau}_{\lambda,N}$. Given an uninformed player's strategy $\tau\in \mathcal{T}$, the security level $J^\tau(p)$ in game $\Gamma_\lambda(p)$ is defined as
\begin{align}
  J^\tau(p)=\max_{\sigma\in \Sigma}\gamma_\lambda(p,\sigma,\tau). \label{eq: J tau}
\end{align}
Since the uninformed player's approximated security strategy is derived from his approximated security strategy in the dual game, the security levels of the approximated security strategy in the primal and dual games are highly related. Hence, we would also like to define the security level $\tilde{J}^{\tau}(w)$ of $\tau\in\mathcal{T}$ in the dual game $\tilde{\Gamma}_\lambda(w)$ as
\begin{align}
 \tilde{J}^{\tau}(w)=\max_{p\in \Delta(K)}\max_{\sigma\in \Sigma}\tilde{\gamma}_\lambda(w,p,\sigma,\tau). \label{eq: tilde J tau}
\end{align}
Following the same steps as in the proof of (\ref{eq: V to V dag}-\ref{eq: V dag to V}) in \cite{de1996repeated,sorin2002first}, we can show that $J^\tau(p)$ and $\tilde{J}^\tau(w)$ have the following relations.
\begin{align}
\tilde{J}^\tau(w)=&\max_{p\in \Delta(K)} \{J^\tau(p)+p^Tw\}, \label{eq: to tilde J}\\
  J^\tau(p)=&\min_{w\in\mathbb{R}^{|K|}}\{\tilde{J}^\tau(w)-p^Tw\}. \label{eq: to J}
\end{align}

Meanwhile, we also notice that in dual game $\tilde{\Gamma}_\lambda(w)$, the security level $\tilde{J}^\tau$ of a stationary strategy $\tau$ that depends only on $w_t$ satisfies $\tilde{J}^\tau(w)=\tilde{\mathbf{F}}^{\tilde{J}^\tau}_{\tau(w)}(w)$.
\begin{lemma}
\label{lemma: tilde J recursive}
 Let $\tau\in \mathcal{T}$ be the uninformed player's stationary strategy that depends only on the anti-discounted regret $w_t$. The security level $\tilde{J}^\tau$ of $\tau$ in a $\lambda$-discounted asymmetric information repeated game $\tilde{\Gamma}_\lambda(w)$ satisfies $\tilde{J}^\tau(w)=\tilde{\mathbf{F}}_{\tau(w)}^{\tilde{J}^\tau}(w)$, where $\tilde{\mathbf{F}}_{\tau(w)}$ is defined in (\ref{eq: tilde f}).
\end{lemma}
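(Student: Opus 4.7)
The plan is to mimic the informed-player argument in Lemma \ref{lemma: stationary policy stationary dynamic equation}, but carried out in the dual game, where now the informed player is the maximizer over both $p\in\Delta(K)$ and $\sigma\in\Sigma$. Once $\tau$ is fixed, the dual game reduces to a decision problem for the informed player, so Bellman's principle should yield the claimed fixed-point equation.

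First, I would reduce the outer maximization over $p$. Since $\tilde{\gamma}_\lambda(w,p,\sigma,\tau)$ is linear in $p$, the maximum over the simplex $\Delta(K)$ is attained at a vertex, so that
\begin{align*}
\tilde{J}^\tau(w) = \max_{k\in K}\Bigl[w^k + G^\tau(k,w)\Bigr],
\end{align*}
where $G^\tau(k,w) := \max_{\bar{\sigma}(k)\in\bar{\Sigma}(k)} \expect{\bar{\sigma}(k),\tau}{\sum_{t=1}^\infty \lambda(1-\lambda)^{t-1} M^k_{a_t,b_t}\given k}$ under initial anti-discounted regret $w_1=w$. The analogue of Proposition \ref{prop:noP2also} in the dual game justifies restricting the informed player's best response to $\bar{\Sigma}(k)$, since $\tau$ is $H^B$-independent (it depends only on $w_t$, which is a function of $h_t^A$).

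Next, I would peel off the first stage and invoke the standard dynamic-programming argument. Fix $k$ and condition on a pure stage-$1$ action $a\in A$; a mixed $\sigma_1^{(k)}\in\Delta(A)$ gives no improvement, since the stage-$1$ objective is linear in it. The uninformed player plays $y=\tau(w)$, and by (\ref{eq: w}) the updated regret at stage $2$ is $w_2(a):=(w+\lambda M_a y)/(1-\lambda)$, independent of the realized $b_1$. By stationarity of $\tau$ in $w_t$, the continuation from stage $2$ is structurally a fresh $\lambda$-discounted dual game with initial regret $w_2(a)$ under the same $\tau$, weighted by the $(1-\lambda)$ factor that appears when the discount exponent is shifted by one. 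This gives the one-step recursion
\begin{align*}
G^\tau(k,w) = \max_{a\in A}\Bigl[\lambda M^k_{a,:}\,y + (1-\lambda)\,G^\tau(k,w_2(a))\Bigr].
\end{align*}

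Finally, I would assemble the pieces. Substituting into the expression for $\tilde{J}^\tau(w)$ and interchanging the two maxima yields
\begin{align*}
\tilde{J}^\tau(w) = \max_{a\in A}\max_{k\in K}\Bigl[w^k+\lambda M^k_{a,:}y + (1-\lambda)\,G^\tau(k,w_2(a))\Bigr].
\end{align*}
The crux is the bookkeeping identity $w^k + \lambda M^k_{a,:}y = (1-\lambda)\,w_2^k(a)$, which absorbs the stage-$1$ payoff together with the initial-regret entry into the updated regret. This collapses the bracket to $(1-\lambda)\bigl[w_2^k(a) + G^\tau(k,w_2(a))\bigr]$; pulling $(1-\lambda)$ outside and reinstating $\max_k$ inward produces $(1-\lambda)\,\tilde{J}^\tau(w_2(a))$, so that $\tilde{J}^\tau(w) = (1-\lambda)\max_{a\in A}\tilde{J}^\tau\bigl((w+\lambda M_a y)/(1-\lambda)\bigr) = \tilde{\mathbf{F}}^{\tilde{J}^\tau}_{\tau(w)}(w)$, as desired. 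I expect the main obstacle to be justifying cleanly that the post-stage-$1$ continuation depends on the history only through the scalar vector $w_2(a)$; the $H^B$-independent form of the regret update (\ref{eq: w}) and the stationarity of $\tau$ in $w_t$ are exactly what make this transparent.
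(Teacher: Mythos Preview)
Your proposal is correct and takes a genuinely different route from the paper. The paper's proof stays in the $(p,x)$ parametrization: it invokes Bellman's principle as a joint maximization over $p\in\Delta(K)$ and $x\in\Delta(A)^{|K|}$, writes the continuation via the Bayes update $\pi(p,x,a)$ and the primal security level $\max_\sigma\gamma_\lambda(\pi(p,x,a),\sigma,\tau)$, and then performs a splitting-type change of variables $(p,x)\mapsto(\bar{x},\{p^+_a\})$---using that the outer $\max_p$ frees $p^+_a$ to range over all of $\Delta(K)$---to recognize the inner maximum as $\tilde{J}^\tau$ at the updated regret. Your argument instead collapses $\max_{p\in\Delta(K)}$ to $\max_{k\in K}$ at the outset by linearity, runs a state-by-state one-step dynamic program for $G^\tau(k,w)$, and then reassembles using the algebraic identity $w^k+\lambda M^k_{a,:}y=(1-\lambda)w_2^k(a)$. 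Your route is more elementary in that it avoids the belief-update and splitting machinery entirely; the paper's route has the advantage of mirroring the primal-game analysis (Lemma~\ref{lemma: stationary policy stationary dynamic equation}) and the derivation of the recursion for $\tilde{V}_\lambda$, so the parallel structure is more visible. Both hinge on the same two facts you identify: the regret update (\ref{eq: w}) is $H^B$-independent, and $\tau$ is stationary in $w_t$, which together make the continuation depend on history only through $w_2(a)$.
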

\begin{proof}
According to Bellman's principle, we have
\begin{align*}
  \tilde{J}^\tau(w)=&\max_{p\in \Delta(k)} \max_{x\in \Delta(A)^{|K|}} \left(\sum_{a\in A,k\in K}p^kx^k(a)w^k\right.\\
  &+\sum_{a\in A, k\in K}\lambda p^kx^k(a)M^k_{a,:}\tau(w)\\
  &\left.+(1-\lambda)\sum_{a\in A}\bar{x}_{p,x}(a)\max_{\sigma\in \Sigma}\gamma_\lambda(\pi(p,x,a),\sigma,\tau)\right)\\
  =&\max_{p\in \Delta(k)}\max_{x\in \Delta(A)^{|K|}} (1-\lambda)\sum_{a\in A} \bar{x}_{p,x}(a) \\
  &\left(\frac{\sum_{k\in K} \pi(p,x,a) (w^k+\lambda M^k_{a,:}\tau(w)) }{1-\lambda} \right.\\
  &\left.+\max_{\sigma\in \Sigma}\gamma_\lambda(\pi(p,x,a),\sigma,\tau)\right)\\
  =&\max_{\bar{x}\in\Delta(A)}(1-\lambda)\sum_{a\in A} \bar{x}(a) \max_{p^+\in\Delta(K)}\max_{\sigma\in \Sigma} \\
  &\left(\frac{\sum_{k\in K} p^{+k} (w^k+\lambda M^k_{a,:}\tau(w)) }{1-\lambda}+\gamma_\lambda(p^+,\sigma,\tau)\right)\\
  =&\max_{a\in A} \tilde{J}^\tau\left(\frac{w+\lambda M_a\tau(w)}{1-\lambda}\right)=\tilde{\mathbf{F}}_{\tau(w)}^{\tilde{J}^\tau}(w)
\end{align*}
\end{proof}

Now, we are ready to analyze the performance difference between the approximated security strategy $\bar{\tau}_{\lambda,N}$ and the security strategy $\bar{\tau}^*$.
\begin{theorem}
  \label{theorem: error bound of player 2's suboptimal strategy}
Consider a two-player zero-sum $\lambda$-discounted asymmetric information repeated game $\Gamma_\lambda(p)$ and the uninformed player's approximated security strategy $\bar{\tau}_{\lambda,N}$ defined in (\ref{eq: uninformed   player's approximated security strategy}).
The security level $J^{\bar{\tau}_{\lambda,N}}(p)$ of $\bar{\tau}_{\lambda,N}$ in game $\Gamma_\lambda(p)$ satisfies
\begin{align}
\|J^{\bar{\tau}_{\lambda,N}}-V_\lambda\|_{\sup}\leq & \frac{2(1-\lambda)}{\lambda}\|V_\lambda-V_{\lambda,N}\|_{\sup}\label{eq: error bound of player 2 in discounted game}
\end{align}
\end{theorem}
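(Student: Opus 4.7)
The plan is to mirror the argument of Theorem \ref{theorem: error bound on RHP}, but carry it out in the dual game first and then transport the bound back to the primal game using the duality relations (\ref{eq: to tilde J}-\ref{eq: to J}). Concretely, I would establish the inequality in two stages: (1) bound $\|\tilde{V}_\lambda-\tilde{J}^{\bar{\tau}_{\lambda,N}}\|_{\sup}$ in the dual game using the contraction property of $\tilde{\mathbf{F}}_y$; (2) show that the primal-side error $\|V_\lambda-J^{\bar{\tau}_{\lambda,N}}\|_{\sup}$ is no larger than the dual-side error, and finally invoke the identity $\|\tilde{V}_\lambda-\tilde{V}_{\lambda,N}\|_{\sup}=\|V_\lambda-V_{\lambda,N}\|_{\sup}$ from (\ref{eq: game value difference meet}).

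For stage (1), note that by construction $\tilde{V}_{\lambda,N+1}(w)=\tilde{\mathbf{F}}_{\bar{\tau}_{\lambda,N}(w)}^{\tilde{V}_{\lambda,N}}(w)$ (this is exactly (\ref{eq: value iteration dual}) combined with the definition (\ref{eq: uninformed player's approximated security strategy})), while Lemma \ref{lemma: tilde J recursive} gives $\tilde{J}^{\bar{\tau}_{\lambda,N}}(w)=\tilde{\mathbf{F}}_{\bar{\tau}_{\lambda,N}(w)}^{\tilde{J}^{\bar{\tau}_{\lambda,N}}}(w)$ since $\bar{\tau}_{\lambda,N}$ is stationary in $w_t$. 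Then by the triangle inequality and Lemma \ref{lemma: contractor tilde f},
\begin{align*}
|\tilde V_\lambda(w)-\tilde J^{\bar\tau_{\lambda,N}}(w)|
&\le |\tilde V_\lambda(w)-\tilde V_{\lambda,N+1}(w)| \\
&\quad + |\tilde{\mathbf{F}}_{\bar\tau_{\lambda,N}(w)}^{\tilde V_{\lambda,N}}(w)-\tilde{\mathbf{F}}_{\bar\tau_{\lambda,N}(w)}^{\tilde J^{\bar\tau_{\lambda,N}}}(w)|.
\end{align*}
Taking sup norms, using Theorem \ref{theorem: convergence rate tilde V} on the first summand and Lemma \ref{lemma: contractor tilde f} plus one more triangle inequality on the second, yields
\begin{align*}
\|\tilde V_\lambda-\tilde J^{\bar\tau_{\lambda,N}}\|_{\sup}
&\le (1-\lambda)\|\tilde V_\lambda-\tilde V_{\lambda,N}\|_{\sup} \\
&\quad + (1-\lambda)\bigl(\|\tilde V_{\lambda,N}-\tilde V_\lambda\|_{\sup}+\|\tilde V_\lambda-\tilde J^{\bar\tau_{\lambda,N}}\|_{\sup}\bigr),
\end{align*}
which, after rearranging, gives $\|\tilde V_\lambda-\tilde J^{\bar\tau_{\lambda,N}}\|_{\sup}\le \tfrac{2(1-\lambda)}{\lambda}\|\tilde V_\lambda-\tilde V_{\lambda,N}\|_{\sup}$.

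For stage (2), I would use the same sup-of-min comparison argument that appeared in the proof of (\ref{eq: game value difference meet}). Writing $V_\lambda(p)=\min_w\{\tilde V_\lambda(w)-p^Tw\}$ and $J^{\bar\tau_{\lambda,N}}(p)=\min_w\{\tilde J^{\bar\tau_{\lambda,N}}(w)-p^Tw\}$, for each $p$ take the minimizer of whichever side is larger and bound the difference of the two minima by the value of $|\tilde V_\lambda-\tilde J^{\bar\tau_{\lambda,N}}|$ at that point; this yields $\|V_\lambda-J^{\bar\tau_{\lambda,N}}\|_{\sup}\le \|\tilde V_\lambda-\tilde J^{\bar\tau_{\lambda,N}}\|_{\sup}$. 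Chaining with stage (1) and (\ref{eq: game value difference meet}) gives the desired bound (\ref{eq: error bound of player 2 in discounted game}).

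The main obstacle is bookkeeping rather than insight: one must be sure that $\bar{\tau}_{\lambda,N}$, defined as a function of the anti-discounted regret $w_t$, really is stationary in the sense required by Lemma \ref{lemma: tilde J recursive}, and that the duality identities (\ref{eq: to tilde J}-\ref{eq: to J}) are invoked for a \emph{fixed} strategy $\tau=\bar{\tau}_{\lambda,N}$ rather than for the value function, so that the convexity/minimization structure needed for the stage-(2) comparison argument is intact. Once those technicalities are in place, the proof is a near-verbatim dual analogue of Theorem \ref{theorem: error bound on RHP}.
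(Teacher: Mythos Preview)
Your proposal is correct and follows essentially the same approach as the paper: bound $\|\tilde V_\lambda-\tilde J^{\bar\tau_{\lambda,N}}\|_{\sup}$ in the dual game via the contraction argument of Theorem~\ref{theorem: error bound on RHP} (using Lemma~\ref{lemma: contractor tilde f}, Lemma~\ref{lemma: tilde J recursive}, and Theorem~\ref{theorem: convergence rate tilde V}), transport this bound to the primal via the min-over-$w$ representations (\ref{eq: V dag to V}) and (\ref{eq: to J}), and finish with (\ref{eq: game value difference meet}). The only minor difference is that in your stage (2) the paper exploits the one-sided inequality $J^{\bar\tau_{\lambda,N}}(p)\ge V_\lambda(p)$, so it only needs the minimizer $w^*$ of $\tilde V_\lambda(w)-p^Tw$ rather than a case split on both minimizers.
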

\begin{proof}
According to equation (\ref{eq: to J}) and (\ref{eq: V dag to V}), we have $|J^{\bar{\tau}_{\lambda,N}}(p)-V_\lambda(p)|=|\min_{w\in \mathbb{R}^{|K|}} \{\tilde{J}^{\bar{\tau}_{\lambda,N}}(w)-p^Tw\}-\min_{w\in \mathbb{R}^{|K|}}\{\tilde{V}_\lambda(w)-p^Tw\}$. Let $w^*$ be the solution to the optimal problem $\min_{w\in \mathbb{R}^{|K|}}\{\tilde{V}_\lambda(w)-p^Tw\}$. Since $J^{\bar{\tau}_{\lambda,N}}(p)\geq V_\lambda(p)$, we have
\begin{align*}
&|J^{\bar{\tau}_{\lambda,N}}(p)-V_\lambda(p)|\leq |\tilde{J}^{\bar{\tau}_{\lambda,N}}(w^*)-\tilde{V}(w^*)|\\
\leq&\|\tilde{J}^{\bar{\tau}_{\lambda,N}}-\tilde{V}\|_{\sup}, \forall p\in\Delta(K).
\end{align*}

Following the same steps as in the proof of Theorem \ref{theorem: error bound on RHP}, we can show that $\|\tilde{J}^{\bar{\tau}_{\lambda,N}}-\tilde{V}\|_{\sup}\leq \frac{2(1-\lambda)}{\lambda}\|\tilde{V}_{\lambda,N}-\tilde{V}\|_{\sup}.$

Therefore, we have $\|J^{\bar{\tau}_{\lambda,N}}-V_\lambda\|_{\sup}\leq \frac{2(1-\lambda)}{\lambda}\|\tilde{V}_\lambda-\tilde{V}_{\lambda,N}\|_{\sup}.$ According to Equation (\ref{eq: game value difference meet}), equation (\ref{eq: error bound of player 2 in discounted game}) is proved.
\end{proof}

\section{Case Study: A Network Interdiction Problem}
This section uses game theoretic tools to study a network interdiction problem developed from \cite{zheng2012dynamic}, and provides security strategies and approximated security strategies for both players (attacker and network) in finite-horizon game and discounted game, respectively.

Consider a network with a source node and a sink node. There are two channels from the source node to the sink node. One of them has high capacity of $3$, and the other one has low capacity of 1. Only the network knows which channel has high capacity. The network needs to choose a channel to use at each stage to maximize the throughput over a certain horizon. Meanwhile, the attacker will either block one channel with cost 1 or observe the usage of channels with cost $0$ to minimize the throughput over the same horizon. Notice that the attacker can only detect whether a channel is in use, but cannot measure the capacity of a channel. Our objective is to design security or approximated security strategies for both players.

The network interdiction problem is modeled as an asymmetric repeated game with the network to be the informed player and the attacker to be the uninformed player. The network's action is to either use channel 1 (1) or use channel 2 (2), and the attacker's action is to observe (o), block channel 1 (1), or block channel 2 (2). The payoff matrices are provided as in Table \ref{table: payoff matrices}. The initial probability that channel 1 has high capacity is $0.5$.
\begin{table}
  \center
  \caption{Payoff matrix $M^k$ if channel $k$ has high capacity}\label{table: payoff matrices}
  \begin{tabular}{cccc}
    & 1 & 2 & o \\ \cline{2-4}
    1& \multicolumn{1}{|c}{1}& 4 &\multicolumn{1}{c|}{3}\\ \cline{2-4}
    2& \multicolumn{1}{|c}{2} & 1 &\multicolumn{1}{c|}{1} \\ \cline{2-4}
    & \multicolumn{3}{c}{$M^1$}
 \end{tabular}
  \begin{tabular}{cccc}
    & 1 & 2 & o \\ \cline{2-4}
    1& \multicolumn{1}{|c}{1}& 2 &\multicolumn{1}{c|}{1}\\ \cline{2-4}
    2& \multicolumn{1}{|c}{4} & 1 &\multicolumn{1}{c|}{3} \\ \cline{2-4}
    & \multicolumn{3}{c}{$M^2$}
 \end{tabular}
\end{table}

We first compute the security strategies and security levels for both the network and the attacker in a $3$-stage asymmetric game according to Theorem \ref{theorem: LP formula for informed one in T-stage game} and \ref{theorem: LP formulation of uninformed player in t stage games}, respectively. The linear program used to compute the network's security strategy has $65$ constraints and $35$ variables, while the attacker's LP formulation has $44$ constrains and $23$ variables. The security level of the network is $6.57$ which meets the security level of the attacker.

The security strategy of the network is given in Table \ref{table: network's strategy in 3 stage game}. Consider the case in which channel 1 has high capacity. At stage 1, the network uses the high capacity channel with probability $0.64$ instead of 1, because if the network reveals the high capacity channel at stage 1, the attacker will block the high capacity channel for the next two stages. At stage 2, if channel 1 was used at stage 1, then the network thinks that the attacker may guess that channel 1 has high capacity, and hence the network reduces its probability of using channel 1 to $0.56$. Otherwise, the probability of using channel 1 is increased to $0.8$. At the final stage, unless channel 1 is continuously used, the network will use high capacity channel for sure.

The security strategy of the attacker is shown in Table \ref{table: attacker's startegy in 3 stage game}. Notice that because the cost of blocking a channel is low compared with the gain of blocking the high capacity channel, the attacker prefers blocking channels to observing channels. Therefore, for many cases, the attacker launches attacks instead of observing channels unless he is almost sure which channel has high capacity. In this case, because the loss of blocking low capacity channel is higher than the loss of observing channels (see Table \ref{table: payoff matrices}), the attacker would prefer observing channels to blocking low capacity channel. At stage 1, since the initial probability over the states is $[0.5\ 0.5]$, the attacker will block either channel with equal probability. At stage 2, the attacker will increase the probability of blocking channel 1 by $0.04$ if channel 1 is used at stage 1. Otherwise, the probability of blocking channel 1 is decreased by $0.04$. At stage $3$, if one channel was used continuously, the attacker's realized loss in the case that this channel has high capacity is already high, so his strategy focuses more on minimizing the payoff if the continuously used channel has high capacity, as if he is playing only a single game.
\begin{table*}
\center
  \caption{Network's probability of using channel 1}\label{table: network's strategy in 3 stage game}
  \begin{tabular}{||c|c|cc|cccc||}
    \hline
    \backslashbox{channel with high capacity}{$H_t^A$}& $\emptyset$ & 1 & 2 & 11 &12 &21 &22 \\ \hline
    1 & 0.64 &0.56 & 0.8 & 0.4 & 1 & 1& 1 \\ \hline
    2& 0.35 & 0.20 & 0.44 & 0 & 0& 0& 0.6 \\ \hline
  \end{tabular}
  \caption{Attacker's behavior strategy}\label{table: attacker's startegy in 3 stage game}
  \begin{tabular}{||c|c|cc|cccc||}
    \hline
    \backslashbox{Attacker's action}{$H_t^A$}& $\emptyset$ & 1 & 2 & 11 &12 &21 &22 \\ \hline
    1 & 0.5 & 0.54 & 0.46 & 0.68 & 0.49 & 0.51 & 0.04 \\ \hline
    2& 0.5 & 0.46 & 0.54 & 0.04 & 0.51 & 0.49& 0.68 \\ \hline
    o& 0 & 0 & 0& 0.28 & 0 & 0 & 0.28 \\ \hline
  \end{tabular}
\end{table*}

The security strategies of both players are, then, used in the $3$-stage network interdiction game. We ran the $3$-stage game for $5000$ times, and the average total payoff of the network was $6.58$ which was approximately the game value $6.57$ computed according to Theorem \ref{theorem: LP formula for informed one in T-stage game} and \ref{theorem: LP formulation of uninformed player in t stage games}.

\begin{figure}
  \includegraphics[width=.5\textwidth]{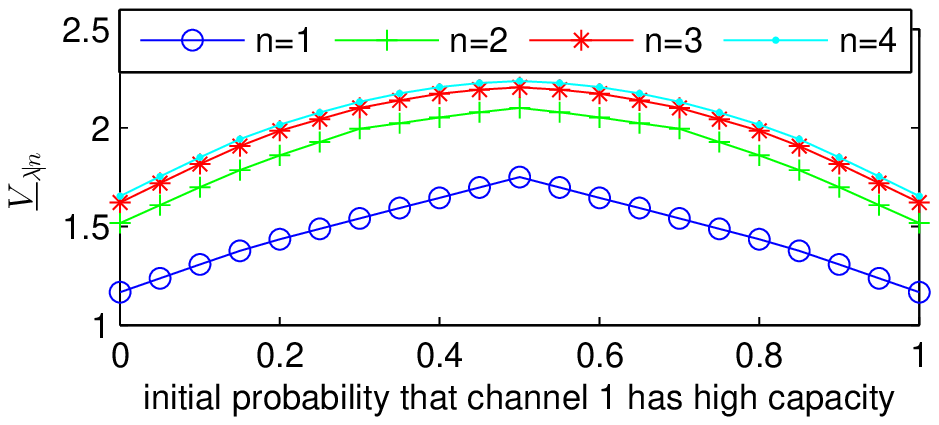}
  \includegraphics[width=.5\textwidth]{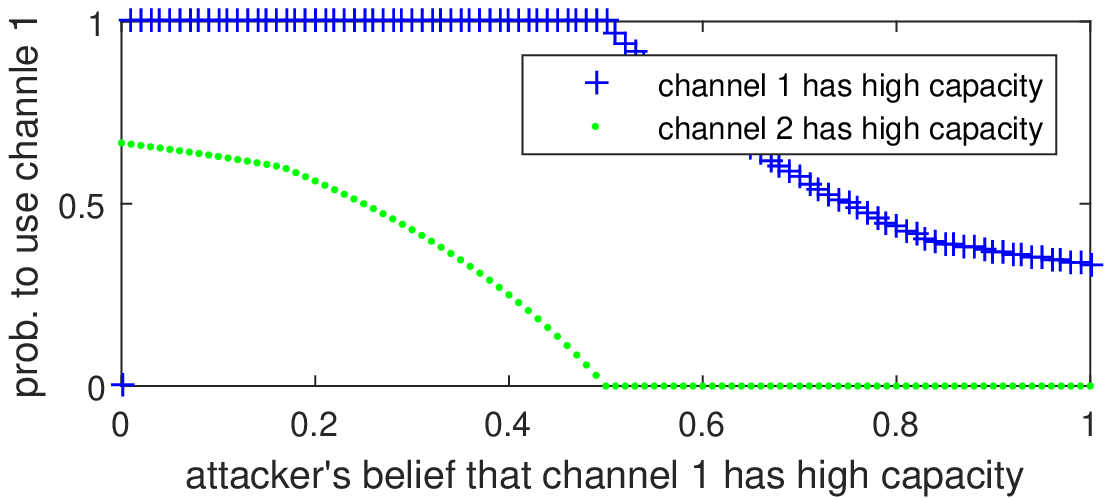}
  \caption{The network's approximated game value and suboptimal strategy in the discounted asymmetric information game}\label{fig: informed player in discounted game}
\end{figure}
Next, we compute the approximated security strategies for both players in a $0.7$-discounted asymmetric repeated game. According to Theorem \ref{theorem: LP of informed player in discounted game}, the network computes his approximated security strategy based on the approximated game value $V_{\lambda,4}$. The game values from the discounted $1$-stage game to the discounted $4$-stage game are presented in the left plot of Figure \ref{fig: informed player in discounted game}. We see that the approximated game value converges, and that the more unsure the attacker is about the high capacity channel, the higher throughput the network can get, and the highest approximated game value is $2.24$ when the initial probability is $[0.5\ 0.5]$. The approximated security strategy is given in the right plot of Figure \ref{fig: informed player in discounted game}. For both cases, the probability of using channel 1 is lower if the network thinks that the attacker has stronger belief that channel 1 has higher capacity. Meanwhile, compared to the case in which channel 2 has high capacity (green dots), it is more possible for the network to use channel 1 if channel 1 has high capacity (blue crosses).

\begin{figure}
  \includegraphics[width=.5\textwidth]{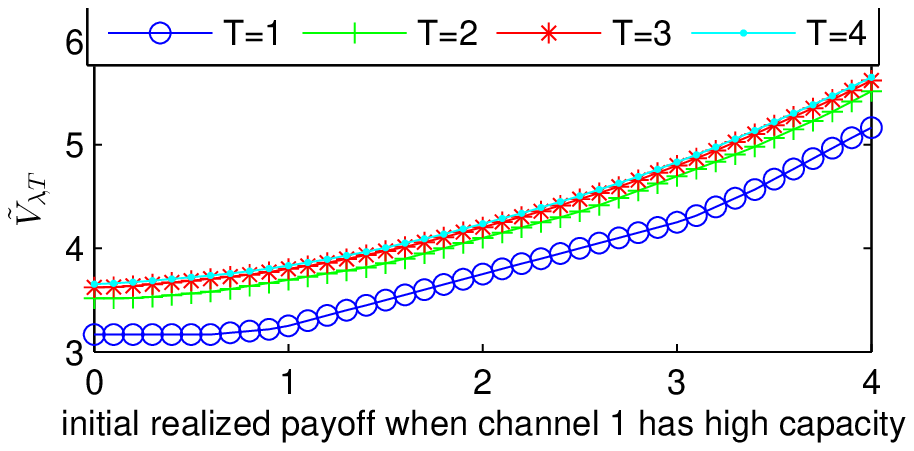}
  \includegraphics[width=.5\textwidth]{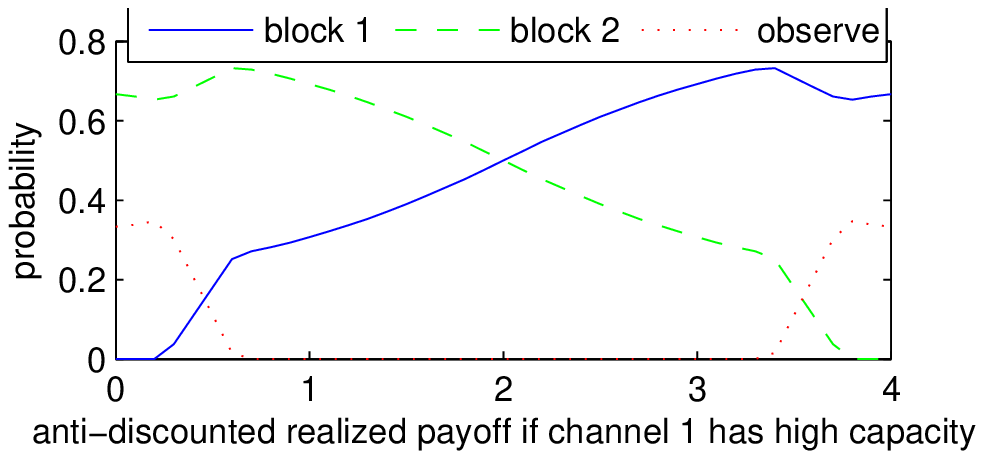}
  \caption{The attacker's approximated game value and suboptimal strategy in the discounted asymmetric information game at stage 2}\label{fig: uninformed player in discounted game 2}
\end{figure}
To compute the attacker's approximated security strategy, we first need to compute the approximated special initial regret $w^\star$ according to Theorem \ref{theorem: LP formulation to compute vector payoff}, which is $[-2.24;-2.24]$ for $N=4$. At each stage, the attacker computes his approximated security strategy based on the anti-discounted regret $w_t$ with $w_1=w^\star$. We assume that the anti-discounted regret $w_t^2$ if channel 2 has high capacity is $2$, and use the approximated value $\tilde{V}_{\lambda,4}(w_t)$ to compute the approximated security strategy. The approximated game values $\tilde{V}_{\lambda,N}$ where $N$ varies from $1$ to $4$ are presented in the left plot of Figure \ref{fig: uninformed player in discounted game 2}. We see that the approximated game value converges over $N$, and increases with respect to $w_t^1$. The attacker's approximated security strategy is shown in the right plot of Figure \ref{fig: uninformed player in discounted game 2}. When $w_t^1$ is relatively low compared with $w_t^2$, the attacker will block channel 2 with higher probability to balance the payoffs of both cases, as if he believes that it is more possible for channel 2 to have high capacity. Contrarily, when $w_t^1$ is larger than $w_t^2$, the attacker will block channel 1 with higher probability to balance the payoffs of both cases, as if he believes that it is more possible for channel 1 to have high capacity.

The approximated security strategies of both players are, then, used in a $0.7$-discounted network interdiction game. Before running the game, we first anticipate the payoff of the game. From equation (\ref{eq: value function converges informed}), we have that $\frac{\|\underline{V}_{\lambda|n}\|_{\sup}^{\Delta(K)}}{1+(1-\lambda)^N}\leq \|V_\lambda\|\leq \frac{\|\underline{V}_{\lambda|n}\|_{\sup}^{\Delta(K)}}{1-(1-\lambda)^N}$. Together with equation (\ref{eq: performance difference of informed player}) and (\ref{eq: value function converges informed}), the network can guarantee a payoff $|J^{\sigma_{\lambda,N}}(p)|\geq (1-\frac{2(1-\lambda)^{N+1}}{\lambda})\frac{\|\underline{V}_{\lambda,N}\|_{\sup}^{\Delta(K)}}{1+(1-\lambda)^N} =1.96$. Together with equation (\ref{eq: error bound of player 2 in discounted game}) and (\ref{eq: value function converges informed}), the attacker can guarantee a payoff $|J^{\tau_{\lambda,N}(p)}|\leq (1+\frac{2(1-\lambda)^{N+1}}{\lambda})\frac{\|\underline{V}_{\lambda|n}\|_{\sup}^{\Delta(K)}}{1-(1-\lambda)^N} =2.59$. Therefore, we anticipate that the payoff is in the interval $[1.96,2.59]$. When running the game, we stopped at stage $10$ since the sum of the payoff after stage $10$ is less than $10^{-4}$. The $10$-stage $0.7$-discounted game was ran for $100$ times, and the average payoff is $2.35$ which is within our anticipated interval, and demonstrates our main results.

\section{Future Work}
This paper studies asymmetric repeated games in which one player has superior information about the game over the other, and provides LP formulations to compute both player's security strategies in finite-horizon games and approximated security strategies in discounted games. In the future, we will generalize these results to the  case in which one player has superior knowledge of one part of the information, while the other player has superior knowledge of the other part.

\bibliography{}

\end{document}